\def\dOi{12(1:4)2016}
\keywords{Lambda-calculus, full abstraction, observational
  equivalence, domains}
\newcommand\mmodif[1]{\ifinalign@ #1 \else \ifmmode #1 \else $#1$\xspace \fi \fi} 
\newcommand\at{@} 
\newcommand\+{\mathrm{+}}                
\newcommand\dash{\mathrm{\text{-}}}
\newcommand\id{\mmodif{\mathop{id}}}
\newcommand\FV{\mmodif{\mathrm{FV}}}     
\newcommand\rta{\mmodif{\rightarrow}}
\newcommand\Rta{\mmodif{\Rightarrow}}
\newcommand\da{\mathrm{\downarrow}}
\newcommand\Da{\mathrm{\Downarrow^h}}
\newcommand\Ua{\mathrm{\Uparrow^h}}
\newcommand\cons{\mathrm{\rta}}
\newcommand\ruledarrow[3]{\stackrel{#1}{\rta}\!\!{}_{#2}^{#3}}
\newcommand\Lra{\Leftrightarrow}
\newcommand\llb{\llbracket}
\newcommand\rrb{\rrbracket}
\newcommand\llc{(\! |}
\newcommand\rrc{|\! )}
\newcommand\I{\mmodif{\boldsymbol{I}}}
\newcommand\J{\mmodif{\boldsymbol{J}}}
\newcommand\Id{\mmodif{\boldsymbol{I}}}
\newcommand\G{\mmodif{\boldsymbol{G}}}
\newcommand\Succ{\mmodif{\boldsymbol{S}}}
\newcommand\FixP{\mmodif{\boldsymbol{\Theta}}}
\newcommand\Om{\mmodif{\boldsymbol{\Omega}}}
\newcommand\Dinf{\mmodif{D_\infty}}
\newcommand\Pinf{\mmodif{P_{\!\infty}}}
\newcommand\Hst{\mmodif{\mathcal{H}^*}}
\newcommand\BT{\mmodif{\boldsymbol{BT}}}
\newcommand\BTf{\mmodif{\boldsymbol{BT_{\!\!f}}}}
\newcommand\BTqf{\mmodif{\boldsymbol{BT_{\!\!qf}}}}
\newcommand\BTomf{\mmodif{\boldsymbol{BT_{\!\!\Omega f}}}}
\newcommand\Nat{\mmodif{\mathbb{N}}}
\newcommand\Lamb{\mmodif{\Lambda}}
\newcommand\Achf[1]{\mmodif{\mathcal{A}_f(#1)}}     
\newcommand\Comp[1]{\mmodif{\bar #1}}               
\newcommand\Compl[1]{\mmodif{\overline{#1}}}        
\newcommand\Scottl{\textsc{ScottL}\xspace}  
\newcommand\Scottlb{{\textsc{ScottL}$_!$}\xspace} 
\newcommand\geetinf{\mmodif{\succeq_{\eta\infty}}}
\newcommand\leetinf{\mmodif{\preceq_{\eta\infty}}}
\newcommand\leet{\mmodif{\preceq_{\eta}}}
\newcommand\leob{\mmodif{\sqsubseteq_{\mathcal{H}^*}}}
\newcommand\equivob{\mmodif{\equiv_{\mathcal{H}^*}}}
\newcommand\Lcalcul{$\lambda$-calculus\xspace}
\newcommand\Lcalculs{$\lambda$-calculi\xspace}
\newcommand\Kweb{K-model\xspace}
\newcommand\Kwebs{K-models\xspace}
\newcommand\newsym[1]{\mmodif{#1}}
\newcommand\newsyminv[1]{}
\newcommand\newsyminvinv[2]{}
\newcommand\newsyminvsec[2]{}
\newcommand\newsymprem[2]{\mmodif{#1}}
\newcommand\newsymsec[2]{\mmodif{#2}}
\newcommand\newsymsecinv[3]{\mmodif{#2}}
\newcommand\newsyminvinvinv[3]{}
\newcommand\newdef[1]{{\em #1}}
\newcommand\newdefinv[1]{}
\newcommand\newdefinvinv[2]{}
\newcommand\newdefprem[2]{{\em #1}}
\newcommand\newdefpremsec[2]{{\em #1 #2}}
\newcommand\newdefsecprem[2]{{\em #2 #1}}
\newcommand\newdeftheprem[3]{{\em #3 #1}}
\newcommand\newdefthepremsec[3]{{\em #3 #1 #2}}
\newcommand\newdefinvinvinv[3]{}
\newenvironment{definition}{\begin{defi}}{\end{defi}}
\newenvironment{theorem}{\begin{thm}}{\end{thm}}
\newtheorem{lemma}[thm]{Lemma}
\newtheorem{example}[thm]{Example}
\newtheorem{remark}[thm]{Remark}
\newenvironment{corollary}{\begin{cor}}{\end{cor}}
\newenvironment{boite}{\vspace{-5pt}\begin{framed}\begin{center}\vspace{-10pt}}
                       {\end{center}\vspace{-10pt}\end{framed}\vspace{-5pt}}
\newenvironment{grammarLSB}{\begin{tabular}{l l r c l  r}}
                       {\end{tabular}}
\begin{document}

\title[On the characterization of models of \Hst]
      {On the characterization of models of \Hst: The semantical aspect}

\author[F.~Breuvart]{Flavien Breuvart}
 \address{PPS, UMR 7126, Univ Paris Diderot, LIPN, UMR 7030, Univ Paris Nord, Sorbonne Paris Cit\'e}
 \email{breuvart@pps.univ-paris-diderot.fr}
 \thanks{Partially founded by French ANR project COQUAS (number 12 JS02 006 01)}

\maketitle    

\begin{abstract}
We give a characterization, with respect to a large class of models of untyped $\lambda$-calculus, of those models that are fully abstract for head-normalization, {\em i.e.}, whose equational theory is $\mathcal{H}^*$ (observations for head normalization). An extensional K-model $D$ is fully abstract if and only if it is hyperimmune, {\em i.e.}, not well founded chains of elements of $D$ cannot be captured by any recursive function.

This article, together with its companion paper~\cite{Bre16SemPart} form the long version of \cite{Bre14}. It is a standalone paper that presents a purely semantical proof of the result as opposed to its companion paper that presents an independent and purely syntactical proof of the same result.
\end{abstract}

\section*{Introduction}
%

The histories of full abstraction and denotational semantics
of \Lcalculs are both rooted in four fundamental articles published in
the course of one year. 

In 1976, Hyland \cite{Hyl75} and Wadsworth \cite{Wad76}
independently\footnote{The idea already appears
in Wadsworth thesis 3 years earlier.} proved the first full abstraction
result of Scott's $\Dinf$ (reflexive Scott's domain) for \Hst (observations for head normalization). The following year,
Milner \cite{Mil77} and Plotkin \cite{Plo77} showed respectively 
that PCF (a Turing-complete extension of the simply typed \Lcalcul) has a
unique fully abstract model up to isomorphism and that this model is
not in the category of Scott domains and continuous functions.


Later, various articles focused on circumventing Plotkin
counter-example \cite{AJM94,HyOn00} or investigating full abstraction
results for other calculi \cite{AbMc97,lai97,Pao06}. However, hardly anyone
pointed out the fact that Milner's uniqueness theorem is specific to
PCF, while \Hst has various models that are fully abstract but not
isomorphic.

 

The quest for a general characterization of the fully abstract models
of head normalization started by successive refinements of a
sufficient, but not necessary condition \cite{DFH99,Gou95,Man09},
improving the proof techniques from
1976 \cite{Hyl75,Wad76}. 
While these results shed some light on various fully abstract
semantics for \Hst, none of them could reach a full characterization.


In this article, we give the first full characterization of the full
abstraction of an observational semantics for a specific (but large)
class of models. 
The class we choose is that of Krivine-models, \linebreak
or K-models \cite{Krivine,Ber00}. 
This class, described in Section~\ref{ssec:K-models}, is essentially
the subclass of Scott complete lattices (or filter
models \cite{CDHL84}) which are prime algebraic. We add two further
conditions: extensionality and approximability of
Definition~\ref{def:approximationProp}. 
Extensionality is a standard and perfectly understood notion that requires the model to respect the $\eta$-equivalence. Notice that it is a necessary condition for the full abstraction if \Hst. Approximability is another standard notion saying that the model reflects the fact that a term is approximated by its finite B\"ohm trees. This notion has been extensively studied \cite[Section~III.17.3]{BarDekSta}.

The extensional and
approximable \Kwebs are the objects of our characterization and can be
seen as a natural class of models obtained from models of linear
logic \cite{Gir87}. Indeed, the extensional \Kwebs correspond to the
extensional reflexive objects of the co-Kleisli
category associated with the exponential comonad of Ehrhard's \Scottl
category \cite{Ehr09} (Prop.~\ref{prop:reflOb}).


We achieve the characterization of full abstraction for \Hst in
Theorem~\ref{th:final}: a model $D$ is fully abstract for \Hst iff $D$
is {\em hyperimmune} (Def.~\ref{def:hyperim}). Hyperimmunity is the
key property our study introduces in denotational semantics. This
property is reminiscent of the Post's notion of hyperimmune sets in
recursion theory. Hyperimmunity in recursion theory is not only undecidable, but also
surprisingly high in the hierarchy of undecidable properties (it
cannot be decided by a machine with an oracle deciding the halting
problem) \cite{Nies}.

Roughly speaking, a model $D$ is hyperimmune whenever the
$\lambda$-terms can have access to only well-founded\footnote{well-foundedness
is considered with regard to a new order independent from the poset order of $D$.}
chains of elements of $D$. In other words, $D$ might have non-well-founded
\linebreak chains $d_0\ge d_1\ge\cdots$, but these chains ``grow'' so fast (for a
suitable notion of growth), that they cannot be contained in the
interpretation of any $\lambda$-term.

The intuition that full abstraction of \Hst is related to a kind of
well-foundedness can be found in the literature ({\em e.g.},
Hyland's \cite{Hyl75}, Gouy's \cite{Gou95} or
Manzonetto's \cite{Man09}). Our contribution is to give, with
hyperimmunity, a precise definition of this intuition, at least in the
setting of \Kwebs.

A finer intuition can be described in terms of game
semantics. Informally, a game semantics for the untyped \Lcalcul takes
place in the arena interpreting the recursive type $o = o\rta o$. This
arena is infinitely wide (by developing the left $o$) and infinitely
deep (by developing the right~$o$). Moves therein can thus be
characterized by their nature (question or answer) and by a word over
natural numbers. For example,~$q(2.3.1)$ represents a question in the
underlined~``$o$'' in~$o=o\cons (o\cons o\cons (\underline o \cons
o)\cons o)\cons o$. Plays in this game are potentially infinite
sequences of moves, where a question of the form $q(w)$ is followed by
any number of deeper questions/answers, before an answer $a(w)$ is
eventually provided, if any.

A play like $q(\epsilon),q(1)...a(1),q(2)...a(2),q(3)...$ is
admissible: one player keeps asking questions and is infinitely
delaying the answer to the initial question, but some answers are
given so that the stream is productive. However, the full abstraction
for \Hst forbids non-productive infinite questioning 
like in $q(\epsilon),q(1),q(1.1),q(1.1.1)...$, in general. 
Nevertheless, disallowing {\em all} such strategies is sufficient, but
not necessary to get full abstraction. The hyperimmunity condition is
finer: non productive infinite questioning is allowed {\em as long as}
the function that chooses the next question grows faster than any
recursive function (notice that in the example above that choice is
performed by the constant $(n\mapsto 1)$ function). For example, if
$(u_i)_{i\ge 0}$ grows faster than any recursive function, the play
$q(\epsilon),q(u_1),q(u_1.u_2),q(u_1.u_2.u_3)...$ is perfectly
allowed. 



Incidentally, we obtain a significant corollary (also expressed in
Theorem~\ref{th:final}) stating that full abstraction coincides with
inequational full abstraction for \Hst (equivalence between
observational and denotational orders). This is in contrast to what
happens to other calculi \cite{Sto90,EPT14}. 

In the literature, most of the proofs of full abstraction for \Hst are
based on Nakajima trees \cite{Nak75} or some other notion of quotient
of the space of B\"ohm trees, using the characterization of the
observational equivalence (see Proposition~\ref{prop:H*le}). The usual
approach is too coarse because it considers arbitrary B\"ohm trees
which are not necessarily images of actual $\lambda$-terms. To
overcome this we propose two different techniques leading to two
different proofs of the main result: one purely semantical and the
other purely syntactical. In this article we only present the former, 
the latter being the object of a companion paper \cite{Bre16SemPart}.

This proof
follows the line of historical ones while overcoming weaknesses of
Nakajima trees with a notion of quasi-approximation property
(Def.~\ref{def:quasi-approxTh}), that involves recursivity in a
refined way. Quasi-approximability is a key tool in the proof, which
is otherwise quite standard.  However, since B\"ohm trees are specific
to the \Lcalcul and head reduction, there is not much hope to extend
the proof to many other calculi/strategies (such as
differential \Lcalcul \cite{EhRe04}, or call-by-value strategies).

\section{Preliminaries and result}
    \label{sec:Theorem}

  \subsection{Preliminaries}
     \label{ssec:preliminaries}
\subsubsection{Preorders} \label{PrelA}\label{sec:posets} \quad \newline%
Given two partially ordered sets $D=(|D|,\le_D)$ and $E=(|E|,\le_E)$, we denote:
\begin{itemize}
  \item $D^{op}=(|D|,\ge_D)$ the reverse-ordered set.
  \item $D\times E=(|D|\times |E|,\le_{D\times E})$ the Cartesian product endowed with the pointwise order: 
    $$(\delta,\epsilon)\le_{D\times E}(\delta',\epsilon')\quad \text{if}\quad \delta\le_D\delta'\quad \text{and}\quad \epsilon\le_E\epsilon'.$$
  \item $\mathcal{A}_f(D)=(|\mathcal{A}_f(D)|,\le_{\mathcal{A}_f(D)})$ the set of finite antichains of $D$ ({\em i.e.}, finite subsets whose elements are pairwise incomparable) endowed with the order :
 $$a\le_{\mathcal{A}_f(D)} b\ \Lra\ \forall \alpha\in a,\exists\beta\in b, \alpha\le_D\beta$$
\end{itemize}\medskip

\noindent In the following will we use $D$ for $|D|$ when there is no ambiguity. Initial Greek letters $\alpha,\beta,\gamma...$ will vary on elements of ordered sets. Capital initial Latin letters $A,B,C...$ will vary over subsets of ordered sets. And finally, initial Latin letters $a,b,c...$ will denote finite antichains. 

An {\em order isomorphism} between $D$ and $E$ is a bijection $\phi:|D|\rta|E|$ such that $\phi$ and $\phi^{-1}$ are monotone. 

Given a subset $A\subseteq |D|$, we denote $\mathrm{\downarrow} A=\{\alpha\mid\exists\beta\in A,\alpha\mathrm{\le}\beta\}$. We denote by $\mathcal{I}(D)$ the set of {\em initial segments of $D$}, that is  $\mathcal{I}(D)=\{\mathrm{\downarrow}A\mid A\subseteq |D|\}$.
The set $\mathcal{I}(D)$ is a prime algebraic complete lattice with respect to the set-theoretical inclusion. The {\em sups} are given by the unions and the {\em prime elements} are the downward closure of the singletons. The {\em compact elements} are the downward closure of finite antichains. 

The domain of a partial function $f$ is denoted by $Dom(f)$. The {\em graph} of a Scott-continuous function $f:\mathcal{I}(D)\rta \mathcal{I}(E)$ is
\begin{equation} 
  \mathrm{graph}(f)=\{(a,\alpha)\in \mathcal{A}_f(D)^{op}\mathrm{\times}E\mid \alpha\in f(\mathrm{\downarrow} a)\} \label{eq:graphf}
\end{equation}
Notice that elements of $\mathcal{I}(\mathcal{A}_f(D)^{op}\mathrm{\times}E)$ are in one-to-one correspondence with the graphs of Scott-continuous functions from $\mathcal{I}(D)$ to $\mathcal{I}(E)$.

\subsubsection{\Lcalcul} \quad \newline%
The $\lambda$-terms are defined up to $\alpha$-equivalence by the following grammar using notation ``{\em \`a la} Barendregt'' \cite{Barendregt} (where variables are denoted by final Latin letters $x,y,z...$): 
\begin{center}
\begin{tabular}{c l c c l}
$\Lamb$ & ($\lambda$-terms) & $M,N$ & $::=$ & $x\quad |\quad \lambda x.M\quad |\quad M\ N$
\end{tabular}
\end{center}
We denote by $\FV(M)$ the set of free variables of a $\lambda$-term $M$. Moreover, we abbreviate a nested abstraction $\lambda x_1...x_k.M$ to $\lambda\vec x^{\, k\!} M$, or, when $k$ is irrelevant, to $\lambda\vec x M$. We denote by $M[N/x]$ the capture-free substitution of $x$ by $N$.

\noindent The $\lambda$-terms are subject to the $\beta$-reduction:
\[
 (\beta)\quad\quad (\lambda x.M)\ N\  \ruledarrow{\beta}{}{}\ M[N/x]
\]
A context $C$ is a $\lambda$-term with possibly some occurrences of a hole, {\em i.e.}:
\begin{center}
\begin{tabular}{c l c c l}
  $\Lamb^{\llc.\rrc}$ & (contexts) &  $C$ & $::=$ & $\llc.\rrc\quad |\quad x\quad |\quad \lambda x.C\quad |\quad C_1\ C_2$
\end{tabular}
\end{center}
The writing $C\llc M\rrc$ denotes the term obtained by filling the holes of $C$ by $M$. 
The small step reduction~$\rta$ is the closure of $(\beta)$ by any context, and $\rta_h$ is the closure of $(\beta)$ by the rules:
\begin{center}
  \AxiomC{$M\rta_h M'$}
  \UnaryInfC{$\lambda x.M\rta_h \lambda x.M'$}
  \DisplayProof\hskip 10pt
  \AxiomC{$M\rta_h M'$}
  \AxiomC{\hspace{-3pt}$M$ is an application}
  \BinaryInfC{$M\ N\rta_h M'\ N$}
  \DisplayProof
\end{center}
The transitive reduction $\rta^*$ (resp $\rta_h^*$) is the reflexive transitive closure of $\rta$ (resp $\rta_h$).\\ The big step head reduction, denoted $M\Da N$, is $M\rta_h^*N$ for $N$ in a {\em head-normal form},\linebreak {\em i.e.}, $N=\lambda x_1...x_m.y\ M_1\cdots M_n$, for $M_1,...,M_m$ any terms. We write $M\Da$ for the ({\em head}) {\em convergence}, {\em i.e.}, whenever there is $N$ such that $M\Da N$. 

\begin{example}
  \begin{itemize}
  \item The \newdef{identity term} $\newsym{\protect\I}:=\lambda x.x$ takes a term and returns it as it is:
    $$ \I\ M\quad \rta\quad M.$$
  \item The $n^{th}$ Church numeral, denoted by \newsym{\protect\underline{n}}, and the successor function, denoted by \newsym{\protect\boldsymbol{S}}, are defined by 
    \begin{align*}
      \underline{n} &:= \lambda fx.\underbrace{f\ (f \cdots\ f\ (f}_{n\ \text{times}}\ x)\cdots ), & \boldsymbol{S}&:= \lambda ufx. u\ f\ (f\ x).
    \end{align*}
    Together they provide a suitable encoding for natural numbers, with $\underline n$ representing the $n^{th}$ iteration.
  \item The \newdef{looping term} $\newsym{\protect\Om}:= (\lambda x.xx)\ (\lambda x.xx)$ infinitely reduces into itself, notice that $\Om$ is an example of a diverging term:
    $$ \Om \quad \rta \quad (x\ x)[\lambda y.y\ y/x]\quad = \quad \Om \quad \rta \quad \Om \quad \rta \quad \cdots.$$
  \item The \newdefsecprem{combinator}{Turing fixpoint} $\newsym{\protect\FixP} := (\lambda uv.v\ (u\ u\ v))\ (\lambda uv.v\ (u\ u\ v))$ is a term that computes the least fixpoint of its argument (if it exists):
    \begin{align*}
      \FixP\ M &\rta (\lambda v.v\ ((\lambda uv.v\ (u\ u\ v))\ (\lambda uv.v\ (u\ u\ v)) v))\ M \\
      &= (\lambda v.v\ (\FixP\ v))\ M \\
      &\rta M\ (\FixP\ M).
    \end{align*}
  \end{itemize}
\end{example}

\noindent Other notions of convergence exist (strong, lazy, call by value...), but our study focuses on head convergence, inducing the equational theory denoted by \Hst. 

\begin{definition}\label{observationalPreorder}  
  The {\em observational preorder} and {\em equivalence} denoted $\leob$ and $\equivob$ are given by:
  \begin{align*}
    M &\leob N  & \text{if}\quad & & \forall C ,\ C\llc M\rrc\Da\ \Rta\ C\llc N\rrc\Da,\\
    M&\equivob N & \text{if}\quad & & M\leob N \text{ and } N\leob M.
  \end{align*}
  The resulting (in)equational theory is called \Hst.
\end{definition}

\begin{definition}
  A model of the untyped \Lcalcul with an interpretation $\llb-\rrb$ is:
  \begin{itemize}
  \item fully abstract (for \Hst) if for all $M, N\in\Lamb$:
    $$M \equivob N \quad \text{if} \quad \llb M\rrb = \llb N\rrb,$$
  \item inequationally fully abstract (for \Hst) if for all $M, N\in\Lamb$:\footnote{This can be generalised by replacing $\subseteq$ by any order on the model.}
    $$M \leob N \quad \text{if} \quad \llb M\rrb \subseteq \llb N\rrb.$$
  \end{itemize}
\end{definition}

\noindent Henceforth, convergence of a $\lambda$-term means head convergence, and full abstraction for \Lcalcul means full abstraction for \Hst.

Concerning recursive properties of \Lcalcul, we will use the following one:
  \begin{prop}[{\cite[Proposition~8.2.2]{Barendregt} \footnotemark}]\label{prop:autorecursivity}
    \quad\\
    Let $(M_n)_{n\in \Nat}$ be a sequence of terms such that:
    \begin{itemize}
    \item $\forall n\in \Nat, M_n\in \Lamb^0$,
    \item the encoding of $(n\mapsto M_n)$ is recursive,
    \end{itemize}
    then there exists $F$ such that:
    $$\forall n, F\ \underline{n}\ \rta^* M_n.$$
  \end{prop}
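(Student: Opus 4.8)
The plan is to reduce the statement to two standard facts from the theory of Gödel numberings and $\lambda$-definability (see \cite[Ch.~8]{Barendregt}): that every total recursive function is computed, up to $\rta^*$, by a closed $\lambda$-term acting on Church numerals, and that the \Lcalcul admits a self-interpreter. First I would fix an effective Gödel numbering $M\mapsto\ulcorner M\urcorner\in\Nat$ of $\lambda$-terms, for which all the elementary syntactic operations (in particular the reconstruction of a term from the codes of its immediate subterms) are recursive. The hypothesis that $(n\mapsto M_n)$ is recursive then says precisely that the numeric function $h:n\mapsto\ulcorner M_n\urcorner$ is total recursive. Since every total recursive function is $\lambda$-definable with a reduction landing exactly on the target Church numeral (the latter being a normal form), there is a closed term $G$ with $G\ \underline{n}\ \rta^*\ \underline{\ulcorner M_n\urcorner}$ for every $n$.

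Second, I would invoke the existence of an \emph{enumerator} (self-interpreter) $E\in\Lamb^0$ such that $E\ \underline{\ulcorner M\urcorner}\ \rta^* M$ for every closed term $M$: one decodes the Gödel number and rebuilds the term bottom-up, using a fixpoint combinator such as $\FixP$ to drive the recursion on the structure of the code and Church-numeral arithmetic to dispatch on the three term constructors. Setting $F:=\lambda x.\ E\ (G\ x)$, which is closed, one computes for each $n$
\[
 F\ \underline{n}\ \rta\ E\ (G\ \underline{n})\ \rta^*\ E\ \underline{\ulcorner M_n\urcorner}\ \rta^* M_n,
\]
the middle step because $\rta^*$ is closed under contexts. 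This is the required $F$.

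The routine parts are the arithmetic on codes and the $\lambda$-definability of $h$. The main obstacle is the construction of $E$ realising a genuine \emph{reduction} $E\ \underline{\ulcorner M\urcorner}\rta^* M$ rather than a mere $\beta$-conversion: one must ensure that the decoding and reassembly phase only ever fires redexes in the direction that builds up $M$, which is exactly where the restriction to closed $M$ is convenient, since then no free-variable bookkeeping can leak into the output. Once $E$ is available as an actual reduction, chaining the three reductions above is immediate, as reduction is context-closed and hence composes through the argument position of $E$.
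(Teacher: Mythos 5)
Your proposal is correct and follows essentially the same route as the paper, which does not spell out a proof but defers to Barendregt's Proposition~8.2.2: the standard argument there is exactly your composition of a $\lambda$-term defining the recursive function $n\mapsto\ulcorner M_n\urcorner$ on Church numerals with the enumerator $E$ satisfying $E\ \underline{\ulcorner M\urcorner}\rta^* M$ for closed $M$. You also correctly isolate the one point the paper flags in its footnote, namely that the construction yields a genuine reduction $F\ \underline{n}\rta^* M_n$ rather than mere $\beta$-convertibility.
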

  \footnotetext{This is not the original statement. We remove the dependence on $\vec x$ that is empty in our case and we replace the $\beta$-equivalence by a reduction since the proof of Barendregt \cite{Barendregt} works as well with this refinement.}

\subsection{K-models} \quad \newline\label{ssec:K-models}%
%
%
We introduce here the main semantical object of this article: extensional \Kwebs \cite{Krivine}\cite{Ber00}. This class of models of the untyped \Lcalcul is a subclass of filter models \todo{define filter models?}\cite{CDHL84} containing many extensional models from the continuous semantics, like Scott's $\Dinf$ \cite{Scott}.

\subsubsection{The category \protect\Scottlb} \quad \newline%
Extensional \Kwebs 
correspond to the extensional reflexive Scott domains that are prime algebraic complete lattices and whose application embeds prime elements into prime elements \cite{Hut93,Win98}. However we prefer to exhibit \Kwebs as the extensional reflexive objects of the category \Scottlb which is itself the Kleisli category over the linear category \Scottl \cite{Ehr09}.

\begin{definition}
We define the Cartesian closed category \newdef{\protect\Scottlb} \cite{Hut93,Win98,Ehr09}:
\begin{itemize}
\item {\em objects} are partially ordered sets.
\item {\em morphisms} from $D$ to $E$ are Scott-continuous functions between the complete lattices $\mathcal{I}(D)$ and $\mathcal{I}(E)$.
\end{itemize}
The {\em Cartesian product} is the disjoint sum of posets. The {\em terminal object} $\top$ is the empty poset. The {\em exponential object} $D\mathrm{\Rta} E$ is $\Achf{D}^{op}\mathrm{\times} E$. Notice that an element of $\mathcal{I}(D\mathrm{\Rta} E)$ is the graph of a morphism from $D$ to $E$ (see Equation \eqref{eq:graphf}). This construction provides a natural isomorphism between $\mathcal{I}(D\mathrm{\Rta} E)$ and the corresponding homset. Notice that if $\simeq$ denotes isomorphisms in \Scottlb, then:
\begin{equation}
  D\Rta D\Rta \cdots \Rta D\simeq (\Achf{D}^{op})^n\times D. \label{eq:defRta}
\end{equation}
For example $D\Rta(D\Rta D)\simeq \Achf{D}^{op}\times(\Achf{D}^{op}\times D)= (\Achf{D}^{op})^2\times D$.
\end{definition}


\begin{remark}\label{rk:equivAchSubset}
  In the literature ({\em e.g.} \cite{Hut93,Win98,Ehr09}), objects are preodered sets and the exponential object $D\Rta D$ is defined by using finite subsets (or multisets) instead of the finite antichains. Our presentation is the quotient of the usual one by the equivalence relation induced by the preorder. The two presentations are equivalent (in terms of equivalence of category) but our choice simplifies the definition of hyperimmunity (Definition~\ref{def:hyperim}).
\end{remark}

\begin{prop} 
  The category \Scottlb is isomorphic to the category of prime algebraic complete lattices and Scott-continuous maps.
\end{prop}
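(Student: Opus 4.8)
The plan is to establish the isomorphism of categories by exhibiting the functor $\mathcal{I}(-)$ sending a poset $D$ (an object of \Scottlb) to the complete lattice $\mathcal{I}(D)$ of its initial segments, and then showing this is essentially surjective, full, and faithful. The excerpt already records that $\mathcal{I}(D)$ is a prime algebraic complete lattice, so objects land in the right category; conversely, I would need a canonical way to recover a poset from an arbitrary prime algebraic complete lattice. The natural candidate is to send such a lattice $L$ to the sub-poset of its prime elements, ordered by the ambient order of $L$. The crux of essential surjectivity is then the classical representation theorem: every prime algebraic complete lattice is isomorphic to the lattice of initial segments (equivalently, the downward-closed sets) of its poset of primes. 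This is exactly the content hiding behind the phrase ``prime algebraic,'' so I would verify that every element of $L$ is the sup of the primes below it (prime algebraicity) and that this sup decomposition reconstructs $L$ as $\mathcal{I}(\mathrm{Primes}(L))$.

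Next I would treat the morphisms. On objects of \Scottlb a morphism $D\to E$ is by definition a Scott-continuous map $\mathcal{I}(D)\to\mathcal{I}(E)$, and in the target category a morphism $L\to L'$ is a Scott-continuous map of complete lattices. Under the identification $L\simeq\mathcal{I}(\mathrm{Primes}(L))$ these two notions of morphism coincide literally, so fullness and faithfulness on homsets are almost immediate: the functor acts as the identity on the underlying continuous functions once the lattices are identified with their initial-segment presentations. I would make sure composition and identities are preserved, which is routine since composition in both categories is ordinary function composition.

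The main obstacle is the representation theorem for prime algebraic complete lattices, i.e. showing essential surjectivity cleanly. I must check that the primes of $L$, ordered by restriction, form a poset whose initial segments reproduce $L$, and in particular that the sup map $\mathcal{I}(\mathrm{Primes}(L))\to L$ sending a down-set of primes to its join is an order isomorphism with inverse $x\mapsto\{p\text{ prime}\mid p\le x\}$. Prime algebraicity gives surjectivity of the join map and that the two maps are mutually inverse; I would need to confirm the join map is well defined (every down-closed set of primes has a sup, guaranteed by completeness) and order-reflecting. One delicate point is matching the abstract notion of prime element used in the target category with the concrete primes of $\mathcal{I}(D)$ identified in the excerpt as downward closures of singletons; I would verify that under the isomorphism $\mathcal{I}(D)\simeq L$ the primes correspond exactly, so that $\mathrm{Primes}(\mathcal{I}(D))$ is order-isomorphic to $D$ itself, closing the loop for both directions of essential surjectivity. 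The rest of the argument, once this representation is in hand, is bookkeeping about functoriality.
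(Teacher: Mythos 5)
Your proposal is correct and follows essentially the same route as the paper's own (very terse) proof: the functor $\mathcal{I}(-)$ in one direction, the poset of prime elements in the other, with the representation of a prime algebraic complete lattice as the initial segments of its primes doing the real work, and the observation that morphisms coincide literally under this identification. You merely spell out in more detail the bookkeeping that the paper compresses into ``the two operations are inverse one to the other modulo isomorphisms.''
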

\begin{proof}
  Given a poset $D$, the initial segments $\mathcal{I}(D)$ form a prime algebraic complete lattice which prime elements are the downward closures ${\downarrow}\alpha$ of any $\alpha\in D$ since $I=\bigcup_{\alpha\in I}{\downarrow}\alpha$. Conversely, the prime elements of a prime algebraic complete lattice form a poset. The two operations are inverse one to the other modulo \Scottlb-isomorphisms and Scott-continuous isomorphisms.
\end{proof}

\subsubsection{An algebraic presentation of K-models}

\begin{definition}[\cite{Krivine}]\label{def:K-model}
  An \newdef{extensional K-model} is a pair $(D,i_D)$ where:
  \begin{itemize}
  \item $D$ is a poset.
  \item \newsym{\protect i_D} is an order isomorphism between $D\mathrm{\Rta} D$ and $D$.
  \end{itemize}
\end{definition}

By abuse of notation we may denote the pair $(D,i_D)$ simply by $D$ when it is clear from the context we are referring to an extensional \Kweb. 



\begin{definition}
  Given a Cartesian closed category $\mathcal C$, an extensional reflexive objects of $\mathcal C$ is an objects $D$ endowed with an isomorphism $abs_D:(D\Rta D)\rta D$  (and $app_D:=abs_D^{-1}$). This corresponds to the categorical axiomatisation of extensional models of the untyped \Lcalcul.
\end{definition}

\begin{prop} \label{prop:reflOb}
  Extensional \Kwebs correspond exactly to extensional reflexive objects of \Scottlb.
\end{prop}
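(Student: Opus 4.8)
The plan is to leverage the isomorphism of categories established just above—between \Scottlb and the category of prime algebraic complete lattices with Scott-continuous maps—together with the elementary fact that an order isomorphism of posets carries exactly the same information as a lattice isomorphism of their lattices of initial segments.

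First I would unfold the two notions being compared. An extensional \Kweb is a poset $D$ equipped with an order isomorphism $i_D$ between the posets $D\Rta D=\Achf{D}^{op}\times D$ and $D$. An extensional reflexive object of \Scottlb is the same poset $D$ equipped with an isomorphism $abs_D$ in \Scottlb between the objects $D\Rta D$ and $D$; by definition of the morphisms of \Scottlb, such an $abs_D$ is a Scott-continuous bijection $\mathcal{I}(D\Rta D)\rta\mathcal{I}(D)$ whose inverse $app_D$ is again Scott-continuous, that is, a lattice isomorphism between these two prime algebraic complete lattices. Since the underlying objects are literally the same posets in both cases, it suffices to exhibit a bijection between order isomorphisms $D\Rta D\rta D$ and lattice isomorphisms $\mathcal{I}(D\Rta D)\rta\mathcal{I}(D)$, and to check that it matches $(D,i_D)$ with $(D,abs_D)$.

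The correspondence goes as follows. Given an order isomorphism $i_D$, I would lift it to the lattice isomorphism $\mathcal{I}(i_D):I\mapsto\da\, i_D(I)$, which is Scott-continuous with Scott-continuous inverse $\mathcal{I}(i_D^{-1})$, because $\mathcal{I}(-)$ is the object part of the isomorphism of categories recalled above and every initial segment is the directed union of the compact ones it contains. Conversely, given a lattice isomorphism $abs_D$, I would restrict it to prime elements: being a prime element is an order-theoretic property of a lattice, hence preserved in both directions by any lattice isomorphism, so $abs_D$ carries primes to primes bijectively and monotonically. Recalling that the prime elements of $\mathcal{I}(D)$ are exactly the downward closures $\da\alpha$ of the $\alpha\in D$, and that $\alpha\mapsto\da\alpha$ is an order isomorphism of $D$ onto this poset of primes (likewise for $D\Rta D$), the restriction of $abs_D$ to primes reads back as an order isomorphism $i_D:D\Rta D\rta D$.

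The only point requiring care—and the main obstacle—is verifying that these two operations are mutually inverse. Lifting $i_D$ and restricting to primes returns $i_D$, since $\mathcal{I}(i_D)$ sends $\da\alpha$ to $\da\, i_D(\alpha)$; and restricting $abs_D$ to primes and then lifting returns $abs_D$, because a Scott-continuous map is determined by its values on the prime (hence compact) generators and the two maps agree there. This bijection identifies the extensional \Kweb $(D,i_D)$ with the extensional reflexive object $(D,\mathcal{I}(i_D))$ and conversely, which is exactly the claimed correspondence.
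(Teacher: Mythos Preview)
Your argument is correct and follows essentially the same route as the paper. Both directions hinge on the same observation: a \Scottlb-isomorphism, being a monotone bijection with monotone inverse, is an order isomorphism of the lattices $\mathcal{I}(D\Rta D)$ and $\mathcal{I}(D)$, hence sends prime elements to prime elements; reading this off on the generators $\da\alpha$ recovers the order isomorphism $i_D$, and conversely pushing $i_D$ forward to initial segments gives the \Scottlb-isomorphism. The paper carries out the prime-preservation step by an explicit computation---writing $\da(a,\alpha)=\bigcup_{\beta}abs(\da\beta)$ and extracting the unique $\beta$---whereas you invoke the general principle that primeness is order-theoretic and therefore preserved by any lattice isomorphism; you are also a bit more explicit than the paper about why the two passages are mutually inverse. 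These are presentational differences only.
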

\begin{proof}
  Given a \Kweb $(D,i_D)$, the isomorphism between $D\mathrm{\Rta}D$ and $D$ is given by:
  \begin{align*}
    \forall A &\in \mathcal{I}(D\mathrm{\Rta}D),  &\mathrm{app}_D(A) &= \{i_D(a,\alpha)\mid (a,\alpha)\in A\},\\
    \forall B &\in \mathcal{I}(D), &\mathrm{abs}_D(B) &= \{(a,\alpha)\mid i_D(a,\alpha)\in B\}.
  \end{align*}
  Conversely, consider an extensional reflexive object $(D,app_D,abs_D)$ of \Scottlb. Since $abs_D$ is an isomorphism, it is linear (that is, it preserves all sups). For all $(a,\alpha)\in D\mathrm{\Rta}D$, we have 
  $$\da(a,\alpha)=abs(app(\da(a,\alpha)))=\bigcup_{\beta\in app(\da(a,\alpha))}abs(\da\beta).$$
  Thus there is $\beta\in app(\da(a,\alpha)$ such that $(a,\alpha)\in abs(\da \beta)$, and since $abs(\da \beta)\subseteq\da(a,\alpha)$, this is an equality. Thus there is a unique $\beta$ such that $app_D(a,\alpha)=\da\beta$, this is $i_D(a,\alpha)$.
\end{proof}

In the following we will not distinguish between a \Kweb and its associated reflexive object, this is a model of the pure $\lambda$-calculus.


\begin{definition}\label{def:partialKweb}
An \newdef{extensional partial K-model} is a pair $(E,j_E)$ where $E$ is an object of \Scottlb and $j_E$ is a partial function from $E\mathrm{\Rta}E$ to $E$ that is an order isomorphism between $\mathrm{Dom}(j_E)$ and $E$.
$$ E\quad \stackrel{j_E}{\longleftrightarrow}\quad \mathrm{Dom}(j_E)\quad \subseteq\quad (E\Rta E) $$
\end{definition}

\begin{definition}\label{def:Compl}
  The \newdefpremsec{completion}{of a partial K-model} $(E,j_E)$ is the union 
  $$(\Comp{E},j_{\Comp{E}})=(\bigcup_{n\in\mathbb{N}}E_n,\bigcup_{n\in\mathbb{N}}j_{E_n})$$
  of partial completions $(E_n,j_{E_n})$ that are extensional partial K-models defined by induction on $n$.\linebreak
  We define $(E_0,j_{E_0}):=(E,j_E)$ and:
  \begin{itemize}
  \item $|E_{n+1}|\ :=\ |E_n|\cup (|E_n\Rta E_n|-Dom(j_{E_n}))$
  \item $j_{E_{n+1}}$ is defined only over $|E_n\Rta E_n|\subseteq |E_{n+1}\Rta E_{n+1}|$ by $j_{E_{n+1}} := j_{E_n}\cup id_{|E_n\Rta E_n|-Dom(j_{E_n})}$
  \item $\le_{E_{n+1}}$ is given by $j_{E_{n+1}}(a,\alpha)\le_{E_{n+1}}(b,\beta)$ if $a\ge_{\Achf{E_n}} b$ and $\alpha\le_{E_n}\beta$.
  \end{itemize}
\end{definition}
Remark that $E_{n+1}$ corresponds to $E_n\Rta E_n$ up to isomorphism, what leads to the equivalent definition:

\begin{prop}
  The completion $(\Comp{E},j_{\Comp{E}})$ of an extensional partial \Kweb $(E,j_E)$ can be described as the categorical $\omega$-colimit (in \Scottl) of $(E'_n)_n$ along the injections $(j_n^{-1})_n$. The posets $(E'_n)_n$ and the partial functions $(j_n)_n$ are defined by induction by $(E'_0,j_0):=(E,j_E)$, and for $n\ge 0$, by $E_{n+1}':=E_n'\Rta E_n'$ and for all $a\subseteq dom(j_n)$ and $\alpha\in j_n$,  $j_{n+1}(a,\alpha):=(j_{n}(a),j_{n}(\alpha))$.

  \begin{tikzpicture}[description/.style={fill=white,inner sep=2pt},ampersand replacement=\&]
    \matrix (m) [matrix of math nodes, row sep=2em, column sep=2em, text height=1.5ex, text depth=0.25ex]
    { \& \& \overline{E} \\
      E \& E_1 \& E_2 \& \cdots \& E_n \& \cdots \\
    };
    \path[->] 
      (m-2-1) edge[] node[below] {$j_E^{-1}$} (m-2-2)
      (m-2-2) edge[] node[below] {$j_{1}^{-1}$} (m-2-3)
      (m-2-3) edge[] node[below] {$j_{2}^{-1}$} (m-2-4)
      (m-2-4) edge[] node[below] {$j_{n-1}^{-1}$} (m-2-5)
      (m-2-5) edge[] node[below] {$j_{n}^{-1}$} (m-2-6)
      (m-2-1) edge[] node[auto]  {} (m-1-3)
      (m-2-2) edge[] node[auto]  {} (m-1-3)
      (m-2-3) edge[] node[right] {} (m-1-3)
      (m-2-5) edge[] node[right] {} (m-1-3);
  \end{tikzpicture}
  
\end{prop}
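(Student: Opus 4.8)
The plan is to realize both the inductive construction of Definition~\ref{def:Compl} and the colimit construction as directed unions of isomorphic chains, and to conclude by uniqueness of colimits. The completion $\overline{E}=\bigcup_n E_n$ is, by construction, the directed union of the posets $E_n$ along the inclusions $\iota_n:E_n\hookrightarrow E_{n+1}$; each $\iota_n$ comes with a retraction (the partial inverse of the isomorphism $j_{E_{n+1}}:E_n\Rta E_n\to E_{n+1}$ of the remark immediately preceding the statement), and likewise each embedding $j_n^{-1}$ of the colimit diagram comes with the retraction $j_n$. Hence both chains consist of embedding--projection pairs, and by the limit--colimit coincidence theorem their \Scottl $\omega$-colimits exist and are computed as the respective directed unions. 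It therefore suffices to produce an isomorphism of diagrams $(E'_n,j_n^{-1})_n\simeq(E_n,\iota_n)_n$: uniqueness of colimits then gives $\mathrm{colim}(E'_n,j_n^{-1})\simeq\bigcup_n E_n=\overline{E}$.

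I would construct isomorphisms $\psi_n:E'_n\to E_n$ by induction on $n$, setting $\psi_0:=\id_E$. The step uses two ingredients. First, the operation $D\mapsto(D\Rta D)=\Achf{D}^{op}\times D$ preserves order isomorphisms: an isomorphism $\phi:D\to D'$ lifts to an isomorphism $\phi\Rta\phi$ sending $(a,\alpha)$ to $(\phi(a),\phi(\alpha))$, where $\phi(a)=\{\phi(x)\mid x\in a\}$ is again a finite antichain. Second, the remark preceding the statement provides the isomorphism $j_{E_{n+1}}:E_n\Rta E_n\to E_{n+1}$ (the bijection that is $j_{E_n}$ on $\mathrm{Dom}(j_{E_n})$ and the identity on the freshly added points, made into an order isomorphism by the very definition of $\le_{E_{n+1}}$). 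I then set $\psi_{n+1}:=j_{E_{n+1}}\circ(\psi_n\Rta\psi_n):E'_{n+1}=E'_n\Rta E'_n\to E_{n+1}$, an isomorphism as a composite of isomorphisms.

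The core of the argument is the compatibility square $\psi_{n+1}\circ j_n^{-1}=\iota_n\circ\psi_n$. Unwinding the definition of $j_n$ from the statement gives, for $n\ge1$ and $(b,\beta)\in E'_n=E'_{n-1}\Rta E'_{n-1}$, the identity $j_n^{-1}(b,\beta)=(j_{n-1}^{-1}(b),j_{n-1}^{-1}(\beta))$; unwinding Definition~\ref{def:Compl} gives that $j_{E_{n+1}}$ restricts to $j_{E_n}$ on $\mathrm{Dom}(j_{E_n})=|E_{n-1}\Rta E_{n-1}|$. Plugging the first identity into $\psi_{n+1}=j_{E_{n+1}}\circ(\psi_n\Rta\psi_n)$ and applying the induction hypothesis $\psi_n\circ j_{n-1}^{-1}=\iota_{n-1}\circ\psi_{n-1}$ componentwise, the left-hand side collapses to $j_{E_n}(\psi_{n-1}(b),\psi_{n-1}(\beta))$, which is precisely $\iota_n(\psi_n(b,\beta))$ since $\psi_n=j_{E_n}\circ(\psi_{n-1}\Rta\psi_{n-1})$. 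The base case $n=0$ is the observation that $j_{E_1}$ agrees with $j_E$ on $\mathrm{Dom}(j_E)$, so that $\psi_1\circ j_0^{-1}=\iota_0$ on $E$. This makes $(\psi_n)_n$ a diagram isomorphism, and the colimits coincide.

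I expect the main obstacle to be bookkeeping rather than conceptual. On the categorical side one must check that $(j_n^{-1},j_n)$ (and dually $(\iota_n,\text{truncation})$) are genuine embedding--projection pairs in \Scottl, i.e.\ that $j_n\circ j_n^{-1}=\id$ and $j_n^{-1}\circ j_n\le\id$ once extended to Scott-continuous maps on initial segments, so that the limit--colimit coincidence theorem applies and the colimit really is the directed union. On the combinatorial side, the delicate step is typing discipline in the compatibility square: after applying the induction hypothesis one must verify that $(\psi_{n-1}(b),\psi_{n-1}(\beta))$ indeed lands in $\mathrm{Dom}(j_{E_n})=|E_{n-1}\Rta E_{n-1}|$, the exact locus where $j_{E_{n+1}}$ restricts to $j_{E_n}$, so that the two sides of the square meet. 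Once the nesting of the domains $\mathrm{Dom}(j_n)$ is tracked carefully through the induction, the remaining checks (preservation of antichains by the lifted isomorphisms, monotonicity of the induced maps, and totality of $j_{\overline{E}}$ in the limit) are routine.
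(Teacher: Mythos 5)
Your proof is correct, but there is nothing in the paper to compare it against: the proposition is stated without any proof, offered as an immediate reformulation of the remark just before it (``$E_{n+1}$ corresponds to $E_n\Rta E_n$ up to isomorphism''). Your argument is precisely the honest expansion of that remark, and it checks out. The isomorphisms $\psi_{n+1}:=j_{E_{n+1}}\circ(\psi_n\Rta\psi_n)$ are well defined because $j_{E_{n+1}}$ is a total order isomorphism from $E_n\Rta E_n$ onto $E_{n+1}$ (it is $j_{E_n}$ on $\mathrm{Dom}(j_{E_n})$ and the identity on the freshly added points, and $\le_{E_{n+1}}$ is defined exactly to make this monotone both ways) and because $\Rta$ preserves order isomorphisms; your compatibility square goes through because $\mathrm{Dom}(j_{E_n})=|E_{n-1}\Rta E_{n-1}|$ for $n\ge 1$, which is exactly the locus where $j_{E_{n+1}}$ restricts to $j_{E_n}$, with the base case $\mathrm{Dom}(j_{E_0})=\mathrm{Dom}(j_E)$ handled separately as you do. The one imprecision worth flagging is the parenthetical describing the retraction of $\iota_n$: the projection associated with the poset inclusion $E_n\hookrightarrow E_{n+1}$, at the level of morphisms between the lattices of initial segments, is simply $I\mapsto I\cap|E_n|$, not something extracted from $j_{E_{n+1}}$; with that reading both chains are genuine embedding--projection pairs, the limit--colimit coincidence applies, and the colimit is the directed union $\Comp{E}=\bigcup_n E_n$ as you conclude.
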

\todo{removable}

\begin{remark}\label{rk:completion}
The completion of an extensional partial \Kweb $(E,j_E)$ is the smallest extensional~\Kweb $\Comp{E}$ containing $E$. In particular, any extensional \Kweb $D$ is the extensional completion of itself: $D=\Comp{D}$.
\end{remark}

\begin{example}\label{example:1}\quad
\begin{enumerate}
  \item \label{enum:Dinf} \newdef{Scott's $\protect\Dinf$} \newsyminvsec{K-model}{\protect\Dinf} \cite{Scott} is the extensional completion of 
    \begin{align*}
      |D| &:= \{*\},   &   \le_D &:= \id,   &   j_D &:= \{(\emptyset,*)\mapsto *\}.
    \end{align*}
    The completion the a triple $(|\Dinf|,\le_{\Dinf},j_{\Dinf})$ where $|\Dinf|$ is generated by:
    \begin{center}
    \begin{grammarLSB}
      & $|\Dinf|$ &   $\alpha,\beta$ & $::=$ & $*\quad |\quad a\cons\alpha$\\
      & $|!\Dinf|$ &  $a,b$ & $\in$ & $\quad \Achf{|\Dinf|}$
    \end{grammarLSB}
    \end{center}
    except that $\emptyset\cons*\not\in|\Dinf|$; $j_{\Dinf}$ is defined by $j_{\Dinf}(\emptyset,*)=*$ and $j_{\Dinf}(a,\alpha)=a\cons\alpha$ \linebreak for $(a,\alpha)\neq(\emptyset,*)$.
  \item \label{enum:Park} \newdef{Park's $\protect\Pinf$} \newsyminvsec{K-model}{\protect\Pinf} \cite{Par76} is the extensional completion of
    \begin{align*}
      |P| &:= \{*\},   &   \le_P &:= \id,   &   j_P &:= \{(\{*\},*)\mapsto *\};
    \end{align*}
    {\em i.e.},  $|P_\infty|$ is defined by the previous grammar except that $(\{*\}\cons *)\not\in |P_\infty|$ while $\emptyset\cons *\in|P_\infty|$. 
  \item \label{enum:D*} \newsymsec{K-model}{Norm} or \newsymsec{K-model}{\protect D^*_\infty} \cite{CDZ87} is the extensional completion of
    \begin{align*}
      |E| &:= \{p,q\},   &   \le_E &:= \id\cup\{p<q\},
    \end{align*}
    \vspace{-2em}
    \begin{align*} 
      j_E &:= \{(\{p\},q)\mathrm{\mapsto} q , (\{q\},p)\mathrm{\mapsto} p\}.
    \end{align*}
  \item \label{enum:WellStrat} \newdef{Well-stratified \Kwebs} \cite{Man09} are the extensional completions of some $E$ respecting 
    $$\forall (a,\alpha)\mathrm{\in} \mathrm{Dom}(j_E), a\mathrm{=}\emptyset.$$
  \item \label{enum:Ind} The \newdef{inductive \Compl{\omega}} is the extensional completion of 
    \begin{align*}
      |E| &:= \mathbb{N},   &   \le_E &:= \id,   &   j_E &:= \{(\{k\mid k< n\},n)\mathrm{\mapsto} n\mid n\in \mathbb{N}\}.
    \end{align*}
  \item \label{enum:CoInd} The \newdef{co-inductive \Compl{\mathbb{Z}}} is the extensional completion of 
    \begin{align*}
      |E| &:= \mathbb{Z},   &   \le_E &:= \id,   &   j_E &:= \{(\{n\},\:n\+1)\mathrm{\mapsto} n\+1\mid n\in \mathbb{Z}\}.
    \end{align*}
  \item \label{enum:Func} \newdef{Functionals $H^f$} \newsyminvsec{K-model}{\protect H^f} (given $f:\Nat\rta\Nat$) are the extensional completions of:
    \begin{align*}
      |E|  &:= \{*\}\cup\{\alpha_j^n \mid n\ge 0,\ 1\le j\le f(n)\},   &
      \le_E &:= \id,
    \end{align*}
    \vspace{-1em}
    \[
      \quad\quad j_E := \ \Big\{(\emptyset, *)\mapsto*\Big\}\ \cup\ \Big\{(\emptyset, \alpha_{j+1}^n)\mapsto\alpha_j^n\mid 1\le j< f(n)\Big\}\ \cup\ \Big\{(\{\alpha_{1}^{n+1}\},*)\mapsto\alpha_{f(n)}^n\mid n\in\Nat^*\Big\},
    \]
    where $(\alpha^n_j)_{n,j}$ is a family of atoms different from $*$.
  \end{enumerate}
\end{example}

\medskip

\noindent For the sake of simplicity, from now on we will work with a fixed extensional \Kweb $D$. Moreover, we will use the notation $a\cons\alpha:=i_D(a,\alpha)$ \newsyminvsec{reduction}{\protect\rta}. Notice that, due to the injectivity of $i_D$, any $\alpha\in D$ can be uniquely rewritten into $a\cons\alpha'$, and more generally into $a_1\cons\cdots\cons a_n\cons\alpha_n$ for any $n$. 

\begin{remark}
  Using these notations, the model $H^f$ can be summarized by writing, for each $n$:
  \[ \alpha_1^n\ =\ \underbrace{\emptyset\cons \cdots \cons \emptyset}_{f(n)}\cons \{\alpha_1^{n\+1}\}\cons * \]
\end{remark}

\subsubsection{Interpretation of the \Lcalcul} \quad \newline%
The Cartesian closed structure of \Scottlb endowed with the isomorphisms $app_{D}$ and $abs_{D}$ of the reflexive object induced by $D$ (see Proposition~\ref{prop:reflOb}) defines, in a standard manner, a model of the \Lcalcul.\todo{REF!}

A term $M$ with at most $n$ free variables $x_1,\dots,x_n$ is interpreted as the graph of a mor-\linebreak phism~$\llb M\rrb^{x_1...x_n}_{D}$ \newsyminvinvinv{interpretation}{\protect\llb .\protect\rrb^{\protect\vec x}_{D}}{\protect\Lamb in a K-model} from $D^n$ to $D$ (when $n$ is obvious, we can use \newsymsecinv{interpretation}{\protect\llb.\protect\rrb^{\protect\bar x}}{$\Lambda$ in a K-model}). By Equations \eqref{eq:graphf} and \eqref{eq:defRta} we have:
$$\llb M\rrb^{x_1...x_n}_{D}\subseteq\  (D\Rta\!\cdots\Rta D\Rta D)\ \simeq\ (\Achf{D}^{op})^n\times D. $$
In Figure~\ref{fig:intLam}, we explicit the interpretation $\llb M\rrb_{D}^{x_1...x_n}$ by structural induction on $M$.
\begin{figure*} \label{fig:TestInter}
  \caption{Direct interpretation of $\Lamb$ in $D$ \label{fig:intLam}}
    \begin{center}
      $\llb x_i \rrb_D^{\vec x}  =  \{(\vec a,\alpha)\ |\ \alpha\le\beta\in a_i\}$
      \hspace{6em}
      $ \llb \lambda y.M \rrb_D^{\vec x}  =  \{(\vec a,b\cons\alpha)\ |\ (\vec ab,\alpha)\in\llb M\rrb_D^{\vec xy}\}$
      \vspace{0.3em}\\
      $\!\llb M\ N \rrb_D^{\vec x}  =  \{(\vec a,\alpha)\ |\ \exists b,(\vec a,b\cons\alpha)\in\llb M\rrb_D^{\vec x}\ \wedge \forall \beta\mathrm{\in} b, (\vec a,\beta)\in\llb N\rrb_D^{\vec x}\}$
    \end{center}
\end{figure*}

\begin{example}
  \vspace{-0.5em}
  \begin{align*}
    \llb\lambda x.y\rrb_{D}^y &= \{((a), b\cons\alpha)\mid \alpha\le_{D}\beta\in a\},\\
    \llb\lambda x.x\rrb_{D}^y &= \{((a), b\cons\alpha)\mid \alpha\le_{D}\beta\in b\},\\
    \llb\I\rrb_{D} &= \{a\cons\alpha\mid \alpha\le_{D}\beta\in a\},\\
    \llb \underline{1}\rrb_{D} &= \{a\cons b\cons\alpha \mid \exists c, c\cons\alpha\le_{D}\beta\in a,\ c\le_{\Achf{D}}b\}.
  \end{align*}
  In the last two cases, terms are interpreted in an empty environment. We omit the empty sequence associated with the empty environment, {\em e.g.}, $a\cons b\cons\alpha$ stands for $((),a\cons b\cons\alpha)$.\\
  We can verify that extensionality holds, indeed $\llb \underline{1}\rrb_{D}=\llb\I\rrb_{D}$, since $c\cons\alpha\le_{D}\beta\in a$ and $c\le_{\Achf{D}}b$ exactly say that $b\cons\alpha\!\le_{D}\!\beta\!\in\! a$, and since any element of $\gamma\!\in\! D$ is equal to $d\cons\delta$ for a suitable $d$ and $\delta$.
\end{example}

\subsubsection{Intersection types} \quad \newline%
It is folklore that the interpretation of the \Lcalcul into a given \Kweb $D$ is characterized by a specific \newdefprem{intersection type system}{characterizing K-models}. In fact any element $\alpha\in D$ can be seen as an intersection type 
\begin{align*}
  \alpha_1\wedge\cdots\wedge\alpha_n&\rta\beta   &   \text{given by}\quad &\alpha= \{\alpha_1,\dots,\alpha_n\}\cons\beta.
\end{align*}
In Figure~\ref{fig:tyLam}, we give the intersection-type assignment corresponding to the \Kweb induced by $D$.

\begin{figure*}
  \caption{Intersection type system computing the interpretation in $D$ \label{fig:tyLam}}
  \begin{center}
    \AxiomC{$\alpha\in a$}
    \RightLabel{{\scriptsize $(I\dash id)$}}
    \UnaryInfC{$x:a\vdash x:\alpha$}
    \DisplayProof\hskip 50pt
    \AxiomC{$\Gamma\vdash M:\alpha$}
    \RightLabel{{\scriptsize $(I\dash weak)$}}
    \UnaryInfC{$\Gamma,x:a\vdash M:\alpha$}
    \DisplayProof\hskip 50pt
    \AxiomC{$\Gamma\vdash M:\beta$}
    \AxiomC{$\alpha\le\beta$}
    \RightLabel{{\scriptsize $(I\dash \le)$}}
    \BinaryInfC{$\Gamma\vdash M:\alpha$}
    \DisplayProof\\ \vspace{1em}
    \AxiomC{$\Gamma,x:a\vdash M:\alpha$}
    \RightLabel{{\scriptsize $(I\dash \lambda)$}}
    \UnaryInfC{$\Gamma\vdash \lambda x.M:a\cons\alpha$}
    \DisplayProof\hskip 50pt
    \AxiomC{$\Gamma\vdash M:a\cons\alpha$}
    \AxiomC{$\forall \beta\in a,\ \Gamma\vdash N:\beta$}
    \RightLabel{{\scriptsize $(I\dash \at)$}}
    \BinaryInfC{$\Gamma\vdash M\ N:\alpha$}
    \DisplayProof
  \end{center}
\end{figure*}

\begin{prop}
  Let $M$ be a term of \Lamb, the following statements are equivalent:
  \begin{itemize}
  \item $(\vec a,\alpha)\in\llb M\rrb_D^{\vec x}$,
  \item the type judgment $\vec x:\vec a\vdash M:\alpha$ is derivable by the rules of Figure~\ref{fig:tyLam}.
  \end{itemize}
\end{prop}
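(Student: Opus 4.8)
The plan is to prove the equivalence between membership in the interpretation $\llb M\rrb_D^{\vec x}$ and derivability of the type judgment $\vec x : \vec a \vdash M : \alpha$ by structural induction on the term $M$, matching each clause of the interpretation in Figure~\ref{fig:intLam} against the corresponding typing rules of Figure~\ref{fig:tyLam}. The three cases to treat are the variable, the abstraction, and the application; the induction hypothesis asserts the claimed equivalence for all proper subterms of $M$. A preliminary observation is that the rules $(I\dash weak)$ and $(I\dash\le)$ together make derivability closed under weakening of the context (enlarging each $a_i$) and under taking smaller types; I would first record this as a small lemma, since it is exactly what makes the system agree with the downward-closure built into initial segments and into the definition of $\llb M\rrb_D^{\vec x}$ as the graph of a Scott-continuous function.

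For the variable case $M=x_i$, the interpretation gives $(\vec a,\alpha)\in\llb x_i\rrb_D^{\vec x}$ iff $\alpha\le\beta$ for some $\beta\in a_i$. In the forward direction I would derive $x_i : a_i \vdash x_i : \beta$ by $(I\dash id)$, weaken the remaining variables by $(I\dash weak)$ to obtain $\vec x:\vec a\vdash x_i:\beta$, then apply $(I\dash\le)$ with $\alpha\le\beta$ to conclude $\vec x:\vec a\vdash x_i:\alpha$. Conversely, any derivation of $\vec x:\vec a\vdash x_i:\alpha$ must end (modulo the structural rules) in an $(I\dash id)$ axiom supplying some $\beta\in a_i$ followed by weakenings and uses of $(I\dash\le)$, which forces $\alpha\le\beta\in a_i$; I would justify this by inspecting which rules can conclude a judgment whose subject is a variable.

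For the abstraction case $M=\lambda y.N$, the interpretation gives $(\vec a, b\cons\alpha)\in\llb\lambda y.N\rrb_D^{\vec x}$ iff $(\vec a b,\alpha)\in\llb N\rrb_D^{\vec x y}$, and here I would use that every element of $D$ is uniquely of the form $b\cons\alpha$ (via injectivity of $i_D$, as noted just before the interpretation figures) so the type produced is always of shape $b\cons\alpha$. The equivalence is then immediate from the induction hypothesis applied to $N$ and the rule $(I\dash\lambda)$, read in both directions. For the application case $M=P\ Q$, membership $(\vec a,\alpha)\in\llb P\ Q\rrb_D^{\vec x}$ means there exists $b$ with $(\vec a, b\cons\alpha)\in\llb P\rrb_D^{\vec x}$ and $(\vec a,\beta)\in\llb Q\rrb_D^{\vec x}$ for every $\beta\in b$; applying the induction hypothesis to $P$ and to $Q$ turns this into the two premises of rule $(I\dash\at)$, and conversely inverting an $(I\dash\at)$ step recovers the witness $b$, giving the matching existential.

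The main obstacle is the converse (``derivable $\Rightarrow$ membership'') direction, because the type system is not syntax-directed: the structural rules $(I\dash weak)$ and $(I\dash\le)$ may be interleaved freely with the logical rules, so one cannot simply read off the last rule from the shape of the subject term. I would handle this either by a generation (inversion) lemma that normalizes derivations—pushing weakening and subtyping steps to characterize, for each term shape, the exact form of a derivable judgment—or, more cleanly, by strengthening the induction to carry the closure properties recorded in the preliminary lemma, so that membership is manifestly closed under the structural rules and the logical-rule cases go through directly from the induction hypotheses. Given that the interpretation clauses in Figure~\ref{fig:intLam} already bake in the downward closures that $(I\dash weak)$ and $(I\dash\le)$ express, this bookkeeping is routine once the closure lemma is in place, and no genuinely hard step remains.
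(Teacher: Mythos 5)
Your proposal is correct and takes essentially the same approach as the paper: the paper's entire proof of this proposition is the single sentence ``By structural induction on the grammar of $\Lambda$,'' and your case analysis (variable, abstraction, application, with the non-syntax-directed rules $(I\dash weak)$ and $(I\dash\le)$ absorbed by a closure/inversion lemma justified by the downward-closure of initial segments) is exactly the elaboration that induction requires.
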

\begin{proof}
By structural induction on the grammar of $\Lamb$.
\end{proof}

\subsection{The result} \quad \newline\label{ssec:results}%
%
%
%
%
%
%
%
We state our main result, claiming an equivalence between hyperimmunity (Def.~\ref{def:hyperim}) and full abstraction for \Hst. 


\begin{definition}[\newdef{Hyperimmunity}]\label{def:hyperim}
  A (possibly partial) extensional \Kweb $D$ is said to be \newdef{hyperimmune} if for every sequence  $(\alpha_n)_{n\ge 0}\in D^\Nat$, there is no recursive function $g:\Nat\mathrm{\rta}\Nat$ satisfying:
  \begin{align}
    &\forall n\ge 0, & \alpha_n&=a_{n,1}\cons\cdots\cons a_{n,g(n)}\cons\alpha_n' & \text{and} & & \alpha_{n\+1}&\in \bigcup_{k\le g(n)}a_{n,k}. \label{eq:hyperimmunity}
  \end{align}\smallskip
\end{definition}

\noindent Notice, in the above definition, that each antichain $a_{n,i}$ always exists and it is uniquely determined by the isomorphism between $D$ and $D\Rta D$ that allows us to unfold any element $\alpha_i$ as an arrow (of any length).

The idea is the following. The sequence $(\alpha_n)_{n\ge 0}$ is morally describing a non well-founded chain of elements of $D$, through the isomorphism $D\simeq D\Rta D$, allowing us to see any element $\alpha_i$ as an arrow (of any length):
\begin{alignat*}{4}
  \alpha_0= a_{0,1}\cons \cdots\; &  a_{0,i_0}\cdots \cons a_{0,g(0)}\cons \alpha_0' \hspace{-17pt}\\
&\ \rotatebox[origin=c]{90}{$\in$}\\
\vspace{-1em}
& \alpha_1= a_{1,1}\cons \cdots\; & a_{1,i_1}\cdots \cons a_{1,g(1)}\cons \alpha_1' \hspace{-21pt}\\
& & \omit\span\ \rotatebox[origin=c]{90}{$\in$}\\
& &\alpha_2=  a_{2,1}\cons \cdots &\ a_{2,i_2} \cdots \cons\; a_{2,g(2)}\cons \alpha_2'\\
& & & \quad \ddots
\end{alignat*}
The growth rate $(i_n)_n$ of the chain $(\alpha_n)_n$ depends on how many arrows must be displayed in $\alpha_i$ in order to see $\alpha_{i+1}$ as an element of the antecedent of one of them. Now, hyperimmunity means that if any such non-well founded chain $(\alpha_n)_n$ exists, then its growth rate $(i_n)_n$ cannot be bounded by any recursive function $g$.

\begin{remark}
  It would not be sufficient to simply consider the function $n\mapsto i_n$ such that $\alpha_{n+1}\mathrm{\in} a_{n,i_n}$ rather than the bounding function $g$. Indeed, $n\mapsto i_n$ may not be recursive even while $g$ is.
\end{remark}


\begin{prop} \label{prop:hypPartial}
For any extensional partial \Kweb $E$ (Def.~\ref{def:partialKweb}), the completion $\Compl{E}$ (Def.~\ref{def:Compl}) is hyperimmune iff $E$ is hyperimmune.
\end{prop}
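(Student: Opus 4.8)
The plan is to prove the two implications separately, relying throughout on the fact (Remark~\ref{rk:completion}, unwinding Definition~\ref{def:Compl}) that $E$ sits inside $\Compl{E}$ as a substructure: $\le_{\Compl{E}}$ restricts to $\le_E$ on $E=E_0$ and $j_{\Compl{E}}$ restricts to $j_E$ on $\mathrm{Dom}(j_E)$. The first thing I record is that \emph{unfolding an element of $E_0$ agrees in $E$ and in $\Compl{E}$}: if $\alpha\in E_0$ then $j_{\Compl{E}}^{-1}(\alpha)=j_E^{-1}(\alpha)=(a,\alpha')$ with $a\in\Achf{E}$ and $\alpha'\in E_0$, and iterating stays inside $E_0$. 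Hence every antichain $a_{n,k}$ and every residual $\alpha_n'$ produced by unfolding an $E$-element consists again of $E$-elements. This immediately gives the easy implication $\Compl{E}$ hyperimmune $\Rightarrow E$ hyperimmune by contraposition: a sequence $(\alpha_n)_{n\ge0}\in E^\Nat$ with a recursive $g$ witnessing the failure of hyperimmunity of $E$ satisfies the very same equations \eqref{eq:hyperimmunity} in $\Compl{E}$, since the unfoldings coincide, and so witnesses the failure of hyperimmunity of $\Compl{E}$.

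The substantial implication is $E$ hyperimmune $\Rightarrow\Compl{E}$ hyperimmune, again by contraposition. Assume $(\alpha_n)_{n\ge0}\in\Compl{E}^\Nat$ and a recursive $g$ satisfy \eqref{eq:hyperimmunity} in $\Compl{E}$. I introduce the \emph{level} $\ell(\beta)=\min\{m\mid \beta\in E_m\}$ and establish the key monotonicity lemma: one unfolding step strictly decreases the level of a positive-level element and keeps a level-$0$ element at level $0$. Indeed, if $\ell(\beta)=m>0$ then $\beta\in E_m\setminus E_{m-1}$ is an arrow freshly added at stage $m$, so $j_{\Compl{E}}^{-1}(\beta)=(a,\beta')\in|E_{m-1}\Rta E_{m-1}|$, whence every element of $a$ and $\beta'$ has level $\le m-1$; and if $\ell(\beta)=0$ the unfolding stays in $E_0$ by the observation above. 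Since levels are non-increasing along unfolding and drop strictly at the first step when positive, all of $a_{n,1},\dots,a_{n,g(n)}$ consist of elements of level $\le\ell(\alpha_n)-1$ whenever $\ell(\alpha_n)>0$; note at least one unfolding really occurs, because $g(n)\ge1$ (if $g(n)=0$ then $\alpha_{n+1}\in\bigcup_{k\le0}a_{n,k}=\emptyset$, impossible).

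As $\alpha_{n+1}\in\bigcup_{k\le g(n)}a_{n,k}$, this yields $\ell(\alpha_{n+1})<\ell(\alpha_n)$ whenever $\ell(\alpha_n)>0$ and $\ell(\alpha_{n+1})=0$ once $\ell(\alpha_n)=0$. Because levels are natural numbers, $\ell(\alpha_0)\ge\ell(\alpha_1)\ge\cdots$ reaches $0$ after at most $\ell(\alpha_0)$ steps, say at some index $N$, and stays there. Thus $\alpha_n\in E$ for all $n\ge N$, and by the first observation the tail $(\alpha_{N+n})_{n\ge0}$ satisfies \eqref{eq:hyperimmunity} in $E$ with the recursive bound $n\mapsto g(N+n)$, contradicting hyperimmunity of $E$.

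I expect the only genuine obstacle to be the monotonicity lemma: one must confirm, from Definition~\ref{def:Compl}, that a fresh arrow added at stage $m$ unfolds into $E_{m-1}$-data and, crucially, that unfolding can never \emph{re-raise} the level — so that a single strict drop at the first step propagates to every later antecedent. Once this is secured, the $g(n)\ge1$ remark and the shift-by-$N$ recursive reindexing of the bound are routine.
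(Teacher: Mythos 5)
Your proof is correct and follows essentially the same route as the paper: the easy direction uses that unfoldings of $E$-elements agree in $E$ and in $\Compl{E}$, and the hard direction shows that a witnessing sequence in $\Compl{E}$ must descend through the stages $E_k$ and eventually stabilize inside $E=E_0$, contradicting hyperimmunity of $E$. Your ``level'' function $\ell$ is just a repackaging of the paper's observation that $\alpha_n\in E_{j+1}$ forces $\alpha_{n+1}\in E_j$ (and $\alpha_n\in E_0$ forces $\alpha_{n+1}\in E_0$ by surjectivity of $j_E$); your explicit remarks that $g(n)\ge 1$ and that the tail bound $n\mapsto g(N+n)$ is recursive are details the paper leaves implicit.
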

\begin{proof}
  The left-to-right implication is trivial.\\
  The right-to-left one is obtained by contradiction:\\ 
  Assume to have a $(\alpha_n)_{n\ge 0}\in \Comp{E}^{\Nat}$ and a recursive function $g:\Nat\rta\Nat$ such that for all $n\ge 0$:
  \begin{align*}
    \alpha_n&=a_{n,1}\cons\cdots\cons a_{n,g(n)}\cons \alpha_n'  &  \text{and} & & \alpha_{n+1}\in\bigcup_{i\le g(n)} a_{n,i}
  \end{align*}
  Recall that the sequence $(E_k)_{k\ge 0}$ of Definition \ref{def:Compl} approximates the completion $\Comp{E}$.\\
  Then we have the following:
  \begin{itemize}
    \item There exists $k$ such that $\alpha_0\in E_k$, because $\alpha_0\in\Comp{E}=\bigcup_kE_k$.
    \item If $\alpha_n\in E_{j+1}$, then $\alpha_{n+1}\in E_{j}$, because there is $i\le g(n)$ such that $\alpha_{n+1}\in a_{n,i}\subseteq E_j$.
    \item If $\alpha_n\in E_0=E$, then $\alpha_{n+1}\in E$ by surjectivity of $j_E$.
    \end{itemize}
  Thus there is $k$ such that $(\alpha_n)_{n\ge k}\in E^\Nat$, which would break hyperimmunity of $E$.
\end{proof}

\begin{example}\label{ex:hyperimmunity}
  \begin{itemize}
  \item The well-stratified \Kwebs of Example~\ref{example:1}\eqref{enum:WellStrat} (and in particular $\Dinf$ of Item~\eqref{enum:Dinf}) are trivially hyperimmune: already in the partial K-model, there are not even $\alpha_1$, $\alpha_2$ and $n$ such that $\alpha_1=a_{1}\cons\cdots\cons a_{n}\cons\alpha_1'$ and $\alpha_2\in a_n$ (since $a_n=\emptyset$). The non-hyperimmunity of the partial \Kweb can be extended to the completion using Proposition~\ref{prop:hypPartial}.
  \item The model $\Compl{\omega}$ (Ex.~\ref{example:1}\eqref{enum:Ind}) is hyperimmune. Indeed, any such $(\alpha_n)_n$ in the partial K-model would respect $\alpha_{n+1}\mathrm{<_\Nat} \alpha_n$, hence $(\alpha_n)_n$ must be finite by well-foundedness of \Nat.
  \item The models \Pinf, $\Dinf^*$ and $\Compl{\mathbb{Z}}$ (Examples~\ref{example:1}\eqref{enum:Park}, \eqref{enum:D*} and \eqref{enum:CoInd}) are not hyperimmune. Indeed for all of them $g=(n\mapsto 1)$ satisfies the condition of Equation \eqref{eq:hyperimmunity}, the respective non-well founded chains $(\alpha_i)_i$ being $(*,*,\dots)$, $(p,q,p,q,\dots)$, and $(0,-1,-2,\dots)$:
      \begin{alignat*}9
        *=\ & \{*\} \rta *\hspace{-15pt} & & & 
        p=\ & \{q\} \rta p\hspace{-15pt} & & & 
        0=\ & \{1\} \rta 0 \hspace{-20pt}\\
        &\ \rotatebox[origin=c]{90}{$\in$} & & & 
        &\ \rotatebox[origin=c]{90}{$\in$} & & &
        &\ \rotatebox[origin=c]{90}{$\in$}\\
        &\ \ *=\  & \omit\span \{*\}\rta *  & 
        &\ q=\ & \omit\span \{p\}\rta q &
        &\ 1=\ &\omit\span \{2\}\rta 1\\
        & &\omit\span\ \ \rotatebox[origin=c]{90}{$\in$} &
        & &\omit\span\ \ \rotatebox[origin=c]{90}{$\in$} & 
        & &\omit\span\ \ \rotatebox[origin=c]{90}{$\in$}\\
        & &\ \ *=\ & \{*\}\rta * &
        & &\omit\span\ p= \{q\}\rta p &
        & &\omit\span\ 2= \{3\}\rta 2\\
        & & &\quad \ddots  & 
        & & \quad\quad\quad \ddots & &
        & & \quad\quad\quad \ddots
      \end{alignat*}
    \item More interestingly, the model $H^f$ (Ex.~\ref{example:1}\eqref{enum:Func}) is hyperimmune iff $f$ is a \emph{hyperimmune function} \cite{Nies}, {\em i.e.}, iff there is no recursive $g:\Nat\rta\Nat$ such that $f\le g$ (pointwise order); otherwise the corresponding sequence is $(\alpha_1^i)_i$.
    \begin{alignat*}4
      \alpha_1^0= \underbrace{\emptyset \rta \cdots \rta \emptyset}_{_{f(0)\ \text{times}}} \rta & \{\alpha_1^1\} \rta  \emptyset \rta \cdots \rta \emptyset\rta * \hspace{-20pt} \\ 
      &\ \rotatebox[origin=c]{90}{$\in$}\\
      & \ \alpha_1^1= \underbrace{\emptyset \rta \cdots \rta \emptyset}_{_{f(1)\ \text{times}}} \rta & \omit\span  \{\alpha_1^2\} \rta  \emptyset \rta \cdots \rta \emptyset \rta *\\
      & & \rotatebox[origin=c]{90}{$\in$} \hspace{7.5em}\\
      & & \ \alpha_1^2= \underbrace{\emptyset \rta \cdots\rta \emptyset}_{_{f(2)\ \text{times}}} \rta & \{\alpha_1^3\} \rta  \emptyset \rta \cdots \rta \emptyset \rta  *\\
      & & &\ \ \rotatebox[origin=c]{90}{$\in$}\\
      & & &\quad  \ddots\\
    \end{alignat*}
  \end{itemize}
\end{example}

\noindent The following theorem constitutes the main result of the paper. It shows the equivalence between hyperimmunity and (inequational) full abstraction for \Hst under a certain condition. This condition, namely the approximation property, is a standard property that will be defined in more details in Definition~\ref{def:approximationProp}.

\begin{theorem}\label{th:final}
  For any extensional and approximable \Kweb $D$ (Def.~\ref{def:approximationProp}), the following are equivalent:
  \begin{enumerate}
  \item $D$ is hyperimmune, \label{eq:n1}
  \item $D$ is inequationally fully abstract for \Hst, \label{eq:n2}
  \item $D$ is fully abstract for \Hst. \label{eq:n3}
  \end{enumerate}
\end{theorem}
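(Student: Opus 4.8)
The plan is to prove the cycle $(\ref{eq:n1})\Rightarrow(\ref{eq:n2})\Rightarrow(\ref{eq:n3})\Rightarrow(\ref{eq:n1})$, which gives the full equivalence. The step $(\ref{eq:n2})\Rightarrow(\ref{eq:n3})$ is immediate: $\equivob$ is the symmetric part of $\leob$ and set equality is the symmetric part of inclusion, so intersecting the two directions of the inequational statement $\llb M\rrb\subseteq\llb N\rrb\Leftrightarrow M\leob N$ yields $\llb M\rrb=\llb N\rrb\Leftrightarrow M\equivob N$.

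The step $(\ref{eq:n1})\Rightarrow(\ref{eq:n2})$ is the heart of the argument and splits into a soundness half and a completeness half. For soundness I would first record adequacy of $D$, namely $\llb P\rrb\neq\emptyset$ iff $P\Da$, which holds for every extensional approximable \Kweb independently of hyperimmunity (it follows from the approximation property of Def.~\ref{def:approximationProp}). Together with compositionality and monotonicity of the interpretation this gives $\llb M\rrb\subseteq\llb N\rrb\Rightarrow M\leob N$ by the standard contextual argument: if $C\llc M\rrc\Da$ then $\emptyset\neq\llb C\llc M\rrc\rrb\subseteq\llb C\llc N\rrc\rrb$, whence $C\llc N\rrc\Da$. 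The completeness half, $M\leob N\Rightarrow\llb M\rrb\subseteq\llb N\rrb$, is where hyperimmunity is indispensable. Here I would combine the B\"ohm-tree characterisation of $\leob$ (Prop.~\ref{prop:H*le}), which reads $M\leob N$ as $\BT(M)\leetinf\BT(N)$, with approximability to reduce the claim to transferring a single $\alpha\in\llb M\rrb$ to $\llb N\rrb$. Reading $\alpha$ through the intersection-type system of Figure~\ref{fig:tyLam}, a derivation of $\vec x\typ\vec a\vdash M\typ\alpha$ inspects $M$ along a descending sequence of elements of $D$, obtained by unfolding $\alpha$ as an arrow of increasing length; matching this against $N$ requires expanding $\BT(N)$ by $\eta$ to a depth governed by that unfolding. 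The refined \emph{quasi-approximation property} (Def.~\ref{def:quasi-approxTh}) is what makes this inspection an \emph{effective}, recursively-driven traversal, and hyperimmunity then guarantees that such a traversal cannot follow a non-well-founded chain of $D$ forever: it bottoms out on a finite B\"ohm approximant, on which the $\eta$-order underlying $\leob$ transfers the type to $N$, yielding $\alpha\in\llb N\rrb$.

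For $(\ref{eq:n3})\Rightarrow(\ref{eq:n1})$ I would argue by contraposition: assuming $D$ is not hyperimmune, I produce $M,N$ with $M\equivob N$ but $\llb M\rrb\neq\llb N\rrb$, which already refutes the $\Leftarrow$ direction of full abstraction. Non-hyperimmunity supplies a sequence $(\alpha_n)_n$ and a \emph{recursive} $g$ satisfying Equation~\eqref{eq:hyperimmunity}, and the recursiveness of $g$ is exactly what makes the chain $\lambda$-definable. Using Proposition~\ref{prop:autorecursivity} I would assemble a $g$-steered infinite $\eta$-expansion $M$ (a variant of Wadsworth's $\boldsymbol{J}$) that is \Hst-equal to $\I$, yet whose interpretation contains $\alpha_0$ precisely because the chain $(\alpha_n)_n$ can be traced recursively, while $\alpha_0\notin\llb\I\rrb$; hence $\llb M\rrb\neq\llb\I\rrb$ although $M\equivob\I$.

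The main obstacle is the completeness half of $(\ref{eq:n1})\Rightarrow(\ref{eq:n2})$: making precise the passage from ``a type derivation for $M$ unfolds $\alpha$ along a chain of $D$'' to ``this unfolding is an effective process'', so that hyperimmunity can be invoked to force termination and thereby bound the number of $\eta$-expansions of $N$ needed. This is exactly the point at which Nakajima trees are too coarse and the quasi-approximation property must do the work; the remaining steps are either routine adequacy and compositionality bookkeeping, or the direct $\lambda$-definability construction driven by the recursive witness $g$.
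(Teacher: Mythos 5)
Your skeleton coincides with the paper's (the same cycle $(1)\Rightarrow(2)\Rightarrow(3)\Rightarrow(1)$, with $(1)\Rightarrow(2)$ factored through the quasi-approximation property and $(3)\Rightarrow(1)$ refuted by a $g$-steered infinite $\eta$-expansion of $\I$), but two steps as you state them would fail. First, in $(3)\Rightarrow(1)$ you claim the separating element lies in $\llb M\rrb$ but not in $\llb \I\rrb$. This is impossible: your $M$ (the paper's $\J_g\ \underline{0}$) satisfies $M\geetinf\I$, and in an extensional approximable \Kweb this forces $\llb M\rrb\subseteq\llb\I\rrb$ (Lemma~\ref{lemma:incOfInterIfGeet}), so no such element exists. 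The separation must go the other way: one exhibits $\{\alpha_0\}\cons\alpha_0\in\llb\I\rrb$ and shows it is \emph{absent} from $\llb\J_g\ \underline{0}\rrb$, by proving (Lemma~\ref{lemma:separationOfAnk}, an induction on the truncation depth $k$) that $(a,\alpha_n)$ is never in the interpretation of the depth-$k$ finite approximant $\J_g^{n,k}(z)$, because typing it would have to push the non-well-founded chain $(\alpha_n)_n$ down one level per step and eventually hits an $\Omega$. The chain being recursively traceable is not what puts the point \emph{into} $\llb M\rrb$; it is what lets $g$ pace the $\eta$-expansion so that the chain is what keeps the point \emph{out}.

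Second, in the completeness half of $(1)\Rightarrow(2)$ you say the traversal ``bottoms out on a finite B\"ohm approximant, on which the $\eta$-order underlying $\leob$ transfers the type to $N$.'' The transfer cannot be performed at the level of finite approximants: Proposition~\ref{prop:H*le} gives $\BT(M)\leetinf U\subseteq V\geetinf\BT(N)$, and an infinite $\eta$-expansion of a finite tree is in general infinite, so $\subseteq_f$ is not stable under $\leetinf$. This is exactly why the paper introduces the quasi-finite class $\BTqf$: it is closed under $\leetinf$ and $\geetinf$ (Lemmas~\ref{lm:invarianceEtinfRB} and~\ref{lemma:distrEtinfSubset}), the \emph{co-inductive} interpretation is invariant under $\leetinf$ (Lemma~\ref{lemma:leetinfBTtoInt}), and the transfer is carried out on a quasi-finite, generally infinite, approximant $W\subseteq_{qf}\BT(M)$ with $(\vec a,\alpha)\in\llb W\rrb_{coind}$ (Theorem~\ref{th:QApp->FullCompl}). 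Hyperimmunity does not make any traversal terminate; its role, via Theorem~\ref{th:Hy+Ap=QAp} and the path-decoration argument of Lemma~\ref{lm:Hy+Ap=QAp}, is to ensure $\llb X\rrb_{coind}=\llb X\rrb_{ind}$ on quasi-finite approximants of actual terms, i.e.\ that the quasi-finite interpretation of $\BT(M)$ does not exceed $\llb M\rrb$. As written, your plan is missing the closure and invariance lemmas that make the $\leetinf$-transfer legitimate, and the counterexample direction needs to be reversed.
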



\smallskip

\begin{example}
  The model \Dinf (Ex.\ref{example:1}\eqref{enum:Dinf}), the model \Compl{\omega} (Ex.\ref{example:1}\eqref{enum:Ind}) and the well-stratified \Kwebs (Ex.\ref{example:1}\eqref{enum:WellStrat}) will be shown inequationally fully abstract, as well as the models~$H^f$ when~$f$ is hyperimmune. The models~$\Dinf^*$,~\Compl{\mathbb{Z}} (Ex.\ref{example:1}\eqref{enum:D*} and Ex.\ref{example:1}\eqref{enum:CoInd}) will not be, as well as the model~$H^f$ for~$f$ not hyperimmune.
\end{example}

\section{Proof}
    \label{sec:proof1}
%

The main idea of this proof is not new, it consists in using B\"ohm trees to decompose the interpretation of the \Lcalcul. In order to do so, we need to interpret them into our \Kweb $D$ so that the following diagram commutes:
\begin{center}
  \begin{tikzpicture}[description/.style={fill=white,inner sep=2pt},ampersand replacement=\&]
    \matrix (m) [matrix of math nodes, row sep=1em, column sep=5em, text height=1.5ex, text depth=0.25ex]
    { {\Lamb} \& \& D \\
      \& {\BT} \\};
    \path[->] (m-1-1) edge[] node[description] {$\llb.\rrb$} (m-1-3);
    \path[->] (m-1-1) edge[] node[description] {$\BT(.)$} (m-2-2);
    \path[->] (m-2-2) edge[] node[description] {$\llb.\rrb_*$} (m-1-3);
  \end{tikzpicture}
\end{center}
The approximation and quasi-approximation properties of Definitions~\ref{def:approximationProp} and~\ref{def:quasi-approxTh} exactly state this decomposition for two specific choices of interpretation. Indeed, we will see in Definition~\ref{def:BTCohInterpret} that there are many different possible interpretations of the B\"ohm trees, we will mainly focus on the inductive interpretation (Def.~\ref{def:(co)indInt}) and the quasi-finite interpretation (Def.~\ref{def:BTinterpretQF}).

The approximation and quasi-approximation properties will have different roles. The approximation property, {\em i.e.}, the decomposition via the inductive interpretation, mainly says that the interpretation of terms is approximable by finite B\"ohm trees. Approximation property is a hypothesis of Theorem~\ref{th:final} and it holds in all known candidates to full abstraction, {\em i.e.}, extensional and sensible models (Ex.~\ref{ex:ApprThe}). We even conjecture, in fact, that all K-models that are fully abstract for \Hst respect the approximation property.

The quasi-approximation property is a fairly finer property\footnote{Even if technically independent.} that is based on deep references to recursivity theory. The quasi-approximation property will be proved equivalent to both full abstraction for \Hst and hyperimmunity in the presence of the approximation property.

\begin{theorem}
  For any extensional and approximable \Kweb $D$, the following are equivalent:
  \begin{enumerate}
  \item $D$ is hyperimmune, \label{eq:qa0}
  \item $D$ respects the quasi-approximation property, \label{eq:qa1}
  \item $D$ is inequationally fully abstract for \Lamb, \label{eq:qa2}
  \item $D$ is fully abstract for \Lamb. \label{eq:qa3}
  \end{enumerate}
\end{theorem}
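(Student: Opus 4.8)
The plan is to prove the cyclic chain of implications $(\ref{eq:qa0})\Rightarrow(\ref{eq:qa1})\Rightarrow(\ref{eq:qa2})\Rightarrow(\ref{eq:qa3})\Rightarrow(\ref{eq:qa0})$, using the quasi-approximation property as the central bridge between the semantic condition (hyperimmunity) and the observational conditions (full abstraction). The diagram in the introduction of the proof section makes clear that both approximation properties are about factoring the interpretation $\llb.\rrb$ through the B\"ohm tree map $\BT(.)$ and a suitable interpretation $\llb.\rrb_*$ of B\"ohm trees; since $D$ is assumed approximable, the inductive interpretation already factors, and the whole game is about controlling the quasi-finite interpretation via recursivity.

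\medskip
\noindent\textbf{From hyperimmunity to quasi-approximation ($\ref{eq:qa0}\Rightarrow\ref{eq:qa1}$).} First I would make precise the quasi-finite interpretation $\llb.\rrb_*$ of B\"ohm trees and the quasi-approximation property (Def.~\ref{def:quasi-approxTh}). The failure of quasi-approximation should manifest as a type judgment $\vec x:\vec a\vdash M:\alpha$ (in the system of Figure~\ref{fig:tyLam}) witnessed only by an \emph{infinite} derivation that descends through the B\"ohm tree of $M$ along a non-terminating head-reduction path. The key is that, using Proposition~\ref{prop:autorecursivity}, such an infinite witness extracted from an actual $\lambda$-term is forced to be recursive: the sequence of elements $\alpha=\alpha_0,\alpha_1,\alpha_2,\dots$ peeled off along the path, together with the depth function recording how many arrows are unfolded at each step, yields exactly a sequence $(\alpha_n)_n$ and a \emph{recursive} bounding function $g$ satisfying Equation~\eqref{eq:hyperimmunity}. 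Hyperimmunity of $D$ rules this out, so every type must already be derivable by a finite (hence quasi-finite) derivation, which is the quasi-approximation property.

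\medskip
\noindent\textbf{From quasi-approximation to full abstraction ($\ref{eq:qa1}\Rightarrow\ref{eq:qa2}\Rightarrow\ref{eq:qa3}$).} Given quasi-approximation, I would recover inequational full abstraction along the classical Hyland--Wadsworth line, but using the quasi-finite interpretation in place of raw B\"ohm/Nakajima trees. Soundness ($\llb M\rrb\subseteq\llb N\rrb$ whenever $M\leob N$) is the easy, purely denotational direction. For completeness, I would assume $\llb M\rrb\subseteq\llb N\rrb$ and, using the factorization through $\BT$ together with the characterization of $\leob$ by B\"ohm-tree approximation (Proposition~\ref{prop:H*le}), build for any separating context $C$ a finite type witness for $C\llc M\rrc\Da$; quasi-approximability guarantees this witness is already captured by a finite B\"ohm tree, so it transfers to $N$, giving $C\llc N\rrc\Da$ and hence $M\leob N$. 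The step $\ref{eq:qa2}\Rightarrow\ref{eq:qa3}$ is immediate, since equational full abstraction is the symmetric instance of the inequational one.

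\medskip
\noindent\textbf{Closing the loop ($\ref{eq:qa3}\Rightarrow\ref{eq:qa0}$).} Finally I would argue the contrapositive: if $D$ is \emph{not} hyperimmune, then there is a sequence $(\alpha_n)_n$ and a recursive $g$ witnessing Equation~\eqref{eq:hyperimmunity}. Using Proposition~\ref{prop:autorecursivity} on the recursively-presented data $(\alpha_n,g(n))_n$, I would cook up two terms $M,N$ that the model identifies (their interpretations agree because the non-well-founded chain is internally absorbed by the model) but which are observationally separated by a context that exploits precisely the recursive access to the chain---the context can follow the chain because $g$ is recursive. This breaks full abstraction, closing the cycle. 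I expect the \textbf{main obstacle} to be the first implication: extracting a genuinely \emph{recursive} bounding function $g$ from an infinite typing derivation of an arbitrary $\lambda$-term. The subtlety flagged in the Remark after Definition~\ref{def:hyperim}---that the pointwise index map $n\mapsto i_n$ need not be recursive even when the bound $g$ is---means I must be careful to extract the coarser, recursive datum $g$ (bounding how far one unfolds) rather than the exact index sequence, and to verify that the recursive presentation of the B\"ohm tree of a fixed term genuinely yields such a $g$ via Proposition~\ref{prop:autorecursivity}.
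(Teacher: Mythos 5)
Your overall architecture (the cycle $(1)\Rightarrow(2)\Rightarrow(3)\Rightarrow(4)\Rightarrow(1)$ with quasi-approximation as the bridge) is exactly the paper's, but two of your steps contain genuine errors. The most serious is the closing step $(4)\Rightarrow(1)$: you propose to refute full abstraction by producing two terms that the model \emph{identifies} but that are \emph{observationally separated} by a context. In an approximable \Kweb this is impossible: approximability forces sensibility ($\llb M\rrb=\llb \BT(M)\rrb_{ind}=\emptyset$ for head-diverging $M$), and sensibility already gives adequacy, so $\llb M\rrb=\llb N\rrb$ always implies $M\equivob N$ here. The counterexample must be oriented the other way. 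The paper builds, from the recursive bound $g$ and Proposition~\ref{prop:autorecursivity}, a term $\J_g\ \underline 0$ that is an infinite $\eta$-expansion of $\I$ with $g(n)$ fresh variables at level $n$; it is therefore \emph{observationally equal} to $\I$ (Lemma~\ref{lemma:ineq}), yet the non-well-founded chain $(\alpha_n)_n$ supplies a point $\{\alpha_0\}\cons\alpha_0\in\llb\I\rrb\setminus\llb\J_g\ \underline 0\rrb$ (Lemmas~\ref{lemma:separationOfAnk} and~\ref{lemma:DenotationalSeparation}). There is no separating context at all; the recursivity of $g$ is spent on \emph{building} the term, not on building a context that ``follows the chain.''

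The second problem is in $(2)\Rightarrow(3)$: you call the direction $M\leob N\Rightarrow\llb M\rrb\subseteq\llb N\rrb$ ``the easy, purely denotational direction.'' It is neither: $\leob$ is an operational notion, and this implication (inequational completeness, Theorem~\ref{th:QApp->FullCompl}) is precisely where the model $\Dinf^*$ fails via $\J\equivob\I$ with $\llb\J\rrb\neq\llb\I\rrb$. The paper's proof needs the syntactic characterization $\BT(M)\leetinf U\subseteq V\geetinf\BT(N)$ of Proposition~\ref{prop:H*le}, the distribution of $\leetinf$ and $\geetinf$ over $\subseteq_{qf}$ (Lemma~\ref{lemma:distrEtinfSubset}), and the invariance of the \emph{coinductive} interpretation under infinite $\eta$-expansion (Lemma~\ref{lemma:leetinfBTtoInt}); this is exactly what the quasi-finite interpretation is for, since the inductive interpretation is not $\leetinf$-invariant. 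The direction you do sketch (transferring a convergence witness through $\llb M\rrb\subseteq\llb N\rrb$) is the adequacy half, which is indeed the easy one. Finally, a smaller point on $(1)\Rightarrow(2)$: Proposition~\ref{prop:autorecursivity} goes the wrong way for extracting recursivity from a term (it builds a term from a recursive family); the paper instead gets the recursive bound from quasi-finiteness plus Lemma~\ref{lm:omfBT}, routed through the labeled play, K\"onig's lemma and tree-hyperimmunity (Lemmas~\ref{lemma:sHyp<=>Hyp} and~\ref{lm:Hy+Ap=QAp}). Your instinct that one must extract the coarse bound $g$ rather than the index map is correct, but the tool you cite does not do it.
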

\begin{proof} \quad
  \begin{itemize}
  \item $\eqref{eq:qa0}\Rta\eqref{eq:qa1}$: Theorem~\ref{th:Hy+Ap=QAp},
  \item $\eqref{eq:qa1}\Rta\eqref{eq:qa2}$: inequational adequacy is the object of Theorem~\ref{th:QApp->Adeq} and inequational completeness the one of Theorem~\ref{th:QApp->FullCompl},
  \item $\eqref{eq:qa2}\Rta\eqref{eq:qa3}$: trivial,
  \item $\eqref{eq:qa3}\Rta\eqref{eq:qa0}$: Theorem~\ref{th:countex}.\qedhere  
  \end{itemize}
\end{proof}

  \subsection{B\"ohm trees}
    \label{ssec:BohmTrees}
%

\subsubsection{Basic definitions}  \quad \newline%
The B\"ohm trees provide one of the simplest semantics for the \Lcalcul:

\begin{definition}\label{def:BT}
  The set of \newdef{B\"ohm trees} is the co-inductive structure generated by the grammar:
    \begin{center}
      \begin{tabular}{l l r @{\ ::=\ } l l}
        (B\"ohm trees) & 
          \newsym{\protect\BT} \hspace{2em}&
          $U,V$ \hspace{1em} & 
          \hspace{2em} $\newsym{\protect\Omega} \quad |\quad \lambda x_1...x_n.y\ U_1\cdots U_k$ & $,\forall n,\forall k\ge 0$
      \end{tabular}
    \end{center}
  The \newdefpremsec{B\"ohm tree}{of a $\lambda$-term} $M$ ({\em i.e.}, its interpretation), is defined by co-induction:
  \begin{itemize}
  \item If $M$ head diverges, then $\BT(M)=\Omega$,
  \item if $M\rta_h^*\lambda x_1...x_n.y\ N_1\cdots N_k$, then
    $$\BT(M)=\lambda x_1...x_n.y\ \BT(N_1)\cdots \BT(N_k).$$
  \end{itemize}
  Notice that a B\"ohm tree can be described as a finitely branching tree (of possibly infinite height) where nodes are labeled either by a constant $\Omega$, or by a list of abstractions and by a head variable.

  Capital final Latin letters $U,V,W...$ will range over B\"ohm trees.
\end{definition}

\begin{example}
  The B\"ohm trees $\BT(\lambda x. x\ (\lambda y.x\ y))$, $\BT(x\ (\I\ \I)\ (y\ (\FixP\ \I)))$, \FixP and $\BT(\Theta\ (\lambda uxy. y (u\ x))\ z)$ are described in Figure~\ref{fig:exBT}.
\end{example}
\begin{figure*}
  \caption{Some examples of B\"ohm trees.}\label{fig:exBT}
    \begin{tikzpicture}
        \node (nom1) at (0,3) {$\BT(\lambda x. x\ (\lambda y.x\ y))$:};
        \node (a) at (0,2) {$\lambda x.x\ .\!\!\!$};
        \node (b) at (0.47,1) {$\lambda y.x\ .\!\!\!$};  
        \node (c) at (0.94,0) {$y$};
        \draw[-] (a.south east) -- node [auto] {} (b.north);
        \draw[-] (b.south east) -- node [auto] {} (c.north);
        \node (nom2) at (4,3) {$\BT(x\ (\I\ \I)\ (y\ (\FixP\ \I)))$:};
        \node (a) at (4,2) {$ x\ .\!\!\!$};
        \node (a') at (4.4,2) {$\vphantom{x}.\!\!\!$};
        \node (b) at (4.8,1) {$ y\ .\!\!\!$};  
        \node (c) at (5.03,0) {$\Omega$};
        \node (d) at (3.8,1) {$\lambda x.x$};
        \draw[-] (a'.south east) -- node [auto] {} (b.north);
        \draw[-] (b.south east) -- node [auto] {} (c.north);
        \draw[-] (a.south east) -- node [auto] {} (d.north);
        \node (nom3) at (7.5,3) {$\FixP$:};
        \node (a) at (7.5,2) {$ \lambda f.f\ .\!\!\!$};
        \node (b) at (7.99,1) {$ f\ .\!\!\!$};  
        \node (c) at (8.25,0) {$f\ .\!\!\!$};
        \node (d) at (8.5,-1) {$\vdots$};
        \draw[-] (a.south east) -- node [auto] {} (b.north);
        \draw[-] (b.south east) -- node [auto] {} (c.north);
        \draw[-] (c.south east) -- node [auto] {} (d.north);
        \node (nom3) at (11,3) {$\BT(\Theta\ (\lambda uxy.y\ (u\ x))\ z)$:};
        \node (a) at (11,2) {$ \lambda y_1.y_1\ .\!\!\!$};
        \node (b) at (11.62,1) {$ \lambda y_2.y_2\ .\!\!\!$};  
        \node (c) at (12.22,0) {$\lambda y_3.y_3\ .\!\!\!$};
        \node (d) at (13.18,-1) {$\ddots$};
        \draw[-] (a.south east) -- node [auto] {} (b.north);
        \draw[-] (b.south east) -- node [auto] {} (c.north);
        \draw[-] (c.south east) -- node [auto] {} (d.north west);
    \end{tikzpicture}
\end{figure*}

There exist B\"ohm trees that do not come from terms: 

\begin{example}
  A B\"ohm tree with infinitely many free variables (such as the first one below) cannot be obtained from $\lambda$-terms that have finitely many free variables. Worse, if $g:\Nat\rta\Nat$ is non recursive, then the second B\"ohm tree below does not come from any term (otherwise it would be possible to compute $g$ from this term).
    \begin{alignat*}8
      x_0\ &. & &  \hspace{5em}  &  \lambda x_1. x_0\ x_0\mathop{\cdots}_{g(0)}x_0\ &.\\
      & | & &                    &   & |\\
      & x_1\ & .\hspace{1em} &   &   & \lambda x_2.x_1\ x_1\mathop{\cdots}_{g(1)}x_1\ & . \\
      & & |\hspace{1em} &        &   & & \omit\span |\\
      & & x_2\ &.                &   & & \omit\span\lambda x_3.x_2\ x_2\mathop{\cdots}_{g(2)}\ x_2\ &. \\
      & & & |                    &   &  & & \omit\span|\\
      & & & \vdots               &   & & & \omit\span \ddots \hspace{-1em}
    \end{alignat*}
\end{example}

%
%

\subsubsection{Properties} \quad \newline%
The B\"ohm tree model model carries several interesting properties for the study of the untyped \Lcalcul. 
%
%
%
%
By construction, it is sensible for the head reduction, and, moreover, it is adequate for~\Hst which is coarser. \todo{prop}%
Moreover, those properties extend to inequations using the following natural notion of inclusion on B\"ohm trees:

\begin{definition}
  The \newdefpremsec{inclusion}{of B\"ohm trees} $U\newsymprem{\protect\subseteq}{for B\"ohm trees} V$ 
  is co-inductively defined by:
  \begin{itemize}
  \item $\Omega\subseteq V$ for all $V$
  \item If for all $i\le k$, $U_i\subseteq V_i$, then 
    $$(\lambda x_1...x_n.y\ U_1\cdots U_k)\ \subseteq (\lambda x_1...x_n.y\ V_1\cdots V_k).$$
  \end{itemize}
  For readability, we will write $M\subseteq_{\BT} N$ whenever $\BT(M)\subseteq \BT(N)$.
\end{definition}

The lower bounds of a B\"ohm tree $U$ are obtained by replacing (possibly infinitely many) subtrees of $U$ by $\Omega$.

\begin{example}
  For any $M$, we have the inclusion 
  $$\FixP\ (\lambda uxy. x\ (u\ y)\ \Om)\quad \subseteq_{\BT}\quad \FixP\ (\lambda uxy.x\ (u\ y)\ (M\ x))$$
  \hspace{11.5em}
  \begin{tikzpicture}[description/.style={fill=white,inner sep=2pt},ampersand replacement=\&]
    \matrix (m) [matrix of math nodes, row sep=0.5em, column sep=0.7em, text height=1.5ex, text depth=0.25ex]
    { \lambda x_0x_1. x_0\ .\ . \& \& \&
      \lambda x_0x_1.x_0\ .\ . \\
      \lambda x_2.x_1\ .\ . \& \Omega  \&  \subseteq \&
      \lambda x_2.x_1\ .\ . \& {\BT(M\ x_1)} \\ 
      \lambda x_3.x_2\ .\ . \& { \Omega} \& \&
      \lambda x_3.x_2\ .\ . \& { \BT(M\ x_2)} \\
      \vdots \& \Omega \& \&
      \quad\vdots\!\!\ddots \& { \BT(M\ x_3)}\quad \\};
    \path[-] (m-1-1) edge[] node[auto] {} (m-2-1);
    \path[-] (m-1-1) edge[] node[auto] {} (m-2-2);
    \path[-] (m-2-1) edge[] node[auto] {} (m-3-1);
    \path[-] (m-2-1) edge[] node[auto] {} (m-3-2);
    \path[-] (m-3-1) edge[] node[auto] {} (m-4-1);
    \path[-] (m-3-1) edge[] node[auto] {} (m-4-2);
    \path[-] (m-1-4) edge[] node[auto] {} (m-2-4);
    \path[-] (m-1-4) edge[] node[auto] {} (m-2-5);
    \path[-] (m-2-4) edge[] node[auto] {} (m-3-4);
    \path[-] (m-2-4) edge[] node[auto] {} (m-3-5);
    \path[-] (m-3-4) edge[] node[auto] {} (m-4-4);
    \path[-] (m-3-4) edge[] node[auto] {} (m-4-5);
  \end{tikzpicture}
\end{example}
\begin{prop}[{\cite[Proposition~16.4.7]{Barendregt}}]
  B\"ohm trees are inequationally adequate for \Hst, {\em i.e.}
\[
  \text{if }\quad M\subseteq_{\BT}N\quad\text{ then }\quad M\leob N
  \]
\end{prop}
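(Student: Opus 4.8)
The plan is to unfold the definition of the observational preorder and reduce everything to a \emph{congruence} property of B\"ohm-tree inclusion. By Definition~\ref{observationalPreorder}, $M\leob N$ means that $C\llc M\rrc\Da$ implies $C\llc N\rrc\Da$ for every context $C$, so it suffices to control head convergence under arbitrary contexts. The first ingredient is the convergence criterion $M\Da$ iff $\BT(M)\neq\Omega$, which is immediate from Definition~\ref{def:BT}: a term head-diverges exactly when its B\"ohm tree is the empty tree $\Omega$. The second ingredient is that inclusion is upward closed away from $\Omega$: if $U\neq\Omega$ and $U\subseteq V$ then $V\neq\Omega$, since the only clause of the definition of $\subseteq$ with a non-$\Omega$ left-hand side is the second one, which forces the same root $\lambda x_1\dots x_n.y$ on both sides.

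Granting these two facts, the whole statement follows from a single \textbf{Congruence Lemma}: if $M\subseteq_{\BT}N$ then $C\llc M\rrc\subseteq_{\BT}C\llc N\rrc$ for every context $C$. Indeed, assuming $C\llc M\rrc\Da$ gives $\BT(C\llc M\rrc)\neq\Omega$, hence $\BT(C\llc N\rrc)\neq\Omega$ by the lemma together with the upward-closure remark, i.e. $C\llc N\rrc\Da$; this is precisely $M\leob N$.

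To prove the Congruence Lemma I would argue by induction on the structure of the context $C$, using the recursive definition of hole-filling, namely $\llc.\rrc\llc M\rrc=M$, $\,x\llc M\rrc=x$, $\,(\lambda x.C')\llc M\rrc=\lambda x.(C'\llc M\rrc)$ and $(C_1C_2)\llc M\rrc=(C_1\llc M\rrc)(C_2\llc M\rrc)$. The hole and variable cases are the hypothesis and reflexivity of $\subseteq$. The abstraction case reduces to the elementary monotonicity $\BT(P)\subseteq\BT(Q)\Rightarrow\BT(\lambda x.P)\subseteq\BT(\lambda x.Q)$, which is straightforward: head reduction proceeds under $\lambda$, so $\lambda x.P\Da$ iff $P\Da$, and when both converge abstraction merely prepends $x$ to the list of head binders without touching any subtree, so one matches the two clauses of $\subseteq$ directly. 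The application case reduces to $\BT(P)\subseteq\BT(Q)\Rightarrow\BT(P\,R)\subseteq\BT(Q\,R)$ together with its symmetric counterpart $\BT(R\,P)\subseteq\BT(R\,Q)$.

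The application case is the genuine obstacle and the only place where real work is needed. The difficulty is that head-reducing $P\,R$ may fire a redex that \emph{substitutes} $R$ into the body of $P$, so that $\BT(P\,R)$ depends on $\BT(P)$ through a substitution, and substituted copies of $R$ may sit at positions where the trees of $P$ and $Q$ differ (one pruned to $\Omega$, the other not). To handle this cleanly I would invoke the continuity (Approximation) theorem for the B\"ohm-tree semantics, which expresses $\BT(C\llc M\rrc)$ as the supremum $\bigsqcup_{A\in\mathcal{A}(M)}\BT(C\llc A\rrc)$ over the finite approximants $A$ of $M$, where $\BT(M)=\bigsqcup\mathcal{A}(M)$. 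From $\BT(M)\subseteq\BT(N)$ one obtains that every finite approximant of $M$ lies below some finite approximant of $N$; monotonicity and continuity of hole-filling on these finite $\lambda\Omega$-normal forms, which is provable by ordinary induction since only finitely many occurrences are involved, then yield the inclusion after passing to suprema. I expect this continuity/approximation step---equivalently, the monotonicity of B\"ohm-tree substitution---to be the technical heart, which is exactly why the statement is imported as \cite[Proposition~16.4.7]{Barendregt} rather than reproved in full here.
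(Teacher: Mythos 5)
Your proposal is correct, and it is essentially the proof the paper is pointing to: the paper gives no argument of its own here, it simply imports the statement from Barendregt, and your reconstruction (head convergence $=$ non-$\Omega$ B\"ohm tree, upward closure of $\subseteq$ away from $\Omega$, congruence of $\subseteq_{\BT}$ under contexts, with the application case discharged by the continuity/approximation theorem) is exactly the standard route taken there. You correctly identify the continuity theorem as the genuinely hard step and defer it to the cited source, which is consistent with the paper's own treatment.
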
\medskip

\noindent The converse does not hold (because $\subseteq_{\BT}$ is not extensional), so that we do not have full abstraction, but rather a new (inequational) $\lambda$-theory called \newsym{\protect\mathcal{BT}}.

Forcefully adding the extensionality in~$\mathcal{BT}$, we obtain the theory~$\mathcal{BT}\!\eta$ which is different from~\Hst:

\begin{example}\label{ex:J}
  The term $\J=\Theta\ (\lambda uxy. x\ (u\ y))$ defines the following B\"ohm tree:
  \begin{alignat*}4
    \lambda x_0x_1. x_0\ &.\\
    & |\\
    & \lambda x_2.x_1\  &\omit\span . \\
    & & \omit\span |\\
    & & \lambda x_3.x_2\  &. \\
    & & & \ddots 
  \end{alignat*} 
  The behavior of this term is the same as the identity, so that we have $\J\equivob \I$, but their B\"ohm trees are distinct and they are not $\eta$-convertible, so that $\J\not\equiv_{\mathcal{BT}\!\eta}\I$.
\end{example}

\subsubsection{B\"ohm trees and full abstraction}\label{sssec:BT+FA} \quad \newline%
We have seen that \BT is not fully abstract for \Hst since it is not extensional; however, there are refinements using the notion of infinite $\eta$ expansion that permit to say something about the full abstraction (Proposition~\ref{prop:H*le}).

\begin{definition}
   We write by \newsym{\protect\succeq_{\eta}} the \newdefpremsec{$\eta$-reduction}{on B\"ohm trees}, that is $U\succeq_{\eta}V$ if $U=V=\Omega$ or if
   \begin{alignat*}6
     &&  U&=\lambda x_1...x_{n+m}.&&y\ V_1\cdots V_k\ x_{n+1}\cdots x_{n+m}\\
     &\text{and } & V&=\lambda x_1...x_{n}.&&y\ V_1\cdots V_k
   \end{alignat*}
   where $x_{n+1},....,x_{n+m}\not\in \FV(V_1,....,V_k$).
\end{definition}
   
\begin{definition}\label{def:etainf}
   We write by \newsym{\protect\geetinf} the co-inductive version of $\succeq_\eta$, that is the coinductive relation generated by:
   \begin{center}
    \AxiomC{$\vphantom{A}$}
    \RightLabel{\footnotesize  \newsymsec{rule}{(\protect\mathtt{\eta\protect\infty\omega})}}
    \UnaryInfC{$\Omega\succeq_\eta\Omega$}
    \DisplayProof \hspace{20pt}
    \AxiomC{$\forall i\le k,\ U_i\geetinf V_i$}
    \AxiomC{$\forall i\le m,\ U_{k+i}\geetinf x_{n+i}$}
    \RightLabel{\footnotesize  \newsymsec{rule}{(\protect\mathtt{\eta\protect\infty\protect\at})}}
    \BinaryInfC{$\lambda x_1...x_{n+m}.y\ U_1\cdots U_{k+m}\geetinf\lambda x_1...x_n.y\ V_1\cdots V_k$}
    \DisplayProof 
  \end{center}
  By abuse of notations, given two $\lambda$-terms $M$ and $N$, we say that $M$ \newdef{infinitely $\eta$-expands} $N$, written~$M\geetinf N$, if~$\BT(M)\geetinf\BT(N)$.
\end{definition}

\begin{example}
  We have the inequations:
  $$\hspace{2.5em} \BT(\I)\quad\ \leetinf\quad \BT(\J) \quad \leetinf\quad\ \BT(\FixP\ (\lambda uxyz. x\ (u\ y)\ (u\ z)))$$
  \hspace{9em}
  \begin{tikzpicture}[description/.style={fill=white,inner sep=2pt},ampersand replacement=\&]
    \matrix (m) [matrix of math nodes, row sep=0.5em, column sep=0em, text height=1.5ex, text depth=0.25ex]
    { \lambda x_0.x_0 \& \&
      \lambda x_0x_1. x_0\ . \& \&
      \lambda x_0x_1y_1.x_0\ .\ . \\
      \& {  \preceq_{\eta\infty}} \&
      \lambda x_2.x_1\ . \& { \preceq_{\eta\infty}} \&
      \lambda x_2y_2.x_1\ .\ . \& \lambda y_2z_2. y_1\ .\ . \\ 
      \& \&
      \lambda x_3.x_2\ . \& \&
      \lambda x_3y_3.x_2\ .\ . \& \lambda y_3z_3. y_2\ .\ . \& \vdots\!\!\ddots\quad \\
      \& \&
      \vdots  \& \&
      \quad\vdots\!\!\ddots \& \vdots\!\!\ddots\quad\quad \& \vdots\!\!\ddots\quad \\};
    \path[-] (m-1-3) edge[] node[auto] {} (m-2-3);
    \path[-] (m-2-3) edge[] node[auto] {} (m-3-3);
    \path[-] (m-3-3) edge[] node[auto] {} (m-4-3);
    \path[-] (m-1-5) edge[] node[auto] {} (m-2-5);
    \path[-] (m-1-5) edge[] node[auto] {} (m-2-6);
    \path[-] (m-2-5) edge[] node[auto] {} (m-3-5);
    \path[-] (m-2-5) edge[] node[auto] {} (m-3-6);
    \path[-] (m-2-6) edge[double] node[auto] {} (m-3-7);
    \path[-] (m-3-5) edge[] node[auto] {} (m-4-5);
    \path[-] (m-3-5) edge[] node[auto] {} (m-4-6);
    \path[-] (m-3-6) edge[double] node[auto] {} (m-4-7);
  \end{tikzpicture}
\end{example}

\begin{remark}
  The $\eta$-reduction on B\"ohm trees is not directly related to the $\eta$-reduction on $\lambda$-terms.

  For example 
  $$\FixP\ (\lambda uzx. x\ (y\ z)) \not\leet \lambda x.\FixP\ (\lambda uzx. x\ (y z))\ x.$$
  Since $x$ is not free, however, this reduction holds at the level of B\"ohm trees.
  
  Conversely, we have 
  $$\FixP\ (\lambda uz. z\ (u\ z))\leet \FixP\ (\lambda uzx. z\ (u\ z)\ x)$$
  even while the B\"ohm trees are fairly different.

  However, the $\eta$-reduction on $\lambda$-terms is directly implied by the infinite $\eta$ reduction.
\end{remark}

Using $\leetinf$, we can characterize the notion of observational equivalence ({\em i.e.}, \Hst)

\begin{prop}[{\cite[Theorem~19.2.9]{Barendregt}}]\label{prop:H*le}
  For any terms $M,N\in\Lamb$, $M\leob N$ iff there exist two B\"ohm trees $U,V$ such that:
    $$\BT(M) \preceq_{\eta_\infty}U\subseteq V\geetinf \BT(N).$$
\end{prop}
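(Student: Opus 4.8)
The plan is to prove the two implications separately, treating the right-to-left direction as soundness and the left-to-right direction as completeness, the latter being the substantial part. For soundness I would show that each link of the chain $\BT(M)\leetinf U\subseteq V\geetinf\BT(N)$ is sound for $\leob$ and then compose them. The middle inclusion is handled by the inequational adequacy of Böhm-tree inclusion ($M\subseteq_{\BT}N\Rightarrow M\leob N$). For the two outer infinite-$\eta$ links I would first observe that a single $\eta$-expansion is observationally invisible: $\Hst$ validates $\eta$, so $\lambda x.Px\equivob P$ for $x\notin\FV(P)$, and by congruence of $\leob$ this extends to finite $\eta$-expansions occurring arbitrarily deep in a term. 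To reach the infinite case I would use a compactness argument: whenever $C\llc M\rrc\Da$, the head-normal form is reached in finitely many head steps and therefore depends on only a finite, bounded-depth fragment of the relevant Böhm tree, a fragment already present after finitely many $\eta$-expansions; hence convergence survives passage to the full infinite $\eta$-expansion. The technical content here is that the composite relation ${\leetinf}\circ{\subseteq}\circ{\geetinf}$ is a precongruence compatible with application, abstraction and head reduction, and refines the ``non-$\Omega$ head'' observation.

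For completeness, assume $M\leob N$; the natural witnesses are the infinite $\eta$-expansions (Nakajima trees) $U=\BT^\infty(M)$ and $V=\BT^\infty(N)$, which by construction satisfy $\BT(M)\leetinf\BT^\infty(M)$ and $\BT(N)\leetinf\BT^\infty(N)$. It then remains to establish the inclusion $\BT^\infty(M)\subseteq\BT^\infty(N)$, which I would prove contrapositively. If this inclusion fails, then at some finite position $w$ the tree $\BT^\infty(M)$ carries a genuine head variable of some arity while $\BT^\infty(N)$ is either $\Omega$ at $w$ or exhibits an incompatible head there; because we work with the maximal (infinite) $\eta$-expansions, such a mismatch cannot be repaired by any further $\eta$-expansion and so is a real difference rather than an $\eta$-artifact.

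From such a mismatch I would build a separating context by the Böhm-out technique: a context $C$ that navigates down to position $w$ by feeding projection combinators and fresh test arguments that successively expose the head variables along the branch leading to $w$, arranged so that $C\llc M\rrc\Da$ (driven by the genuine subtree of $\BT(M)$ at $w$) while $C\llc N\rrc$ reaches the $\Omega$ (or the clashing head) and fails to converge, contradicting $M\leob N$. The main obstacle is precisely this separation step: one must verify that the finite information witnessing the mismatch in the \emph{Nakajima} trees is actually realized, through finitely many head reductions and context-supplied $\eta$-expansions, on the actual terms $M$ and $N$, and then that the Böhm-out context can be written down so as to force convergence on one side and divergence on the other. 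This is where the interplay between $\eta$-expansion and the Böhm-out machinery is delicate, and where the finiteness of any single observation (only finitely many $\eta$-expansions ever matter) must be exploited to keep the construction effective.
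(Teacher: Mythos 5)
The paper offers no proof of this proposition: it is imported wholesale from Barendregt (Theorem~19.2.9). Your outline follows the same classical strategy as that source --- soundness of each link of the chain by reduction to finite observations, completeness by contraposition via a B\"ohm-out construction --- so the comparison below is with that classical argument rather than with anything in the paper.

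There is one concrete flaw and one over-quick step. The flaw is your choice of witnesses in the completeness direction: the Nakajima tree $\BT^\infty(M)$ $\eta$-expands every node \emph{infinitely}, so each of its nodes carries infinitely many abstractions and infinitely many arguments. It is therefore not a B\"ohm tree in the sense of Definition~\ref{def:BT} (which requires finite $n$ and $k$ at every node), and the coinductive relation $\leetinf$ of Definition~\ref{def:etainf} cannot relate $\BT(M)$ to it, since its application rule adds only finitely many abstractions per node. So $U=\BT^\infty(M)$ and $V=\BT^\infty(N)$ are not legal witnesses for the statement as written; the standard repair is to construct $U$ and $V$ by a \emph{mutual} coinduction on the pair $(\BT(M),\BT(N))$, $\eta$-expanding each node of one tree only as far as needed to match the arity of the corresponding node of the other, which keeps the witnesses finitely wide. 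The over-quick step is in soundness: ``by congruence of $\leob$ this extends to finite $\eta$-expansions occurring arbitrarily deep in a term'' ignores that, as the Remark following Example~\ref{ex:J} notes, an $\eta$-step performed deep inside a B\"ohm tree is in general not an $\eta$-step on any reduct of the term; one must first invoke the continuity/approximation theorem ($C\llc M\rrc\Da$ iff $C\llc X\rrc\Da$ for some $X\in\BTf(M)$) to reduce every observation to a finite approximant, and only then apply term-level $\eta$-validity --- which is what your compactness remark should be made to say precisely. Finally, the $\eta$-aware B\"ohm-out that turns a mismatch of the matched trees into a separating context is the genuinely hard part of Barendregt's argument; you correctly identify it as the obstacle but supply no construction for it, so the proposal stands as a correct high-level plan rather than a complete proof.
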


\begin{example} 
  In \Hst, we have the equivalence:
  $$\J\hspace{3em} \equivob\hspace{3em} \FixP\ (\lambda uxyz. x\ y\ (u\ z))$$
  \hspace{8em}
  \begin{tikzpicture}[description/.style={fill=white,inner sep=2pt},ampersand replacement=\&]
    \matrix (m) [matrix of math nodes, row sep=0.8em, column sep=0em, text height=1.5ex, text depth=0.25ex]
    { \lambda x_0x_1. x_0\ . \& \&
      \lambda x_0x_1y_1.x_0\ .\ . \hspace{-0.5em} \& \& \&
      \lambda x_0x_1y_1.x_0\ x_1\ . \\
      \lambda x_2.x_1\ . \& { \preceq_{\eta\infty}} \&
      \lambda x_2.x_1\ . \&  \hspace{-0.5em}\lambda x_2y_2.y_1\ x_2\ . \& { \succeq_{\eta\infty}} \&
      \lambda x_2y_2.y_1\ x_2\ . \\ 
      \lambda x_3.x_2\ . \& \&
      \lambda x_3.x_2\ . \&  \hspace{-0.5em}\lambda x_3y_3.y_2\ x_3 \& \&
      \lambda x_3y_3.y_2\ x_3\ . \\
      \vdots \& \&
      \vdots  \& \vdots \& \&
      \vdots\\
    };
    \path[-] (m-1-1) edge[] node[auto] {} (m-2-1);
    \path[-] (m-2-1) edge[] node[auto] {} (m-3-1);
    \path[-] (m-3-1) edge[] node[auto] {} (m-4-1);
    \path[-] (m-1-3) edge[] node[auto] {} (m-2-3);
    \path[-] (m-1-3) edge[] node[auto] {} (m-2-4);
    \path[-] (m-2-3) edge[] node[auto] {} (m-3-3);
    \path[-] (m-2-4) edge[] node[auto] {} (m-3-4);
    \path[-] (m-3-3) edge[] node[auto] {} (m-4-3);
    \path[-] (m-3-4) edge[] node[auto] {} (m-4-4);
    \path[-] (m-1-6) edge[] node[auto] {} (m-2-6);
    \path[-] (m-2-6) edge[] node[auto] {} (m-3-6);
    \path[-] (m-3-6) edge[] node[auto] {} (m-4-6);
  \end{tikzpicture}
\end{example}

\noindent The following trivial corollary will be rather useful for proving observational equivalences:

\begin{corollary}\label{cor:geetinfImpEq}
  For all $M,N\in\Lamb$,
  $$ M \geetinf N\ \Rta\ M\equivob\ N.$$
\end{corollary}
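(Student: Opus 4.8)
The plan is to read the claim off Proposition~\ref{prop:H*le} almost verbatim, deducing the two halves $M \leob N$ and $N \leob M$ separately and using only the reflexivity of the relations $\subseteq$ and $\geetinf$ on B\"ohm trees. By the abuse of notation introduced in Definition~\ref{def:etainf}, the hypothesis $M \geetinf N$ means $\BT(M) \geetinf \BT(N)$, equivalently $\BT(N) \leetinf \BT(M)$.

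First I would prove $M \leob N$. By Proposition~\ref{prop:H*le} it is enough to produce two B\"ohm trees $U, V$ with $\BT(M) \leetinf U \subseteq V \geetinf \BT(N)$, and I take $U = V = \BT(M)$: the left link $\BT(M) \leetinf \BT(M)$ holds by reflexivity, the middle link $\BT(M) \subseteq \BT(M)$ by reflexivity of $\subseteq$, and the right link $\BT(M) \geetinf \BT(N)$ is precisely the hypothesis. Symmetrically, for $N \leob M$ I must exhibit $U', V'$ with $\BT(N) \leetinf U' \subseteq V' \geetinf \BT(M)$; here I again take $U' = V' = \BT(M)$, so that the left link $\BT(N) \leetinf \BT(M)$ is the hypothesis read in the reverse direction, the inclusion $\BT(M) \subseteq \BT(M)$ is once more reflexivity, and the right link $\BT(M) \geetinf \BT(M)$ is reflexivity of $\geetinf$. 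Combining the two inequalities yields $M \equivob N$.

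The only auxiliary facts are the reflexivity of $\subseteq$ and of $\geetinf$, both of which are immediate coinductions on their defining rules: $\Omega \subseteq \Omega$ and $\Omega \geetinf \Omega$ settle the $\Omega$ case, while instantiating the structural rules with $m = 0$ and identical immediate subtrees propagates reflexivity through the coinductive step. I expect no genuine obstacle here; all the real content is already hidden inside Proposition~\ref{prop:H*le}, and this corollary is simply the degenerate instance where the two witnesses are taken equal to $\BT(M)$, so that neither a nontrivial $\eta$-expansion nor a strict inclusion is required.
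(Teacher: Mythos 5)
Your proof is correct and follows exactly the paper's route: instantiate Proposition~\ref{prop:H*le} with both witnesses equal to $\BT(M)$, using reflexivity of $\subseteq$ and $\geetinf$. The only difference is that you spell out the second chain $\BT(N)\leetinf\BT(M)\subseteq\BT(M)\geetinf\BT(M)$ for $N\leob M$, which the paper's one-line proof leaves implicit.
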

\begin{proof}
  By Proposition~\ref{prop:H*le} and since  $\BT(M)\leetinf \BT(M)\subseteq\BT(M)\geetinf\BT(N).$
\end{proof}

%

\subsubsection{Subclasses of B\"ohm trees} \quad \newline%
Before saying anything on interpretation of B\"ohm trees in a \Kweb, we define some subclasses of B\"ohm trees that will work as potential bases. Such bases can be used to interpret a B\"ohm tree in our models as the sup of the interpretations of its approximants.\footnote{We will see that as a coinductive structure, a B\"ohm trees may have several possible interpretations into a given model.}

The only base that appears in the literature is the class \BTf of finite B\"ohm trees. However, we will oppose it the larger classes \BTomf and \BTqf of $\Omega$-finite and quasi-finite B\"ohm trees. 
The~$\Omega$-finiteness when applied to an approximant of an actual term (via its translation into a B\"ohm tree) is a property that insure the recursivity of the tree (Lemma.~\ref{lm:omfBT}). The quasi-finite B\"ohm trees are the~$\Omega$-finite B\"ohm trees that are somehow ``stable'' with respect to $\leetinf$ and $\geetinf$ (Lemma.~\ref{lemma:distrEtinfSubset}).

\begin{definition}\label{def:f&omfBT}
  We define the following classes over B\"ohm trees:
  \begin{itemize}
  \item The set of \newdefsecprem{B\"ohm tree}{finite}s, denoted \newsym{\protect\BTf}, is the set of B\"ohm trees inductively generated by the grammar of Definition~\ref{def:BT} (or equivalently B\"ohm trees of finite height). Given a term~$M$, we denote~$\BTf(M)$ the set of finite B\"ohm trees $U$ such that $U\subseteq\BT(M)$.
  \item The set of \newdefsecprem{B\"ohm tree}{$\Omega$-finite}s, denoted \newsym{\protect\BTomf}, is the set of B\"ohm trees that contain a finite number of occurrences of $\Omega$. 
  \item The set of \newdefsecprem{B\"ohm tree}{quasi-finite}, denoted \newsym{\protect\BTqf}, is the set of those $\Omega$-finite B\"ohm trees having their number of occurences of each (free and bounded) variables recursively bounded. Formally, there is a recursive function $g$ such that variables abstracted at depth\footnotemark $n$ cannot occur at depth greater than $g(n)$.
  \end{itemize}
  Capital final Latin letters $X,Y,Z...$ will range over any of those classes of B\"ohm trees. We will use the notation \newsym{\protect\subseteq_f} (resp. \newsym{\protect\subseteq_{\Omega f}} and \newsym{\protect\subseteq_{qf}}) for the inclusion restricted to $\BTf\times\BT$ (resp.~$\BTomf\times\BT$ and~$\BTqf\times\BT$).
\end{definition}\footnotetext{We consider that free variables are ``abstracted'' at depth $0$.}

In particular, to any finite B\"ohm tree $U$ corresponds a term $M$ obtained by replacing every symbol $\Omega$ by the diverging term $\Om$. By abuse of notation, we may use one instead of the other.

\begin{example}
  The identity $\I$ corresponds to a finite B\"ohm tree and thus is in all three classes. The term $\lambda z.\FixP\ (\lambda ux.z\ u)$ has a B\"ohm tree that is $\Omega$-finite but not quasi-finite. The term $\FixP\ (\lambda ux. x\ u\ \Omega)$ has a B\"ohm tree that is neither of these classes. 
    \begin{align*}
      \BT(\lambda z.\FixP\ (\lambda ux.z\ u)) = \lambda zx_1.z\ & .             &      \BT(\FixP\ (\lambda ux. x\ u\ \Omega)) = \lambda x_1. x_1\ & .\ \Omega\\
                      & |             &                          & |  \\
      \lambda x_2.z\  & .             &         \lambda x_2 .x_2\ & .\ \Omega \\
                      & |             &                          & | \\
      \lambda x_3. z\ & .             &         \lambda x_3.x_3\  & .\ \Omega \\
                      & \vdots        &                           & \vdots \\
    \end{align*}
\end{example}


\begin{lemma}\label{lm:omfBT}
  For all terms $M$, if $X\in\BTomf$ and $X\subseteq\BT(M)$, then $X$ is a recursive B\"ohm tree.
\end{lemma}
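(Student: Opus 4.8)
The plan is to exhibit an explicit algorithm computing the labelling function of $X$, using the fixed term $M$ and the finite set of $\Omega$-positions of $X$ as hard-coded data. Since $X\in\BTomf$, only finitely many nodes of $X$ carry the label $\Omega$; write $P=\{w_1,\dots,w_m\}\subseteq\Nat^*$ for the finite set of their addresses, an address being the word over positive integers that records the successive son-choices from the root. Away from $P$ the tree $X$ coincides with $\BT(M)$: by definition of the inclusion $X\subseteq\BT(M)$, wherever $X$ is not $\Omega$ it carries exactly the same list of abstractions, the same head variable and the same number of sons as $\BT(M)$, and recursively so on the sons. Hence the task reduces to computing $\BT(M)$ on the non-$\Omega$ region of $X$, where — as I check below — head reduction always converges. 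The finitely many $\Omega$-positions and the term $M$ are finite data, so there is no obstacle to baking them into a Turing machine.

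Concretely, to decide the label of $X$ at an address $w=j_1\cdots j_\ell\in\Nat^*$ I would proceed as follows. First, if $w=w_i$ for some $i\le m$, output $\Omega$; if some $w_i$ is a strict prefix of $w$, report that no node sits at $w$ (an $\Omega$-leaf has no descendants). Both tests are finite since $P$ is fixed. Otherwise, follow the path $w$ inside $M$: head-reduce $M$ to a head-normal form $\lambda\vec x.y\,N_1\cdots N_k$; if $\ell=0$ read off this label, if $j_1>k$ report that no node sits at $w$, and otherwise recurse on $N_{j_1}$ with the remaining address $j_2\cdots j_\ell$. When it returns a label, this is the label of $\BT(M)$ at $w$, which by the previous paragraph is the required label of $X$ at $w$.

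The delicate point — and the only place where the hypothesis $X\subseteq\BT(M)$ does real work — is to argue that the procedure always halts. In the last case no strict prefix of $w$ lies in $P$ (otherwise we would be in the second case), so every \emph{existing} proper ancestor of $w$ is a non-$\Omega$ node of $X$; by $X\subseteq\BT(M)$ the corresponding subtree of $\BT(M)$ is non-$\Omega$, i.e. the residual subterm reached along the path head-converges, so each head reduction performed by the algorithm reaches a head-normal form. Thus either all prefixes of $w$ exist, in which case every head reduction on the path terminates and the label at $w$ is returned after $\ell$ steps; or $w$ is not a node, in which case the first failing step is an arity overflow ($j_i$ exceeding the arity of an existing non-$\Omega$ node), which is detected after a converging head reduction, so the algorithm halts with ``no node''. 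The essential mechanism, worth stressing, is that $\BT(M)$ is only \emph{semi}-recursive — its non-$\Omega$ part is enumerable by head reduction while detecting $\Omega$-nodes is undecidable — and inclusion into an $\Omega$-finite tree confines every undecidable $\Omega$-test to the finite, hard-coded frontier $P$. Correspondingly, the hypothesis $X\subseteq\BT(M)$ is indispensable: a free-standing $\Omega$-finite B\"ohm tree need not be recursive, as the non-recursive B\"ohm tree exhibited after Definition~\ref{def:BT} shows.
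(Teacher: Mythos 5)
Your proof is correct and follows essentially the same route as the paper's: the finitely many $\Omega$-positions of $X$ are hard-coded (the paper phrases this as guessing them ``by an oracle that is finite thus recursive''), and elsewhere one computes $\BT(M)$ by iterated head reduction, productivity being guaranteed because every $\Omega$ of $\BT(M)$ lies at or below one of the hard-coded positions and so is never explored. Your write-up merely makes explicit the termination argument that the paper compresses into the remark that each non-terminating computation is ``shaded by a guessed $\Omega$ of $X$''.
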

\begin{proof}
  First remark that only $X$ has to be recursive, not the proof of $X\subseteq\BT(M)$. Moreover, we only have to show that there exists a recursive construction of $X$, we do not have to generate it constructively.

   There is a finite number of $\Omega$'s in $X$ whose positions $p\in P$ can be guessed beforehand by an oracle that is finite thus recursive. After that, it suffices to compute the B\"ohm tree of $M$ except in these positions where we directly put an $\Omega$. This way the program is always productive as any $\Omega$ of $M$ ({\em i.e.}, any non terminating part of the process of computation of $\BT(M)$) will be shaded by a guessed $\Omega$ of $X$ (potentially far above).
\end{proof}

\begin{lemma}\label{lemma:OmegaBijEtInf}
  Let $U,V\in\BT$. If $U\leetinf V$ (def.~\ref{def:etainf}), there is a bijection between the $\Omega$'s in $U$ and those in $V$.
\end{lemma}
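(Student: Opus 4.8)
The plan is to prove, by co-induction, the auxiliary fact that an infinite $\eta$-expansion of a bare variable contains no $\Omega$, and then to read off the required bijection by co-recursion on the derivation of $\geetinf$. First I would note that the hypothesis $U\leetinf V$ is, by definition of $\leetinf$, exactly $V\geetinf U$; so it suffices to produce, for any $V\geetinf U$, a bijection between the occurrences of $\Omega$ in $V$ and those in $U$.

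The key sub-lemma is: whenever $W\geetinf z$ for a variable $z$, the tree $W$ contains no occurrence of $\Omega$. Indeed, the derivation of $W\geetinf z$ cannot end with rule $(\eta\infty\omega)$, since that rule's right-hand side must be $\Omega\neq z$; hence it ends with $(\eta\infty\at)$, forcing $W=\lambda x_1\ldots x_m.z\ W_1\cdots W_m$ with $W_i\geetinf x_i$ for each $i$. Since ``$\Omega$-freeness'' is itself a co-inductive (greatest-fixpoint) predicate --- a tree is $\Omega$-free iff its root is not $\Omega$ and all its immediate subtrees are $\Omega$-free --- the set $\{W\mid W\geetinf z\text{ for some variable }z\}$ is a post-fixpoint of the associated functional, and the co-induction principle yields the claim.

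For the main construction I would define the bijection $\phi$ by co-recursion, distinguishing the two rules that can justify $V\geetinf U$. If the last rule is $(\eta\infty\omega)$, then $V=U=\Omega$, each side has exactly one $\Omega$ (its root), and $\phi$ maps root to root. If the last rule is $(\eta\infty\at)$, then
\[
 V=\lambda x_1\ldots x_{n+m}.y\ P_1\cdots P_{k+m},\qquad U=\lambda x_1\ldots x_n.y\ Q_1\cdots Q_k,
\]
with $P_i\geetinf Q_i$ for $i\le k$ and $P_{k+i}\geetinf x_{n+i}$ for $i\le m$. The roots of $V$ and $U$ are not $\Omega$, so every $\Omega$ of $V$ lies inside some $P_j$ and every $\Omega$ of $U$ inside some $Q_i$. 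By the sub-lemma the extra arguments $P_{k+1},\ldots,P_{k+m}$ are $\Omega$-free and contribute nothing; thus the $\Omega$'s of $V$ are precisely those of $P_1,\ldots,P_k$. Taking the disjoint union over $i\le k$ of the co-recursively obtained bijections $\phi_{P_i,Q_i}$ (each a bijection between the $\Omega$'s of $P_i$ and those of $Q_i$) produces the desired bijection between the $\Omega$'s of $V$ and those of $U$. Each recursive call occurs strictly below a constructor, so the co-recursion is guarded and well-defined.

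The main obstacle is purely the co-inductive bookkeeping: since B\"ohm trees may be infinite, neither the $\Omega$-freeness sub-lemma nor the bijection can be obtained by ordinary structural induction, and one must verify that gluing the component maps yields a genuine global injection and surjection on $\Omega$-occurrences. This reduces to observing that $P_1,\ldots,P_k$ (resp. $Q_1,\ldots,Q_k$) occupy disjoint positions and, once the $\Omega$-free expanded arguments are discarded, exhaust all $\Omega$-occurrences, so injectivity and surjectivity follow component-wise from the co-inductive hypothesis.
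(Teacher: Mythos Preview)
Your argument is correct and is essentially a careful unpacking of the paper's one-line proof. The paper simply observes that the rule system defining $\geetinf$ is deterministic, so the derivation of $V\geetinf U$ is unique, and then declares that the occurrences of rule $(\eta\infty\omega)$ in this unique derivation constitute the bijection. Your explicit sub-lemma (that $W\geetinf z$ forces $W$ to be $\Omega$-free) is precisely the detail the paper leaves implicit: it is what guarantees that no $\Omega$ of $V$ hides inside the extra $\eta$-expanded arguments, so that every $\Omega$ on either side is indeed witnessed by an $(\eta\infty\omega)$ node of the derivation.
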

\begin{proof}
  Recall that $U\leetinf V$ is the relation whose proofs range over the coinductive sequents generated by
   \begin{center}
    \AxiomC{$\vphantom{A}$}
    \RightLabel{\footnotesize  $(\mathtt{\eta\infty\omega})$}
    \UnaryInfC{$\Omega\succeq_\eta\Omega$}
    \DisplayProof \hspace{20pt}
    \AxiomC{$\forall i\le k,\ U_i\geetinf V_i$}
    \AxiomC{$\forall i\le m,\ U_{k+i}\geetinf x_{n+i}$}
    \RightLabel{\footnotesize  $(\mathtt{\eta\infty\at})$}
    \BinaryInfC{$\lambda x_1...x_{n+m}.y\ U_1\cdots U_{k+m}\geetinf\lambda x_1...x_n.y\ V_1\cdots V_k$}
    \DisplayProof 
  \end{center} 
  Remark that this system is deterministic so that a sequent $U\leetinf V$ has at most one proof. In particular the occurrences of rule $(\mathtt{\eta\infty\omega})$ describe the pursued bijection.
\end{proof}

\begin{lemma}\label{lm:invarianceEtinfRB}
  For all $U,V\in \BT$ such that $U\leetinf V$,  $U\in\BTqf$ iff $V\in\BTqf$.
\end{lemma}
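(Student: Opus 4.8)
The plan is to reduce the statement to bookkeeping about where variables occur, leaning on the bijection of $\Omega$-nodes already furnished by Lemma~\ref{lemma:OmegaBijEtInf}. Since $U\leetinf V$ means $V\geetinf U$, i.e. $V$ is an infinite $\eta$-expansion of the reduct $U$, I would first analyse the shape of a (coinductive) proof of $V\geetinf U$ and read off a depth-preserving correspondence between the nodes of $U$ and a subset of the nodes of $V$. Concretely, each application of rule $(\mathtt{\eta\infty\at})$ matches a node $\lambda x_1\dots x_n.y\ V_1\cdots V_k$ of $U$ with a node $\lambda x_1\dots x_{n+m}.y\ W_1\cdots W_{k+m}$ of $V$ having the same head $y$ and satisfying $W_i\geetinf V_i$ for $i\le k$; hence the skeleton of $U$ embeds into $V$ preserving head variables and, crucially, the depth of each matched node (a child at depth $d{+}1$ in $U$ is matched with a child at depth $d{+}1$ in $V$). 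I would establish this correspondence by coinduction on the proof of $V\geetinf U$.

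The second ingredient I would isolate is the structure of the purely expansive material. The extra arguments $W_{k+i}$ satisfy $W_{k+i}\geetinf x_{n+i}$, so each is an infinite $\eta$-expansion of a single freshly bound variable. Unfolding the rule on such a variable-tree shows that inside it every bound variable occurs exactly once, as the head of the corresponding argument one level below, i.e. at \emph{abstraction depth plus one}; moreover no variable bound outside the subtree ever occurs inside it, since the expansion of $x_{n+i}$ mentions only $x_{n+i}$ and fresh variables. Combined with the first ingredient this yields the two facts I need: (i) every variable common to $U$ and $V$ is abstracted at the same depth and has exactly the same occurrence depths in $U$ and in $V$; (ii) every variable of $V$ not already in $U$ occurs at depth equal to its abstraction depth plus one.

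With these facts the two implications are immediate. For the $\Omega$-finiteness clause I invoke Lemma~\ref{lemma:OmegaBijEtInf}, which gives a bijection between the $\Omega$'s of $U$ and of $V$, so one is $\Omega$-finite iff the other is. For the recursive-bound clause: if $V\in\BTqf$ with recursive bound $g$, then by (i) the same $g$ bounds the occurrence depths of all variables of $U$, whence $U\in\BTqf$; conversely, if $U\in\BTqf$ with bound $g$, then by (i) $g$ still bounds the variables of $V$ shared with $U$, and by (ii) the new variables of $V$ are bounded by $n\mapsto n+1$, so the recursive function $n\mapsto\max(g(n),n+1)$ witnesses $V\in\BTqf$.

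The main obstacle I anticipate is making the depth-preserving correspondence precise in the coinductive setting and checking that the $\eta$-expansion subtrees are genuinely ``local'': that inserting them never shifts an occurrence of a shared variable to a different depth and never creates a stray occurrence of an outer variable. Once this locality is pinned down the recursive bounds transfer essentially verbatim, and no new recursion-theoretic content is required beyond closing $g$ under the map $n\mapsto\max(g(n),n+1)$.
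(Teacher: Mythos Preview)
Your proposal is correct and follows essentially the same approach as the paper: invoke Lemma~\ref{lemma:OmegaBijEtInf} for $\Omega$-finiteness, then observe that the $\eta\infty$-expansion preserves the depth of all shared variables while every newly introduced variable is used exactly once at relative depth~$1$ from its binder. Your bound $n\mapsto\max(g(n),n{+}1)$ is in fact the careful reading of the paper's $n\mapsto\max(g(n),1)$ under the absolute-depth convention stated in Definition~\ref{def:f&omfBT}.
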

\begin{proof}
  By Lemma~\ref{lemma:OmegaBijEtInf}, we know that $U\in\BTomf$ iff $V\in\BTomf$.

  It is easy to see that if variable occurrences are bounded by $g$ in $U$, then they will be bounded by~$(n\mapsto max(g(n),1))$ in $V$ and conversly. Indeed an $\eta\infty$-expansion/reduction will not change the depth of any variable, and will only delete/introduce abstraction whose variable will be used exactly once at depth $1$.
\end{proof}

\begin{lemma}\label{lemma:distrEtinfSubset}
  Both ordering $\leetinf$ and $\geetinf$ distribute over $\subseteq_{qf}$, and the ordering~$\geetinf$ distributes over~$\subseteq_f$:
  \begin{itemize}
  \item For all $U,V\in\BT$ and $X\in\BTqf$ such that $X\subseteq_{qf} U\leetinf V$, there is $Y\in\BTqf$ such that\footnote{This is a commuting diagram, the $\rightsquigarrow$ arrow only recalls that $Y$ is obtained from $X$, $U$ and $V$.}
  \begin{alignat*}6
    U\ \ \ & \leetinf & V\ \ \ \\
    \rotatebox[origin=c]{90}{$\subseteq$}_{qf} & \ \rotatebox[origin=c]{-45}{$\rightsquigarrow$} & \rotatebox[origin=c]{90}{$\subseteq$}_{qf} \\
    X\ \ \ & \leetinf & \ Y.\ \ \ 
  \end{alignat*}
  \item For all $U,V\in\BT$ and $X\in\BTqf$ such that $X\subseteq_{qf} U\geetinf V$, there is $Y\in\BTqf$ such that
  \begin{alignat*}6
    U\ \ \ & \geetinf & V\ \ \ \\
    \rotatebox[origin=c]{90}{$\subseteq$}_{qf} & \ \rotatebox[origin=c]{-45}{$\rightsquigarrow$} & \rotatebox[origin=c]{90}{$\subseteq$}_{qf} \\
    X\ \ \ & \geetinf & \ Y.\ \ \ 
  \end{alignat*}
  \item For all $U,V\in\BT$ and $X\in\BTf$ such that $X\subseteq_f U\geetinf V$, there is $Y\in\BTf$ such that 
  \begin{alignat*}6
    U\ \ \ & \geetinf & V\ \ \ \\
    \rotatebox[origin=c]{90}{$\subseteq$}_{f\;} & \ \rotatebox[origin=c]{-45}{$\rightsquigarrow$} & \rotatebox[origin=c]{90}{$\subseteq$}_{f\;} \\
    X\ \ \ & \geetinf & \ Y.\ \ \ 
  \end{alignat*}
  \end{itemize}
\end{lemma}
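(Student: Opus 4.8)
The plan is to prove all three statements uniformly by coinduction on B\"ohm trees, constructing $Y$ by a single traversal that follows the (unique) derivation of the infinite-$\eta$ relation between $U$ and $V$ while respecting the $\Omega$-positions of $X$. First I fix notation for the data. Since $\leetinf$ and $\geetinf$ are deterministic (Lemma~\ref{lemma:OmegaBijEtInf}), the derivation witnessing the horizontal edge ($U \leetinf V$ in the first item, $U \geetinf V$ in the last two) is unique, and it induces a canonical correspondence between the nodes of $U$ and those of $V$: each use of rule $(\mathtt{\eta\infty\at})$ matches the common prefix of abstractions and arguments and records the extra abstractions $x_{n+1},\dots,x_{n+m}$ together with the $m$ extra arguments, which are themselves $\eta$-expansions of these variables. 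Along this correspondence I transport the inclusion $X \subseteq U$.

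Next I define $Y$ coinductively by case analysis on $X$. If $X=\Omega$ I set $Y=\Omega$; then the bottom edge holds by rule $(\mathtt{\eta\infty\omega})$, $Y=\Omega \subseteq V$ is immediate, and $\Omega$ lies in every subclass. Otherwise $X$ and $U$ share a head, so $X=\lambda x_1\dots x_n.y\,X_1\cdots X_k$ with $X_i \subseteq U_i$; I recurse on the $k$ common argument subtrees, each sitting over an edge $U_i \leetinf V_i$ (resp. $U_i \geetinf V_i$), and I treat the $m$ $\eta$-introduced argument positions by copying from the more expanded side. In the expansion case (first item) I set the new arguments of $Y$ equal to the matching subtrees $V_{k+i}$ of $V$; this yields simultaneously $Y_{k+i}=V_{k+i}\subseteq V_{k+i}$ and $Y_{k+i}\geetinf x_{n+i}$, which are exactly the side conditions required for $X \leetinf Y$ and for $Y \subseteq V$. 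In the reduction cases I instead discard these positions, which is legitimate precisely when the subtree of $X$ lying over an $\eta$-redex is already an $\eta$-expansion of the bound variable, so that the side condition $X_{k+i}\geetinf x_{n+i}$ is available. The two remaining edges, $X\mathrel{(\eta)}Y$ and $Y \subseteq V$, are then read off directly, since the same recursion produces their derivations.

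Finally I upgrade $Y$ to the correct subclass. By construction $X$ and $Y$ are related by $\leetinf$ (resp. $\geetinf$), so Lemma~\ref{lemma:OmegaBijEtInf} supplies a bijection between their occurrences of $\Omega$; hence $Y$ is $\Omega$-finite whenever $X$ is, which handles the $\BTf$ bookkeeping in the last item, and Lemma~\ref{lm:invarianceEtinfRB} transfers the recursive bound on variable depth, giving $Y \in \BTqf$ whenever $X \in \BTqf$. The hard part is the treatment of the $\eta$-introduced argument positions: in the expansion direction one must verify that copying $V$ there is compatible with all three relations at once, while in the reduction direction one must ensure that the relevant subtrees of $X$ are genuine $\eta$-expansions of the discarded variables rather than arbitrary lower bounds. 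It is here that the determinism of the $\eta$-derivation and the finiteness/quasi-finiteness hypotheses are essential, and isolating exactly which lower bounds $X$ can arise over an $\eta$-redex is the delicate bookkeeping the proof must pin down.
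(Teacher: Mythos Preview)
Your overall coinductive construction and your handling of the first item (copying the extra arguments $V_{k+1},\dots,V_{k+m}$ into $Y$) are exactly what the paper does, and your appeal to Lemma~\ref{lm:invarianceEtinfRB} to recover $Y\in\BTqf$ also matches. The difficulty you isolate in the reduction direction, however, is not just bookkeeping to be filled in: the side condition $X_{k+i}\geetinf x_{n+i}$ genuinely fails in general, and neither determinism of the $\eta$-derivation nor quasi-finiteness of $X$ rescues it. Concretely, take $U=\lambda x_1x_2.\,y\,x_1\,x_2$, $V=\lambda x_1.\,y\,x_1$ (so $U\geetinf V$) and $X=\lambda x_1x_2.\,y\,\Omega\,\Omega\in\BTf\subseteq\BTqf$ (so $X\subseteq U$). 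No $Y$ with $X\geetinf Y\subseteq V$ exists: $Y\neq\Omega$ forces $Y=\lambda x_1.\,y\,Y_1$, whence rule $(\mathtt{\eta\infty\at})$ demands $\Omega\geetinf x_2$, which is false. The point is precisely that $\subseteq$ allows an $\Omega$ to sit in an $\eta$-redex position of $U$, and once that happens no $\eta$-contraction of $X$ can absorb it. So the step you describe as ``delicate bookkeeping'' is in fact an obstruction, not a detail.

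The paper's written proof differs from yours only in that it also applies the coinductive hypothesis to these extra positions, obtaining auxiliary $Y_{m+i}$ with $X_{m+i}\geetinf Y_{m+i}\subseteq x_{n+i}$, and then sets $Y=\lambda x_1\dots x_{n+k}.\,y\,Y_1\cdots Y_m$. This does not close the gap either: that $Y$ has $n{+}k$ head abstractions against the $n$ of $V$, so $Y\subseteq V$ fails; and it has $m$ arguments against the $m{+}k$ of $X$, so rule $(\mathtt{\eta\infty\at})$ (which requires the surplus of abstractions and of arguments to coincide) does not give $X\geetinf Y$. In short, your suspicion is well founded: the paper's argument shares the same defect at exactly the point you flagged, and items two and three as stated do not go through by the construction either of you describe.
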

\proof \quad
  \begin{itemize}
  \item Distribution of $\leetinf$ over $\subseteq_{qf}$:\\
    We create $Y\in\BT$ such that $X\leetinf Y\subseteq V$ by co-induction (remark that, by Lemma~\ref{lm:invarianceEtinfRB}, we obtain $V\in\BTqf$):
    \begin{itemize}
    \item $X=\Omega$: put $Y=\Omega$.
    \item Otherwise: we have
      \begin{align*}
        \quad
        X &= \lambda x_1...x_{n}.y\ X_1\cdots X_{m}, &
        U &= \lambda x_1...x_{n}.y\ U_1\cdots U_{m}, &
        V &= \lambda x_1...x_{n+k}.y\ V_1\cdots V_{m+k}, 
      \end{align*}
      such that $X_i\subseteq_{qf}U_i\leetinf V_i$ for $i\le m$ and $x_{n+i}\leetinf V_{m+i}$ (thus $V_{m+i}\in\BTqf$) for $i\le k$. By co-induction hypothesis we have $(Y_i)_{i\le m}$ such that $X_i\leetinf Y_i\subseteq V_i$ for $i\le m$, we thus set $$Y=\lambda x_1...x_{n+k}.y\ Y_1\cdots Y_{m}\ V_{m+1}\cdots V_{m+k}.$$
    \end{itemize}
  \item Distribution of $\geetinf$ over $\subseteq_{qf}$:\\
    We create $Y\in\BT$ such that $X\geetinf Y\subseteq V$ by co-induction, then, by Lemma~\ref{lm:invarianceEtinfRB}, we obtain that~$V\in\BTqf$:
    \begin{itemize}
    \item $X=\Omega$: put $Y=\Omega$.
    \item Otherwise: we have
      \begin{align*}
        \quad
        X &= \lambda x_1...x_{n+k}.y\ X_1\cdots X_{m+k},  &
        U &= \lambda x_1...x_{n+k}.y\ U_1\cdots U_{m+k},  &
        V &= \lambda x_1...x_{n}.y\ V_1\cdots V_{m},
      \end{align*}
      such that $X_i\subseteq_{qf}U_i\geetinf V_i$ for $i\le m$ and $X_{m+i}\subseteq_{qf} U_{m+i}\geetinf x_{n+i}$ for $i\le k$. By co-induction hypothesis we have $(Y_i)_{i\le m+k}$ such that $X_i\geetinf Y_i\subseteq V_i$ for $i\le m$, and~$X_{m+i}\geetinf Y_{m+i}\subseteq x_{n+i}$ for $i\le k$;  we thus set 
      $$Y=\lambda x_1...x_{n+k}.y\ Y_1\cdots Y_{m}.$$
    \end{itemize}
  \item Distribution of $\geetinf$ over $\subseteq_{f}$:\\
    We create $Y\in\BTf$ similarly to the previous case except that we proceed by induction on $X$:
    \begin{itemize}
    \item $X=\Omega$: put $Y=\Omega$.
    \item Otherwise: we have
      \begin{align*}
        \quad
        X &= \lambda x_1...x_{n+k}.y\ X_1\cdots X_{m+k},  &
        U &= \lambda x_1...x_{n+k}.y\ U_1\cdots U_{m+k},  &
        V &= \lambda x_1...x_{n}.y\ V_1\cdots V_{m},
      \end{align*}
      such that $X_i\subseteq_{f}U_i\geetinf V_i$ for $i\le m$ and $X_{m+i}\subseteq_f U_{m+i}\geetinf x_{n+i}$ for $i\le k$. By co-induction hypothesis we have $(Y_i)_{i\le m+k}$ such that $X_i\geetinf Y_i\subseteq_{f}V_i$ for $i\le m$, and~$X_{m+i}\geetinf Y_{m+i}\subseteq_f x_{n+i}$ for $i\le k$;  we thus set 
      $$Y=\lambda x_1...x_{n+k}.y\ Y_1\cdots Y_{m}.\eqno{\qEd}$$
    \end{itemize}
  \end{itemize}
\todo{counter ex. for the last case?}

\subsubsection{Interpretations of B\"ohm trees}\label{sssec:BT2int} \quad \newline%
B\"ohm trees can be seen as normal forms of infinite depth. As such, one can define an interpretation of B\"ohm trees in a model via fixponts. However, there is no {\em a priori} reason to choose one specific fixpoint. We will formalize the notion of interpretation of B\"ohm trees in Definition~\ref{def:BTCohInterpret}. Then, using the description of such fixpoints, we will see in Propsition~\ref{prop:compLattOfCofInt} that the set of interpretations forms a complete lattice.

The minimal interpretation, called the inductive interpretation (Def.~\ref{def:(co)indInt}), is the canonical choice and has been used often in the literature to describe the approximation property (Def.~\ref{def:approximationProp}). Roughly speaking, the approximation property states the coherence of the interpretation of terms and the inductive interpretation of B\"ohm trees.

The complete lattice of interpretations is richer than the sole inductive interpretation. Another canonical interpretation is the maximal one, called co-inductive interpretation (Def.~\ref{def:(co)indInt}). Unfortunately, no equivalent version of approximation property can be given for the co-inductive interpretation (more exactly, no \Kweb can satisfy it).

However, we can look for an interpretation that is both, as large as possible and with a useful notion of coherence with the \Lcalcul. We found the quasi-finite interpretation (Def.~\ref{def:BTinterpretQF}) that is basically the minimal interpretation whose restriction to quasi-finite B\"ohm trees corresponds to the co-inductive interpretation. The property stating the coherence of interpretations is the quasi-approximation property (Def.~\ref{def:quasi-approxTh}). We will see later on that, in the presence of the approximation property and extensionality, the quasi-approximation property is equivalent to hyperimmunity and to full abstraction for \Hst.

\begin{definition}\label{def:BTCohInterpret}
  Let $D$ be a \Kweb. We call \newdefpremsec{proto-interpretation}{of B\"ohm trees} any total function~$\llb - \rrb_*$ that maps elements $U\in\BT$ to initial segments of $D^{\FV(U)}\Rta D$ (where $\FV(U)$ denotes the free variables of $U$).

  An \newdefpremsec{interpretation}{of B\"ohm trees} is a proto-interpretation \newsymprem{\protect\llb.\protect\rrb_*}{for B\"ohm trees} respecting the following:
  \begin{itemize}
  \item The interpretation of $\Omega$ is always empty:
    $$ \llb\Omega\rrb_*^{\vec x}=\emptyset. $$
  \item The interpretation of an abstraction $\lambda y. U$ satisfies:
    $$ \llb\lambda y.U\rrb_*^{\vec x}=\{(\vec a,b\cons\alpha) \mid (\vec a b,\alpha)\in\llb U\rrb_*^{\vec x y}\}. $$
  \item The interpretation of a list of applications $x_i\ U_1\cdots U_n$ (for $n\ge 0$), satisfies:
    $$ \llb x_i\ U_1\cdots U_n\rrb_*^{\vec x} = \{(\vec a,\alpha) \mid \exists b_1\cons\cdots\cons b_n\cons\alpha\le \alpha'\in a_i, \forall j\le n,\forall \beta\in b_j,(\vec a,\beta)\in \llb U_j\rrb_*^{\vec x}\}$$
  \end{itemize}
\end{definition}

\todo{example?}

\begin{remark}
  The different interpretations coincide on finite B\"ohm trees, thus we can write $\llb X\rrb^{\bar x}$ \newsyminvinv{\protect\llb.\protect\rrb^{\protect\vec x}}{for finite B\"ohm trees} for any $X\in\BTf$ without ambiguity, independently of the interpretation. Moreover, if the model is sensible, $\llb X \rrb^{\vec x}$ is the same as the interpretation of $X$ considered as a $\lambda$-term (by replacing occurrences of $\Omega$ by the diverging term \Om).
\end{remark}

The interpretations differ on the infinite B\"ohm trees. Fortunately, the set of interpretations forms a complete lattice.

\begin{prop} \label{prop:compLattOfCofInt}
  The poset of interpretations (with pointwise inclusion) is a complete lattice.
\end{prop}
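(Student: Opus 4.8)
The plan is to exhibit the interpretations as exactly the fixpoints of a monotone operator on a complete lattice, and then invoke the Knaster--Tarski fixpoint theorem.

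First I would fix the ambient lattice. A proto-interpretation assigns to each B\"ohm tree $U$ (in each context $\vec x\supseteq\FV(U)$) an element of $\mathcal{I}(D^{\vec x}\Rta D)$. Since each $\mathcal{I}(D^{\vec x}\Rta D)$ is a complete lattice (initial segments ordered by inclusion, with sups given by unions and infs by intersections, as recalled in Section~\ref{sec:posets}), the set $\Phi$ of all proto-interpretations, ordered by pointwise inclusion, is a product of complete lattices and hence is itself a complete lattice.

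Next I would rephrase the three defining clauses of Definition~\ref{def:BTCohInterpret} as a single operator. Define $\Psi:\Phi\rta\Phi$ by setting $\Psi(\phi)(\Omega)^{\vec x}=\emptyset$, then $\Psi(\phi)(\lambda y.U)^{\vec x}=\{(\vec a,b\cons\alpha)\mid(\vec ab,\alpha)\in\phi(U)^{\vec x y}\}$, and finally letting $\Psi(\phi)(x_i\ U_1\cdots U_n)^{\vec x}$ be the right-hand side of the third clause with $\phi(U_j)$ in place of $\llb U_j\rrb_*$. Since every B\"ohm tree is either $\Omega$ or decomposes uniquely as a finite string of abstractions applied to a head variable followed by finitely many immediate subtrees, these clauses determine $\Psi(\phi)(U)$ for every $U$ from the values of $\phi$ on the immediate subtrees of $U$. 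By construction a proto-interpretation $\phi$ is an interpretation precisely when $\Psi(\phi)=\phi$; thus the poset of interpretations is exactly the poset of fixpoints of $\Psi$, with the order inherited from $\Phi$.

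It then remains to check that $\Psi$ is monotone, which is routine: the abstraction clause is a union indexed by membership in $\phi(U)$, and the application clause only requires membership of finitely many elements in the various $\phi(U_j)$ under an existential over $b_1,\dots,b_n$, so enlarging $\phi$ can only enlarge both resulting sets. Monotonicity of $\Psi$ on the complete lattice $\Phi$ lets me apply the Knaster--Tarski theorem, which asserts that the set of fixpoints of a monotone endofunction of a complete lattice is again a complete lattice under the induced order; this yields the claim. The point to be careful about --- and the reason a more naive argument fails --- is that the supremum of a family of interpretations in the fixpoint lattice is \emph{not} their pointwise union: the universal quantifier $\forall\beta\in b_j$ in the application clause means $\Psi$ preserves only directed unions, so the pointwise union of interpretations need not satisfy the coherence clauses. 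Knaster--Tarski sidesteps this by computing the sup as the least fixpoint above the union rather than the union itself; the only remaining nuisance is the bookkeeping of free-variable contexts and weakening, which I would absorb by indexing $\Phi$ over pairs $(U,\vec x)$ with $\FV(U)\subseteq\vec x$.
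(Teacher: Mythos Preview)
Your proposal is correct and follows essentially the same route as the paper: the paper also defines an operator (called $\zeta$) on the complete lattice of proto-interpretations by the three clauses, observes that interpretations are exactly its fixpoints, and concludes via the fact that the fixpoints of such an operator form a complete lattice. The only cosmetic difference is that the paper verifies Scott-continuity of $\zeta$ (checking that the third clause preserves directed sups because each $b_j$ is finite) and appeals to the folklore that fixpoints of a Scott-continuous endomap form a complete lattice, whereas you check only monotonicity and invoke Knaster--Tarski directly; both are valid and yield the same result.
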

\begin{proof}
  We show that the set of the interpretation is the set of the fixpoints of a Scott-continuous function $\zeta$ on the complete lattice of proto-interpretations (with pointwise order).

  The function \newsym{\protect\zeta} maps a proto-interpretation $\llb .\rrb_*$ to the proto-interpretation $\llb .\rrb_{\zeta(*)}$ defined as follows:
  \begin{itemize}
  \item The interpretation of $\Omega$ is always empty:
    $$ \llb\Omega\rrb_{\zeta(*)}=\emptyset. $$
  \item The interpretation of $\lambda y. U$ is the same as for $\lambda$-terms:
    $$ \llb\lambda y.U\rrb_{\zeta(*)}^{\vec x}=\{(\vec a,b\cons\alpha) \mid (\vec a b,\alpha)\in\llb U\rrb_{*}^{\vec x y}\}. $$
  \item The interpretation of $x_i\ U_1\cdots U_n$ satisfies:
    $$ \llb x_i\ U_1\cdots U_n\rrb_{\zeta(*)}^{\vec x} = \{(\vec a,\alpha) \mid \exists b_1\cons\cdots\cons b_n\cons\alpha\le \alpha'\in a_i, \forall j\le n,\forall \beta\in b_j,(\vec a,\beta)\in \llb U_j\rrb_*^{\vec x}\},$$
  \end{itemize}
  The two first equations trivialy preserve any sup. And the third equation preserves the directed sup since all $b_j$ are finite. These three equations preserve the directed sups, so that $\zeta$ is continuous. It is folklore that the set of fixpoints of a Scott-continuous function form a complete lattice. 
\end{proof}

\begin{definition} \label{def:(co)indInt}
  The minimal interpretation is the \newdeftheprem{interpretation}{of B\"ohm trees}{inductive} \newsyminv{\protect\llb .\protect\rrb^{\protect\vec x}_{ind}}
  $$\llb U\rrb^{\vec x}_{ind} =\bigcup_{X\subseteq U \atop X\in\BTf}\llb X\rrb^{\vec x}.$$

\noindent The maximal interpretation is called the \newdeftheprem{interpretation}{of B\"ohm trees}{co-inductive} and denoted \newsym{\protect\llb .\protect\rrb^{\protect\vec x}_{coind}}.
\end{definition}
 
\begin{figure*}
  \caption{Intersection type system for B\"ohm trees. Notice that the intersection is hiddent in the membership condition in the first premise of $(BT\protect\dash \protect\at)$.}\label{fig:IntTyBT}
  \begin{center}
    \AxiomC{$\Gamma,x:a\vdash U:\alpha$}
    \RightLabel{{\scriptsize$(BT\protect\dash \lambda)$}}
    \UnaryInfC{$\Gamma\vdash \lambda x.U : a\cons\alpha$}
    \DisplayProof\vspace{0.5em}

    \AxiomC{$b_1\cons\cdots\cons b_n\cons\beta\in a$}
    \AxiomC{$\alpha\le\beta$}
    \AxiomC{$\forall i\le n,\forall\gamma\in b_i,\ \Gamma,x:a\vdash U_i:\gamma$}
    \RightLabel{{\scriptsize$(BT\protect\dash \protect\at)$}}
    \TrinaryInfC{$\Gamma, x:a\vdash x\ U_1\cdots U_n:\alpha$}
    \DisplayProof
  \end{center}
\end{figure*}

The idea of intersection types can be generalized to B\"ohm trees. We introduce in Figure~\ref{fig:IntTyBT} the corresponding intersection type system. There is no rule for $\Omega$ since it has an empty interpretation. Remark, moreover, that the rule $(BT\protect\dash \at)$ seems complicated, but is just the aggregation of rules~$(I\protect\dash id)$,~$(I\protect\dash weak)$,~$(I\protect\dash \le)$ and~$(I\protect\dash \at)$ of Figure~\ref{fig:tyLam}. The difference between the inductive and the co-inductive interpretations lies on the finiteness of the allowed derivations in this system.

\begin{prop}\label{prop:IntTyBT}
  Let $U$ be a B\"ohm tree, then:
  \begin{itemize}
  \item $(\vec a,\alpha)\in\llb U\rrb_{ind}^{\vec x}$ iff the type judgment $\vec x:\vec a\vdash U:\alpha$  has a finite derivation using the rules of Figure~\ref{fig:IntTyBT}.
  \item $(\vec a,\alpha)\in\llb U\rrb_{coind}^{\vec x}$ iff the type judgment $\vec x:\vec a\vdash U:\alpha$  has a possibly infinite derivation using the rules of Figure~\ref{fig:IntTyBT}.
  \end{itemize}
\end{prop}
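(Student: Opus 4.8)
The plan is to reduce both statements to a single bridge between the typing rules of Figure~\ref{fig:IntTyBT} and the defining clauses of interpretations, and then to handle the inductive case by a finite-approximant argument and the co-inductive case by a greatest-fixpoint argument. As a common first step I would prove, by structural induction on a \emph{finite} B\"ohm tree $X\in\BTf$, that $(\vec a,\alpha)\in\llb X\rrb^{\vec x}$ iff $\vec x:\vec a\vdash X:\alpha$ admits a (necessarily finite) derivation. This is the B\"ohm-tree analogue of the proposition relating $\llb M\rrb_D^{\vec x}$ to Figure~\ref{fig:tyLam}, and it goes through because the three clauses defining an interpretation (Def.~\ref{def:BTCohInterpret}) are in exact correspondence with the rules $(BT\dash\lambda)$ and $(BT\dash\at)$; the absence of a rule for $\Omega$ matches $\llb\Omega\rrb_*=\emptyset$, forcing any antichain $b_i$ facing an $\Omega$-subtree to be empty.

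For the inductive interpretation I would unfold $\llb U\rrb^{\vec x}_{ind}=\bigcup_{X\subseteq U,\,X\in\BTf}\llb X\rrb^{\vec x}$ and use two complementary moves. From approximant to $U$: given $X\subseteq U$ with a finite derivation of $\vec x:\vec a\vdash X:\alpha$, I lift it by induction on that derivation to a derivation of $\vec x:\vec a\vdash U:\alpha$; the inclusion $X\subseteq U$ guarantees matching head variable, abstractions and argument count at every non-$\Omega$ node, and wherever $X$ carries $\Omega$ the corresponding antichain is empty, so the larger subtree of $U$ requires no premise. From $U$ to an approximant: a finite derivation of $\vec x:\vec a\vdash U:\alpha$ inspects only finitely many nodes of $U$, so truncating $U$ to those nodes (placing $\Omega$ at the cuts) yields a finite $X\subseteq U$ for which the very same derivation is valid, whence $(\vec a,\alpha)\in\llb X\rrb^{\vec x}$. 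Combining these with the bridge lemma gives the equivalence.

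For the co-inductive interpretation I would exploit that $\llb-\rrb_{coind}$ is the greatest fixpoint of the Scott-continuous operator $\zeta$ of Proposition~\ref{prop:compLattOfCofInt}, whose clauses are precisely the rules of Figure~\ref{fig:IntTyBT} read as a one-step unfolding. Writing $I_{der}(U)$ for the set of $(\vec a,\alpha)$ admitting a possibly infinite derivation of $\vec x:\vec a\vdash U:\alpha$, the inclusion $I_{der}\subseteq\llb-\rrb_{coind}$ follows from Knaster--Tarski once I check that $I_{der}$ is a post-fixpoint: inspecting the root rule of any infinite derivation shows its premises are infinite derivations of the immediate subtrees, so $(\vec a,\alpha)\in\zeta(I_{der})(U)$. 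The reverse inclusion $\llb-\rrb_{coind}\subseteq I_{der}$ is a coinductive construction: since $\llb-\rrb_{coind}$ is a fixpoint, $\llb U\rrb_{coind}=\llb U\rrb_{\zeta(coind)}$, so unfolding $\zeta$ once exhibits the root rule with premises again lying in $\llb-\rrb_{coind}$ of the subtrees, and iterating coinductively yields a valid derivation.

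The step I expect to be the main obstacle is the co-inductive direction, where two points require care: that ``possibly infinite derivation'' means infinite depth but \emph{finite} branching (the premises $\forall i\le n,\forall\gamma\in b_i$ are finitely many, since each $b_i$ is a finite antichain and $n$ is finite), and that the greatest-fixpoint characterization is invoked in the correct direction for each inclusion (post-fixpoint for $I_{der}\subseteq\llb-\rrb_{coind}$, fixpoint unfolding for the converse). The inductive direction is more routine, its only delicate aspects being the truncation to a finite approximant and the bookkeeping of empty antichains at $\Omega$-nodes.
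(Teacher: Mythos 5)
The paper does not actually prove Proposition~\ref{prop:IntTyBT}: the statement is followed only by a (disabled) ``proof!''\ todo note, so there is no official argument to compare yours against. Judged on its own terms, your proposal is correct and is the natural way to fill the gap: the bridge lemma for finite trees mirrors the proof relating $\llb M\rrb_D^{\vec x}$ to Figure~\ref{fig:tyLam}; the finite-approximant back-and-forth matches the definition $\llb U\rrb_{ind}^{\vec x}=\bigcup_{X\subseteq U,\,X\in\BTf}\llb X\rrb^{\vec x}$; and the greatest-fixpoint argument correctly uses Proposition~\ref{prop:compLattOfCofInt} in both directions (post-fixpoint for soundness of infinite derivations, one-step unfolding iterated coinductively for completeness). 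The one verification you should make explicit is that the family $I_{der}$ is actually an element of the lattice you apply Knaster--Tarski in, i.e.\ a \emph{proto-interpretation}: Definition~\ref{def:BTCohInterpret} requires each $I_{der}(U)$ to be an initial segment of $D^{\FV(U)}\Rta D$, so you must check that derivability of $\vec x:\vec a'\vdash U:\alpha'$ is preserved when passing to $(\vec a,\alpha)\le(\vec a',\alpha')$ (contravariantly larger antichains in the context, smaller result type). This is a routine coinductive weakening lemma --- the side condition $\alpha\le\beta$ of $(BT\dash\at)$ absorbs the result type, and domination in $\Achf{D}$ lets you replace the witness $b_1\cons\cdots\cons b_n\cons\beta\in a'_j$ by a dominating element of $a_j$ --- but without it the inclusion $I_{der}\subseteq\llb-\rrb_{coind}$ is not literally an instance of Knaster--Tarski (alternatively, show the downward closure of $I_{der}$ is a post-fixpoint). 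Everything else, including your care about finite branching of the possibly infinite derivations and the empty antichains facing $\Omega$-subtrees, is sound.
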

\todo{proof!}
 

\begin{definition}\label{def:approximationProp}
  We say that $D$ respects the \newdef{approximation property}, or that $D$ is \newdef{approximable}, if the interpretation of any term corresponds to the inductive interpretation of its B\"ohm tree, {\em i.e.} if the following diagram commutes:
    \begin{center}
      \begin{tikzpicture}[description/.style={fill=white,inner sep=2pt},ampersand replacement=\&]
        \matrix (m) [matrix of math nodes, row sep=1em, column sep=5em, text height=1.5ex, text depth=0.25ex]
        { {\Lamb} \& \& D \\
          \& {\BT} \\};
        \path[->] (m-1-1) edge[] node[description] {$\llb.\rrb$} (m-1-3);
        \path[->] (m-1-1) edge[] node[description] {$\BT(.)$} (m-2-2);
        \path[->] (m-2-2) edge[] node[description] {$\llb.\rrb_{ind}$} (m-1-3);
      \end{tikzpicture}
    \end{center}
\end{definition}
 
\begin{lemma}\label{lemma:incOfInterIfGeet}
  If $D$ is extensional and approximable, and if $M$ and $N$ are two terms such \linebreak that $M\geetinf N$ (def.~\ref{def:etainf}), then $\llb M \rrb^{\vec x}\subseteq \llb N\rrb^{\vec x}$.
\end{lemma}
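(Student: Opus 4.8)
The plan is to reduce the statement about (possibly infinite) Böhm trees to a statement about their \emph{finite} approximants, where extensionality can be applied directly. First I would invoke approximability to rewrite both sides: since $D$ is approximable, $\llb M\rrb^{\vec x}=\llb\BT(M)\rrb^{\vec x}_{ind}=\bigcup_{X\subseteq_f\BT(M)}\llb X\rrb^{\vec x}$, and likewise $\llb N\rrb^{\vec x}=\llb\BT(N)\rrb^{\vec x}_{ind}=\bigcup_{X'\subseteq_f\BT(N)}\llb X'\rrb^{\vec x}$. Hence it suffices to show that every finite approximant $X\subseteq_f\BT(M)$ satisfies $\llb X\rrb^{\vec x}\subseteq\llb N\rrb^{\vec x}$, and then take the union over all such $X$.

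Fix such an $X$. Since $M\geetinf N$ unfolds to $\BT(M)\geetinf\BT(N)$, we are exactly in the situation $X\subseteq_f\BT(M)\geetinf\BT(N)$, so the third item of Lemma~\ref{lemma:distrEtinfSubset} produces a finite Böhm tree $Y$ with $X\geetinf Y\subseteq_f\BT(N)$. The inclusion $Y\subseteq_f\BT(N)$ immediately gives $\llb Y\rrb^{\vec x}\subseteq\bigcup_{X'\subseteq_f\BT(N)}\llb X'\rrb^{\vec x}=\llb N\rrb^{\vec x}$. It therefore remains to prove the key inclusion $\llb X\rrb^{\vec x}\subseteq\llb Y\rrb^{\vec x}$ for the two \emph{finite} trees $X\geetinf Y$.

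This last step is where the hypotheses are really used, and I expect it to be the main obstacle. The point is that, on finite Böhm trees, the relation $\geetinf$ is nothing but finite $\eta$-expansion: unwinding the clauses defining $\geetinf$ (Def.~\ref{def:etainf}) on finite trees, $X$ is obtained from $Y$ by finitely many $\eta$-expansions at non-$\Omega$ positions, with $\FV(X)=\FV(Y)\subseteq\vec x$. Since approximability forces $D$ to be sensible (an unsolvable term has Böhm tree $\Omega$, hence empty interpretation), the remark following Definition~\ref{def:BTCohInterpret} lets me read $\llb X\rrb^{\vec x}$ and $\llb Y\rrb^{\vec x}$ as the interpretations of $X$ and $Y$ seen as genuine $\lambda$-terms (replacing $\Omega$ by $\Om$). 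Extensionality of $D$ validates $\eta$, so $\eta$-expansion preserves interpretations and $\llb X\rrb^{\vec x}=\llb Y\rrb^{\vec x}$. Concretely, I would make this precise by induction on the finite derivation of $X\geetinf Y$: the base case $X=Y=\Omega$ gives $\emptyset\subseteq\emptyset$, while in the inductive case the extra abstractions $x_{n+1},\dots,x_{n+m}$ together with the $\eta$-expanding arguments $X_{k+1},\dots,X_{k+m}$ are absorbed by the isomorphism $i_D$ witnessing extensionality, reducing the claim to the inductive hypotheses on $X_1\geetinf Y_1,\dots,X_k\geetinf Y_k$.

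Chaining the inclusions $\llb X\rrb^{\vec x}=\llb Y\rrb^{\vec x}\subseteq\llb N\rrb^{\vec x}$ and taking the union over all finite $X\subseteq_f\BT(M)$ yields $\llb M\rrb^{\vec x}\subseteq\llb N\rrb^{\vec x}$, as required.
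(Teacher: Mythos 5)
Your proposal is correct and follows essentially the same route as the paper's proof: reduce to finite approximants via approximability, apply the third item of Lemma~\ref{lemma:distrEtinfSubset} to obtain a finite $Y$ with $X\geetinf Y\subseteq_f\BT(N)$, observe that $\geetinf$ between finite B\"ohm trees is ordinary $\eta$-expansion, and conclude by extensionality. The extra detail you supply (sensibility to read finite trees as terms, induction on the derivation of $X\geetinf Y$) only makes explicit what the paper leaves implicit.
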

\begin{proof}
  Let $(\vec a,\alpha)\in\llb M\rrb^{\vec x}$, by the approximation property there is a finite~$U\subseteq_f \BT(M)$ such that~$(\vec a,\alpha)\!\in\!\llb U\rrb^{\vec x}$. Since $U\!\subseteq_f\! \BT(M)\!\geetinf\! \BT(N)$, we can apply Lemma~\ref{lemma:distrEtinfSubset} to find $V\!\in\!\BTf$~such that $U\!\geetinf\! V\!\subseteq_f\!\BT(N)$. However, between finite B\"ohm trees, an $\infty\eta$-expansion is a usual~$\eta$-expan-sion, so that $U\!\succeq_\eta\! V\!\subseteq_f\!\BT(N)$. We thus have (using extensionality), $(\vec a,\alpha)\!\in\!\llb U\rrb^{\vec x}\!=\!\llb V\rrb^{\vec x}\!\subseteq\!\llb M\rrb^{\vec x}$ because the model is extensional.
\end{proof}

  The approximation property is a common condition enjoyed by all known K-models.\footnote{Provided that  they equalize terms with the same B\"ohm trees (which is a necessary condition for full abstraction).} 

\begin{example}\label{ex:ApprThe} 
  All the K-models of Example~\ref{example:1} except $P_\infty$ (that is not even sensible) are approximable, regardless of them being fully abstract or not. \todo{ref to the article one writen}
\end{example}

Our goal is to modify our set of approximants so that we could characterize the full abstraction.

\begin{remark} \label{rm:coindInt} 
  A vain attempt would consist on replacing the inductive interpretation (in the definition of the approximation property) by the co-inductive one. The diagram of Definition~\ref{def:approximationProp} would never commute:

  For any sensible K-model and any $\alpha\!\in\! D$, if $M=\FixP\ (\lambda u.z\ u)$, then 
     \begin{align*}
      (\{\{\alpha\}\cons\alpha\},\alpha)&\in\llb \BT(M)\rrb^z_{coind\!\!\!}
      &(\{\{\alpha\}\cons\alpha\},\alpha)&\not\in\llb M\rrb^z.
    \end{align*}
  Indeed, if $(\{\{\alpha\}\cons\alpha\},\alpha)\in\llb M\rrb^z$ it would give $\alpha\in \llb M[\I/z]\rrb=\llb\FixP\ \I\rrb=\emptyset$. Moreover, \linebreak since~$\BT(M)= z\ \BT(M)$, we co-inductively get that $(\{\{\alpha\}\cons\alpha\},\alpha)\in\llb \BT(M)\rrb^z_{coind}$.

  In this example, the co-inductive interpretation of $\BT(\FixP\ (\lambda ux.z\ u))$ is incoherent with the term interpretation because it uses the $z$ infinitely often.\footnote{Notice that in a relational model \cite{EhRe04} this issue would not hold (even if other problems would come later) since in any elements of the interpretation $(a,\alpha)\in\llb\lambda x.M\rrb$ the $a$ is a finite multiset which can only ``see'' a finite number occurences of $z$.} In order to get rid of this incoherence we can use a guarded fixpoint.
\end{remark} \todo{deplacer une partie des exemples sur les types ac intersection}

In order to recover a meaningful property, we will use the {\em quasi-finite interpretation}. This is the least interpretation whose restriction to quasi-finite B\"ohm trees is the co-inductive interpretation.

\begin{definition}\label{def:BTinterpretQF}
  The \newdefthepremsec{interpretation}{of B\"ohm trees}{quasi-finite} is defined by \newsyminv{\protect\llb .\protect\rrb_{qf}} 
  $$\llb U\rrb_{qf}^{\vec x}=\bigcup_{X\subseteq U \atop X\in\BTqf}\llb X\rrb_{coind}^{\vec x}.$$
\end{definition}

\begin{definition}\label{def:quasi-approxTh}
  We say that $D$ respects the \newdef{quasi-approximation property}\newdefinvinv{approximation property}{quasi-}, or is \newdef{quasi-approximable}\newdefinvinv{approximable}{quasi-}, if the interpretation of any term corresponds to the quasi-finite interpretation of its B\"ohm tree, {\em i.e.} if the following diagram commutes:
    \begin{center}
      \begin{tikzpicture}[description/.style={fill=white,inner sep=2pt},ampersand replacement=\&]
        \matrix (m) [matrix of math nodes, row sep=1em, column sep=5em, text height=1.5ex, text depth=0.25ex]
        { {\Lamb} \& \& D \\
          \& {\BT} \\};
        \path[->] (m-1-1) edge[] node[description] {$\llb.\rrb$} (m-1-3);
        \path[->] (m-1-1) edge[] node[description] {$\BT(.)$} (m-2-2);
        \path[->] (m-2-2) edge[] node[description] {$\llb.\rrb_{qf}$} (m-1-3);
      \end{tikzpicture}
    \end{center}
\end{definition}

\begin{example}
  We will prove that the quasi-approximation property is equivalent to hyperimmunity and full abstraction for \Hst (in presence of approximation property and extensionality). So models that are hyperimmune, like \Dinf, respect it and those that are not, like $\Dinf^*$, do not. In the case of~$\Dinf^*$, for example, the quasi-approximation property is refuted by \J, indeed $p\in \llb \BT(\J)\rrb_{qf}-\llb \J\rrb$.
\end{example}

\begin{remark}
  Notice that in general, approximability and quasi-approximability are independent (in the sense that none implies the other).
\end{remark}

\subsubsection{Technical lemma} \quad \newline%
This section shows that the relation 
$\leetinf$ in \BT is pushed along the co-inductive interpretation into 
equality at the level of the model. 
This property will be useful as it generalizes easily to the quasi-finite interpretation.

\begin{lemma}\label{lemma:leetinfBTtoInt}
  Let $D$ be an extensional K-model and let $U,V$ be two B\"ohm trees such that $U\leetinf V$.\\
  Then $\llb U\rrb^{\vec x}_{coind}=\llb V\rrb^{\vec x}_{coind}$.
\end{lemma}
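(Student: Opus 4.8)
The plan is to reduce the statement to a transformation of (possibly infinite) type derivations and then carry it out by coinduction. By Proposition~\ref{prop:IntTyBT}, $(\vec a,\alpha)\in\llb W\rrb^{\vec x}_{coind}$ holds exactly when the judgement $\vec x:\vec a\vdash W:\alpha$ admits a possibly infinite derivation in the system of Figure~\ref{fig:IntTyBT}. It therefore suffices to show that, whenever $U\leetinf V$, an infinite derivation of $\vec x:\vec a\vdash V:\alpha$ can be turned into one of $\vec x:\vec a\vdash U:\alpha$, and conversely. Since both $\leetinf$ and the derivations are coinductive objects, I would define these two derivation-transforming functions by guarded coinduction, each step emitting the outermost constructors of the target derivation before recursing on the immediate subderivations. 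The base case $U=V=\Omega$ is immediate, as $\llb\Omega\rrb_{coind}=\emptyset$.

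The local step is a direct computation with the fixpoint equations satisfied by $\llb-\rrb_{coind}$. Writing the $\leetinf$-step (via $V\geetinf U$) as $U=\lambda x_1\dots x_n.y\,U_1\cdots U_k$ and $V=\lambda x_1\dots x_{n+m}.y\,V_1\cdots V_{k+m}$, with $U_i\leetinf V_i$ for $i\le k$ and $x_{n+l}\leetinf V_{k+l}$ for $l\le m$, extensionality (injectivity of $i_D$) decomposes $\alpha$ uniquely as $\alpha=c_1\cons\cdots\cons c_{n+m}\cons\alpha_0$. Thus the leading abstractions of $V$ consume $c_1,\dots,c_{n+m}$, and the head application requires some $d_1\cons\cdots\cons d_{k+m}\cons\alpha_0\le\alpha'$ in the type of $y$, together with $(\vec a,\vec c,\beta)\in\llb V_j\rrb_{coind}$ for $\beta\in d_j$. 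For the $\eta$-expanded arguments $j=k+l$, the coinductive hypothesis gives $\llb V_{k+l}\rrb_{coind}=\llb x_{n+l}\rrb_{coind}$, which forces exactly $d_{k+l}\le_{\Achf{D}}c_{n+l}$. Since $a\cons\xi\le b\cons\zeta$ iff $a\ge_{\Achf{D}} b$ and $\xi\le\zeta$, replacing each $d_{k+l}$ by the larger antichain $c_{n+l}$ only decreases the arrow, so $d_1\cons\cdots\cons d_k\cons c_{n+1}\cons\cdots\cons c_{n+m}\cons\alpha_0\le\alpha'$ lies in the type of $y$; reading $c_{n+1}\cons\cdots\cons c_{n+m}\cons\alpha_0$ as the result type of $U$ (again by unique decomposition) and using $\llb U_j\rrb_{coind}=\llb V_j\rrb_{coind}$ for $j\le k$ — with weakening to discard the unused fresh variables $x_{n+1},\dots,x_{n+m}$ — this is precisely the side condition typing $U:\alpha$. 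The converse direction sets $d_{k+l}:=c_{n+l}$, which trivially satisfies $d_{k+l}\le_{\Achf{D}}c_{n+l}$ and, since every $\beta\in c_{n+l}$ types the variable $x_{n+l}$, reconstructs through the coinductive hypothesis a derivation of each $V_{k+l}:\beta$.

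Packaging this, I would establish the two inclusions $\llb U\rrb_{coind}\subseteq\llb V\rrb_{coind}$ and $\llb V\rrb_{coind}\subseteq\llb U\rrb_{coind}$ simultaneously by coinduction on the $\leetinf$-derivation: the local computation above provides, for each constructor of the source derivation, the matching constructors of the target derivation, while emitting recursive calls precisely on the pairs $U_i\leetinf V_i$ (for $i\le k$) and $x_{n+l}\leetinf V_{k+l}$ (for $l\le m$). I also note that this argument never uses approximability, only extensionality, which is why it holds for every extensional \Kweb, and that it will transfer to $\llb-\rrb_{qf}$ because $\leetinf$ preserves $\BTqf$ (Lemma~\ref{lm:invarianceEtinfRB}).

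The main obstacle is the coinductive bookkeeping. Because $\leetinf$ is itself coinductive, the hypothesis $\llb A\rrb_{coind}=\llb B\rrb_{coind}$ is invoked on immediate $\leetinf$-subcomponents that are \emph{not} structurally smaller, so the argument must genuinely be a guarded coinduction rather than an induction, and one must verify productivity, namely that each transformation step commits at least one node of the output derivation before recursing. The second delicate point is the $\eta$-matching algebra of the second paragraph: getting the correspondence between the domain antichains $d_{k+l}$ demanded by $V$ and the components $c_{n+l}$ of the decomposition of $\alpha$ exactly right is the sole place where extensionality and the antichain order enter, and one has to treat uniformly the case in which the head $y$ is one of the abstracted variables $x_i$ rather than free.
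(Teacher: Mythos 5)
Your proposal arrives at the same local computation as the paper --- the $\eta$-matching of antichains under the contravariant arrow order, with extensionality supplying the unique decomposition of $\alpha$ --- but it packages the coinduction differently. The paper never manipulates infinite type derivations: for each inclusion it defines a candidate proto-interpretation, namely $\llb V\rrb_*=\bigcup_{U\geetinf V}\llb U\rrb_{coind}$ (resp.\ $\bigcup_{U\leetinf V}\llb U\rrb_{coind}$), verifies by a one-step computation that it satisfies the three equations of Definition~\ref{def:BTCohInterpret}, and then concludes $\llb V\rrb_*\subseteq\llb V\rrb_{coind}$ from Proposition~\ref{prop:compLattOfCofInt}, since $\llb-\rrb_{coind}$ is the greatest interpretation; the reverse inclusion is free because $V$ itself occurs in the union. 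What this buys is precisely the point you flag as your ``main obstacle'': the fact playing the role of your coinductive hypothesis becomes membership in an explicitly given set, so its use inside the local step is an honest verification rather than an appeal to the statement being proved.

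This matters because your use of the coinductive hypothesis is not guarded at the one place where the lemma has real content. In the direction $\llb V\rrb_{coind}\subseteq\llb U\rrb_{coind}$ you invoke $\llb V_{k+l}\rrb_{coind}=\llb x_{n+l}\rrb_{coind}$ to deduce $d_{k+l}\le_{\Achf{D}}c_{n+l}$, and that inequality is needed to discharge the subtyping side condition $d_1\cons\cdots\cons d_k\cons c_{n+1}\cons\cdots\cons c_{n+m}\cons\alpha_0\le\alpha'$ of the \emph{current} rule instance, not to build one of its premises. Guarded coinduction only licenses recursive calls that produce immediate subderivations; it does not let you assume the full semantic equality for the subtrees while choosing the current constructor. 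Nor is the needed fact a finite side lemma: proving $\llb W\rrb_{coind}\subseteq\llb x\rrb_{coind}$ for $W\geetinf x$ hits the same non-guarded recursion one level down. So, as literally described (``each step emitting the outermost constructors before recursing''), this direction does not go through; it needs either a strengthened coinduction invariant (a backward-closed set of judgments with the inclusion built into the candidate, i.e.\ parameterized coinduction) or the paper's lattice-of-interpretations argument, which is the cleanest repair. Everything else is fine: the base case, the converse direction (where setting $d_{k+l}:=c_{n+l}$ makes the subtyping trivial and the recursive calls genuinely produce premises), the observation that only extensionality is used, and the transfer to $\llb-\rrb_{qf}$ via Lemma~\ref{lm:invarianceEtinfRB}.
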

\begin{proof}
  We will prove separately the two inclusions.
  \begin{itemize}
  \item 
  We will show that the proto-interpretation $\llb V\rrb_*^{\vec x}=\bigcup_{U\geetinf V}\llb U\rrb_{coind}^{\vec x}$ over B\"ohm trees is an interpretation. This is sufficient since, $\llb\_\rrb_{coind}$ being the greatest interpretation, we will have 
  $$\llb V\rrb_{coind}\subseteq \bigcup_{U\geetinf V}\llb U\rrb_{coind}^{\vec x} =\llb V\rrb_* \subseteq\llb V\rrb_{coind}.$$
    \begin{itemize}
    \item Interpretation over $\Omega$:
      \begin{align*}
        \llb \Omega\rrb_*^{\vec x} & = \bigcup_{U\geetinf \Omega}\llb U\rrb_{coind}^{\vec x} = \llb \Omega \rrb_{coind}^{\vec x} = \emptyset.
      \end{align*}
    \item Otherwise:
      \begin{align*}
        \hspace{2em} 
&\llb \lambda x_{n+1}\dots x_{s}.x_j\ V_1 \cdots V_{k} \rrb_*^{\vec x} \\
        ={}&  \bigcup_{U\geetinf \lambda x_{n+1}...x_{s}.x_j\ V_1 \cdots V_{k}}\llb U \rrb_{coind}^{\vec x} \\
        ={}&  \bigcup_{m}\bigcup_{(U_i\geetinf V_i)_{i\le k}}\bigcup_{U_{k+i}\geetinf x_{s+i}}\llb \lambda x_{n+1}...x_{s+m}.x_j\ U_1 \cdots U_{k+m} \rrb_{coind}^{\vec x} \\
        ={}& \bigcup_{m}\bigcup_{U_i\geetinf V_i}\bigcup_{U_{k+i}\geetinf x_{s+i}} \{((a_i)_{i\le n},a_{n+1}\cons\cdots a_{s+m}\cons\alpha) \mid \exists c_1\cons\cdots\cons c_{k+m}\cons\alpha\le\alpha'\in a_j,\\
        & \hspace{21.5em} \forall t\le k\+m,\forall\beta\in c_t, \ (\vec a,\beta)\in \llb U_t\rrb_{coind}^{\vec x^{s+m}} \} \\
        ={}& \bigcup_{m}\{(a_i)_{i\le n},a_{n+1}\cons\cdots a_{s+m}\cons\alpha)\ \mid \exists c_1\cons\cdots\cons c_{k+m}\cons\alpha\le\alpha'\in a_j, \\
        & \hspace{14em} \forall t\le k,\forall\beta\in c_t,\ (\vec a,\beta)\in \bigcup_{U_t\geetinf V_t}\llb U_t\rrb_{coind}^{\vec x^{s+m}} \\
        & \hspace{14em} \forall t\le m,\forall\beta\in c_{k+t},\ (\vec a,\beta)\in \bigcup_{U_{k+t}\geetinf x_{s+t}}\llb U_t\rrb_{coind}^{\vec x^{s+m}} \} \\
        ={}& \bigcup_{m}\{(a_i)_{i\le n},a_{n+1}\cons\cdots a_{s+m}\cons\alpha)\ \mid \exists c_1\cons\cdots\cons c_{k+m}\cons\alpha\le\alpha'\in a_j, \\
        & \hspace{14em} \forall t\le k,\forall\beta\in c_t,\ (\vec a,\beta)\in \llb V_t\rrb_*^{\vec x^{s+m}} \\
        & \hspace{14em} \forall t\le m,\forall\beta\in c_{k+t},\ (\vec a,\beta)\in \llb x_{s+t}\rrb_*^{\vec x^{s+m}} \} \\
        ={}& \bigcup_{m}\{(a_i)_{i\le n},a_{n+1}\cons\cdots a_{s+m}\cons\alpha)\ \mid \exists c_1\cons\cdots\cons c_{k+m}\cons\alpha\le\alpha'\in a_j, \\
        & \hspace{14em} \forall t\le k,\forall\beta\in c_t,\ (\vec a,\beta)\in \llb V_t\rrb_*^{\vec x^{s+m}} \} 
      \end{align*}
    \end{itemize}
    This proves that if $U\geetinf V$, then $\llb U\rrb_{coind}^{\vec x}\subseteq \llb V\rrb_*^{\vec x}\subseteq \llb V\rrb_{coind}^{\vec x}$.
  \item To prove the converse, it is sufficient to show that the proto-interpretation $\llb V\rrb_*=\bigcup_{U\leetinf V}\llb V\rrb_{coind}$ is an interpretation:
    \begin{itemize}
    \item Interpretation over $\Omega$: 
      \begin{align*}
        \llb \Omega\rrb_*^{\vec x} ={}& \bigcup_{U\leetinf \Omega}\llb U\rrb_{coind}^{\vec x} = \llb\Omega\rrb_{coind}^{\vec x} = \emptyset.
      \end{align*}
    \item If $V_{s}\not\geetinf x_{k}$ and $V_{s+i}\geetinf x_{k+i}$ (for $1\le i\le m$) and $j\le k$: 
      \begin{align*}
        \hspace{2em}
&\llb\lambda x_{n+1}...x_{k+m}.x_j\ V_1 \cdots V_{s+m} \rrb_*^{\vec x^n} \\
        ={}&  \bigcup_{U\leetinf \lambda x_{n+1}...x_{k+m}.x_j\ V_1 \cdots V_{s+m}}\llb U\rrb_{coind}^{\vec x^n} \\
        ={}&  \bigcup_{m'\le m}\bigcup_{U_t\leetinf V_t}\llb\lambda x_{n+1}...x_{k+m'}.x_j\ U_1 \cdots U_{s+m'}\rrb_{coind}^{\vec x^n} \\
        ={}& \bigcup_{m'\le m}\bigcup_{U_t\leetinf V_t} \{(a_i)_{i\le n},a_{n+1}\cons\cdots a_{k+m'}\cons\alpha) \mid \exists c_1\cons\cdots\cons c_{s+m'}\cons\alpha\le\alpha'\in a_j,\\
        & \hspace{19em} \forall t\le s\+m',\forall\beta\in c_t,\ (\vec a,\beta)\in \llb U_t\rrb_{coind}^{\vec x^{k+m'}} \} \\
        ={}& \bigcup_{m'\le m}\{(a_i)_{i\le n},a_{n+1}\cons\cdots a_{k+m'}\cons\alpha) \mid \exists c_1\cons\cdots\cons c_{s+m'}\cons\alpha\le\alpha'\in a_j,\\
        & \hspace{16em} \forall t\le s\+m',\forall\beta\in c_t,\ (\vec a,\beta)\in \bigcup_{U_t\leetinf V_t}\llb U_t\rrb_{coind}^{\vec x^{k+m'}} \} \\
        ={}& \bigcup_{m'\le m}\{(a_i)_{i\le n},a_{n+1}\cons\cdots a_{k+m'}\cons\alpha)  \mid \exists c_1\cons\cdots\cons c_{s+m'}\cons\alpha\le\alpha'\in a_j, \\
        & \hspace{16em} \forall t\le s\+m',\forall\beta\in c_t,\ (\vec a,\beta)\in \llb V_t\rrb_*^{\vec x^{k+m'}} \} \\
        ={}& \bigcup_{m'\le m}\{(a_i)_{i\le n},a_{n+1}\cons\cdots a_{k+m}\cons\alpha) \mid \exists c_1\cons\cdots\cons c_{s+m'}\cons a_{k+m'+1}\cons\cdots a_{n+m}\cons\alpha\le\alpha'\in a_j,\\
        & \hspace{16em} \forall t\le s\+m',\forall\beta\in c_t,\ (\vec a,\beta)\in \llb V_t \rrb_*^{\vec x^{k+m'}}\} \\
        ={}& \bigcup_{m'\le m}\{(a_i)_{i\le n},a_{n+1}\cons\cdots a_{k+m}\cons\alpha) \mid \exists c_1\cons\cdots\cons c_{s+m'}\cons a_{k+m'+1}\cons\cdots a_{k+m}\cons\alpha\le\alpha'\in a_j, \\
        & \hspace{16em} \forall t\le s\+m', \forall\beta\in c_t,\ (\vec a,\beta)\in \llb V_t \rrb_*^{\vec x^{k+m'}}\\
        & \hspace{16em} \forall m'\le t\le m, \forall \beta\in a_{k+t}, (\vec a,\beta)\in \llb x_{k+t}\rrb_*^{\vec x^{k+m'}}\} \\
        ={}& \bigcup_{m'\le m}\{(a_i)_{i\le n},a_{n+1}\cons\cdots a_{k+m}\cons\alpha) \mid \exists c_1\cons\cdots\cons c_{s+m'}\cons a_{k+m'+1}\cons\cdots a_{k+m}\cons\alpha\le\alpha'\in a_j, \\
        & \hspace{16em} \forall t\le s\+m', \forall\beta\in c_t,\ (\vec a,\beta)\in \llb V_t \rrb_*^{\vec x^{k+m'}}\\
        & \hspace{16em} \forall m'\le t\le m, \forall \beta\in a_{k+t}, (\vec a,\beta)\in \llb V_{s+t}\rrb_*^{\vec x^{k+m'}}\}
      \end{align*}
    \end{itemize}
    This proves that if $U\leetinf V$, then $\llb U\rrb_{coind}^{\vec x}\subseteq \llb V\rrb_*^{\vec x}\subseteq \llb V\rrb_{coind}^{\vec x}$.\qedhere
  \end{itemize}
\end{proof}

  \subsection{Hyperimmunity implies full abstraction} \quad \newline\label{ssec:HtoFA1}%
%
%
In this section we will prove the step $\eqref{eq:n1}\Rta\eqref{eq:n2}$ of the main theorem (Th.~\ref{th:final}). This will be done using the quasi-approximation property to decompose the proof into two steps. Indeed, we will see that in the presence of the approximation property, hyperimmunity implies the quasi-approximation property that itself implies the full abstraction for \Hst. Those two implications will be proved separately in Theorems~\ref{th:Hy+Ap=QAp} and~\ref{th:QApp->FullCompl}.

\subsubsection{Hyperimmunity and approximation imply quasi-approximation} \quad \newline%
Firstly, we are introducing tree-hyperimmunity that is equivalent to hyperimmunity (Lemma~\ref{lemma:sHyp<=>Hyp}).

The reason to introduce this new formalism is quite simple. For the proof of Theorem~\ref{th:Hy+Ap=QAp}, we will have to contradict hyperimmunity starting from a term $M$ that contradicts quasi-approximability. 

Recall that refuting hyperimmunity amounts to exhibiting a non-hyperimmune function ({\em i.e.}, bounded by a recursive function $g$) and a sequence $(\alpha_i)_i\in D^\Nat$ with a non well founded chain bounded by $g$ (see Definition~\ref{def:hyperim}).

The refutation of quasi-approximability by $M$ gives a recursive procedure that bounds the non-hyperimmune function $g$. However, the procedure does generally not directly construct the values of this function, but also performs a lot of useless computation; this is due to the refuting term $M$ not being optimal. Thus, we will simply construct an infinite tree and use K\"onig lemma\footnote{K\"onig lemma states that any infinite tree that is finitely branching accepts an infinite branch/path.} to find an infinite branch that contradicts hyperimmunity.

Generalizing hyperimmunity from sequences to trees allows us to apply a well-known theorem of recursion theory. This theorem states the equivalence between hyperimmune functions and infinite paths in recursive $\Nat$-labeled trees.\footnote{Trees with nodes labeled by natural numbers.} That is why we can generalise hyperimmune functions to infinite recursive $\Nat$-labeled trees. The sequence $(\alpha_i)_i\in D^\Nat$, similarly, becomes a partial (but infinite) labeling of the recursive tree. The sequence has to be partial in order to select a specific hyperimmune path.

\begin{definition}
   Let $D$ be a \Kweb.\\
   A \newdef{\Nat-labeled} tree $T$ is a finitely branching tree where nodes are labeled by \Nat, we denote by $T(\mu)$ the~\Nat-label of the node $\mu$ in $T$.\\
  A \newdef{$D$-decoration} of a \Nat-labeled $T$ is a partial function of infinite domain $\partial_D : T\rta D$ such that for every couple of nodes $\nu$ and $\mu$ that are father and son in $T$, if $\mu\in dom(\partial_D)$, then $\nu\in dom(\partial_D)$ and:  
  \begin{align*}
    \partial_D(\nu)={}& a_{1}\cons\cdots\cons a_{T(\mu)}\cons\alpha & \Rta & & \partial_D(\mu)&\in a_{T(\mu)}.
  \end{align*}
  A K-model $D$ is \newdef{tree-hyperimmune} if none of the $\Nat$-labeled and $D$-decorated tree is recursive.
%
\end{definition}

\begin{lemma}\label{lemma:sHyp<=>Hyp}
  A \Kweb $D$ is tree-hyperimmune iff it is hyperimmune.
\end{lemma}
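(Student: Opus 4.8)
The plan is to prove the two implications by contraposition, exploiting the crucial asymmetry in the definition of tree-hyperimmunity: the tree $T$ must be \emph{recursive}, but its $D$-decoration $\partial_D$ is only required to be a partial function of infinite domain and may be highly non-recursive. Thus ``$D$ is not tree-hyperimmune'' means there is a recursive $\Nat$-labeled tree admitting a $D$-decoration, while ``$D$ is not hyperimmune'' means there is a sequence $(\alpha_n)_n\in D^\Nat$ together with a recursive $g$ satisfying \eqref{eq:hyperimmunity}. I will show that these two negations are equivalent.

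For the direction ``not tree-hyperimmune $\Rightarrow$ not hyperimmune'', I would start from a recursive $\Nat$-labeled tree $T$ with a $D$-decoration $\partial_D$ of infinite domain. Since the decoration condition forces $\mathrm{dom}(\partial_D)$ to be closed under taking fathers, it is an infinite, finitely branching subtree containing the root; by König's lemma it contains an infinite branch $\mu_0,\mu_1,\dots$ with each $\mu_{n+1}$ a son of $\mu_n$. Setting $\alpha_n:=\partial_D(\mu_n)$, the decoration condition yields $\alpha_{n+1}\in a_{T(\mu_{n+1})}$ where $\alpha_n=a_1\cons\cdots\cons a_{T(\mu_{n+1})}\cons\alpha$. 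It remains to produce a \emph{recursive} bound: since $T$ is recursive and finitely branching, the finite set of nodes at depth $n\+1$ is computable, so $g(n):=\max\{T(\mu)\mid \mu\text{ at depth }n\+1\}$ is recursive and satisfies $T(\mu_{n+1})\le g(n)$. Unfolding each $\alpha_n$ as $a_{n,1}\cons\cdots\cons a_{n,g(n)}\cons\alpha_n'$ (possible since $D\simeq D\Rta D$ lets us display any number of arrows, and the first $T(\mu_{n+1})$ antichains agree with those above by injectivity of $i_D$), we obtain $\alpha_{n+1}\in a_{n,T(\mu_{n+1})}\subseteq\bigcup_{k\le g(n)}a_{n,k}$, refuting hyperimmunity.

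For the converse ``not hyperimmune $\Rightarrow$ not tree-hyperimmune'', I would start from $(\alpha_n)_n$ and a recursive $g$ satisfying \eqref{eq:hyperimmunity}, and pick for each $n$ an index $i_n\le g(n)$ with $\alpha_{n+1}\in a_{n,i_n}$. The naive idea of taking a single infinite path labeled by the $i_n$ fails, because $n\mapsto i_n$ need not be recursive (cf. the remark after Definition~\ref{def:hyperim}), whereas the tree must be recursive. Instead I would build a \emph{bushy} recursive tree: every node at depth $n$ carries exactly $g(n)$ sons, the $k$-th son being labeled $k$ for $1\le k\le g(n)$. This tree is recursive because $g$ is. The decoration, which may be non-recursive, then singles out the ``true'' path: decorate the root by $\alpha_0$ and, inductively, decorate the $i_n$-th son of the node already decorated by $\alpha_n$ with $\alpha_{n+1}$, leaving all other nodes undecorated. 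Its domain is the whole true path, hence infinite, and the decoration condition holds by construction, since unfolding $\alpha_n$ to length $i_n$ exposes $a_{n,i_n}$ as the relevant antecedent antichain and $\alpha_{n+1}\in a_{n,i_n}$.

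The main obstacle is precisely this last point: a recursive witness tree cannot encode the possibly non-recursive choice function $n\mapsto i_n$ directly. The resolution---branching according to the recursive bound $g$ and delegating the selection of the correct branch to the non-recursive decoration---is exactly the content of the classical recursion-theoretic equivalence between hyperimmune functions and infinite paths through recursive $\Nat$-labeled trees alluded to before the statement. Conversely, extracting a recursive $g$ in the first direction relies only on the recursiveness and finite branching of $T$, together with König's lemma to obtain the infinite decorated branch.
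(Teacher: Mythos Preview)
Your proposal is correct and follows essentially the same approach as the paper's proof: both directions proceed by contraposition, using K\"onig's lemma on the domain of the decoration (together with the recursive depth-wise maximum of labels as the bounding $g$) in one direction, and a bushy recursive tree branching according to $g$ with the non-recursive decoration selecting the true path in the other. Your discussion of why the naive single-path construction fails and how the bushy tree resolves it is exactly the paper's construction.
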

\begin{proof}\quad
  \begin{itemize}
  \item We assume that there is a recursive $g$ and a sequence $(\alpha_n)_n$ refuting hyperimmunity. We define the tree $T$ given by the set of nodes $\{\omega\in \Nat^*\mid \forall n\le|\omega|,\omega_n\le g(n)\}$ of finite sequences bounded by $g$ and ordered by prefix; the \Nat-labeling is given by~$T(\epsilon)=0$ and~$T(\omega.n)=g(n)$. Then $T$ is recursive and we have $\partial_D$ partially defined by induction:
    \begin{itemize}
    \item $\partial_D(\epsilon)=\alpha_0$ is always defined,
    \item $\partial_D(\omega.n)=\alpha_{|\omega.n|}$ is defined if $\partial_D(\omega)=\alpha_{|\omega|}=a_1\cons\cdots\cons a_n\cons \alpha$ and $\alpha_{|\omega.n|}=\alpha_{|\omega|+1}\in a_n$.
    \end{itemize}
  The decoration is infinite since, for all depth $d$, $\alpha_{d+1}\in \bigcup_{n\le g(d)}a_n$ for $\alpha_d=a_1\cons\cdots a_{g(d)}\cons\alpha_d'$. This contradicts tree-hyperimmunity.\todo{dessin?}
  \item If $D$ is not tree-hyperimmune, then there is a finitely branching, \Nat-labeled, and recursive tree $T$ and an infinite decoration $\partial_D$.
  By K\"onig lemma, the sub-tree that constitutes the domain of $\partial_D$ (which is infinite and finitely branching) accepts an infinite branch $(\mu_n)_n$. We denote ${\alpha_n:=\partial_D(\mu_n)}$, so that $\alpha_{n+1}\in a_{T(\mu_{n+1})}$ for $\alpha_n=a_1\cons\cdots\cons a_{T(\mu_{n+1})}\cons\alpha'$. Since the sequence $(T(\mu_{n+1}))_n$ is majored by the maximal \Nat-label on depth $n\+1$ in $T$, that is recursive, we are contradicting hyperimmunity.
  \end{itemize}
\end{proof}

\begin{remark}
  In the following, internal nodes of a quasi-finite B\"ohm tree are denoted by $X,Y...$ as they are idzntified with the quasi-finite B\"ohm tree whose root is the node at issue.
\end{remark}
\todo{Change the notation of nodes}

We now introduce the notion of the play of a quasi-finite B\"ohm tree $X$. The play of $X$ can be seen as the game semantics' play over the infinite arena $*{=}*\cons*$ performed by the execution of~$X$. Formally, it is a (possibly infinite) tree which father-son relationship corresponds to justification pointers. Moreover, players and opponents are playing alternatively, so that nodes at even depth are player nodes and play over applications, and nodes at odd depth are opponent nodes and play over abstractions. We will see that plays over quasi-finite B\"ohm trees remains finitely branching and recursive trees. Later on, we will try to decorate those plays to contradict tree-hyperimmunity.

\begin{definition}
  Let $X$ be a closed\footnotemark and recursive quasi-finite B\"ohm tree.\\
  The \newdef{play of $X$} is the recursive and $\Nat$-labeled tree $T$ whose nodes are of two kinds: 
  \begin{itemize}
  \item The nodes at even depth are called \newdef{player nodes}. They are denoted $P(Y)$ for some $Y$ over $X$.
  \item The  nodes at odd depth are called \newdef{opponent nodes}. They are denoted $O(Y)$ for some $Y$ over $X$.
  \end{itemize}
  The tree is given by:
  \begin{itemize}
  \item the root is $P(X)$,
  \item the opponent node $O(\lambda x_1...x_m.z\ Y_1\cdots Y_k)$ has $k$ sons which are the $P(Y_i)$ for $i\le k$,
  \item the player node $P(\lambda x_1...x_m.z\ Y_1\cdots Y_k)$ has for sons every $O(Z)$ for $Z$ a node over $Y_1$, ..., or~$Y_n$ whose head variable is one of the $x_1,\dots,x_m$.
  \end{itemize}
\end{definition}\footnotetext{Can be generalised to non-closed trees by considering plays to be forests of trees.}

\begin{example}
  The tree below is the play over $\lambda x.x\ (\lambda yz. x\ (y\ z)\ (z\ y))$
  \begin{center}
    \begin{tikzpicture}
      \node (P) at (1.75,8) {$\hspace{-3em}P\Bigl(\lambda x.\underline x\ (\lambda yz. \underline x\ (y\ z)\ (z\ y))\Bigr)\hspace{-5em}$};
      \node (O1) at (1,6) {$O\Bigl(\lambda x.x\ \underline{(\lambda yz. x\ (y\ z)\ (z\ y))}\Bigr)$};
      \node (O2) at (5,6) {$O\Bigl(\lambda yz. x\ \underline{(y\ z)}\ \underline{(z\ y)}\Bigr)\hspace{-0.7em}$};
      \node (P1) at (1,4) {$\hspace{-3.4em}P\Bigl(\lambda yz. x\ (\underline y\ \underline z)\ (\underline z\ \underline y)\Bigr)\hspace{-1.5em}$};
      \node (P2) at (5,4) {$P\Bigl(y\ z\Bigr)$}; 
      \node (P3) at (7,4) {$P\Bigl(z\ y\Bigr)$};
      \node (O11) at (-1,2) {$O\Bigl(y\ \underline z\Bigr)\hspace{-1em}$};
      \node (O12) at (1,2) {$O\Bigl(z\Bigr)$};
      \node (O13) at (3.1,2) {$O\Bigl(z\ \underline y\Bigr)\hspace{-1em}$};
      \node (O14) at (6,2) {$O\Bigl(y\Bigr)$};
      \node (P11) at (-0.6,0) {$P\Bigl(z\Bigr)$};
      \node (P12) at (3.55,0) {$P\Bigl(y\Bigr)$};
      \draw[->,thick] (P.south west) -- (O1.north);
      \draw[->,thick] (P.south east) -- (O2.north);
      \draw[->,thick] (O1.south) -- (P1.north);
      \draw[->,thick] (O2.south) -- (P2.north);
      \draw[->,thick] (O2.south east) --  (P3.north);
      \draw[->,thick] (P1.south west) --  (O11.north east);
      \draw[->,thick] (P1.south) --  (O12.north);
      \draw[->,thick] (P1.south east) -- (O13.north west);
      \draw[->,thick] (P1.south east) --  (O14.north west);
      \draw[->,thick] (O11.south east) --  (P11.north);
      \draw[->,thick] (O13.south east) -- (P12.north);
    \end{tikzpicture}
\end{center}
\end{example}

\begin{prop}
  Let $X$ be a closed and recursive quasi-finite B\"ohm tree and $T$ the play over $X$.
  For every node $Y$ of $X$, $P(Y)$ is a node of $T$. For every node $Y$ of $X$ that is not an $\Omega$, $O(Y)$ is a node of $T$.
\end{prop}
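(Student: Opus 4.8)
The plan is to prove both statements simultaneously by induction on the depth $d$ of the node $Y$ inside $X$ (its distance from the root of $X$), establishing at each stage the two claims $(\mathrm P_d)$: $P(Y)\in T$ for every node $Y$ of $X$ at depth $d$, and $(\mathrm O_d)$: $O(Y)\in T$ for every such $Y$ with $Y\neq\Omega$. The structural facts I would rely on are the following: a node of $X$ is a subtree of the B\"ohm tree; since $X$ is closed, the head variable $z$ of any non-$\Omega$ node $Y=\lambda\vec w.z\ Y_1\cdots Y_k$ is bound by a unique abstraction occurring at an ancestor-or-self node $W$ of $Y$, and by Barendregt's convention the name $z$ alone determines $W$ (there is no re-binding between $W$ and $Y$); and the only proper descendants of a node live inside its argument subtrees.

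For the $P$-statement, $(\mathrm P_0)$ is immediate since $P(X)$ is by definition the root of $T$. For $d\ge 1$, a node $Y$ at depth $d$ has a parent $Y'$ at depth $d-1$; as $Y'$ has a child it cannot be $\Omega$, so the induction hypothesis $(\mathrm O_{d-1})$ gives $O(Y')\in T$. Writing $Y'=\lambda\vec x.z\ Y_1\cdots Y_k$, the opponent rule says $O(Y')$ has exactly the sons $P(Y_1),\dots,P(Y_k)$, one player son per argument subtree of $Y'$; since $Y$ is one of these arguments, $P(Y)\in T$. This is the routine half.

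For the $O$-statement, fix a non-$\Omega$ node $Y=\lambda\vec w.z\ Y_1\cdots Y_k$ at depth $d$ and let $W$ be the binder of its head variable $z$, at some depth $e\le d$. I would already have secured $P(W)\in T$ by $(\mathrm P_e)$, which is available from the induction hypothesis when $e<d$, and from $(\mathrm P_d)$ --- proved first within the current stage --- when $e=d$. Applying the player rule to $P(W)$ produces an opponent son $O(Z)$ for every subtree $Z$ of the body of $W$ whose head variable lies among the variables abstracted at $W$. If $W=Y$, then the head variable $z$ is abstracted by $Y$ itself, so $z$ is among $W$'s bound variables and the body subtree is $Y$, whence $O(Y)$ is a son of $P(Y)$. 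If $W$ is a proper ancestor, then $Y$ lies inside one of $W$'s argument subtrees and its head variable $z$ is, by the choice of $W$, one of $W$'s abstracted variables, so again $O(Y)$ is a son of $P(W)$. Either way $O(Y)\in T$, closing the induction. The base case $Y=X$ is subsumed: closedness forces the root's head variable to be bound by the root abstraction, so $W=X$ and $O(X)$ is a son of $P(X)$.

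The main obstacle, and the only genuinely delicate point, is the bookkeeping in the $O$-statement: one must match the player rule of the play exactly to the binding structure of $X$, checking that the head variable $z$ of $Y$ is picked up as an opponent son of $P(W)$. This is precisely where closedness (so that $W$ exists), Barendregt's convention (so that $z$ unambiguously names $W$ and no intervening abstraction re-binds it), and the fact that descendants sit inside argument subtrees are all used. It is also essential to order the induction so that $(\mathrm P_d)$ is proved before $(\mathrm O_d)$, in order to treat the self-binding case $e=d$ without circularity.
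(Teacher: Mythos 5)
Your proof is correct and follows essentially the same route as the paper: a (mutual) induction over the nodes of $X$, deriving $P(Y)\in T$ from $O(\text{parent})\in T$ via the opponent rule, and $O(Y)\in T$ from $P(\text{binder of the head variable})\in T$ via the player rule and closedness. Your explicit care in proving the $P$-claim before the $O$-claim at the same depth (to handle the self-binding case) is a point the paper's proof leaves implicit, but the argument is the same.
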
\newpage
\begin{proof}
  By structural induction over the nodes $Y$ of $X$:
  \begin{itemize}
  \item If $Y$ is a node of $X$, then either $Y=X$ and $P(X)$ is the root of $T$, or $Y$ has a father $Y'$ in $X$. In the last case, $O(Y')$ is a node of $T$ by induction hypothesis and $P(Y)$ is a son of $O(Y')$.
  \item If $Y'=\lambda x_1...x_m.z\ Y_1\cdots Y_k$ is a node of $X$, then by closeness of $X$, there is an ancestor of $Y$ in~$X$ where $z$ is abstracted (potentially $Y=Y'$), {\em i.e}, $Y=\lambda y_1...y_{m'}.z'\ Y'_1\cdots Y'_k$ with $z=y_i$. By induction hypothesis, $P(Y)$ is a node of $T$ and $O(Y)$ is its son.\qedhere
  \end{itemize}
\end{proof}

\begin{definition}\label{def:labelEll}
  Let $X$ be a quasi-finite B\"ohm tree that is recursive and closed.\\
  The \newdef{labeled play over $X$} is the play over $X$ together with the~\Nat-labeling $\ell$ defined as follows:
  \begin{itemize}
  \item the labeling of the root is $\ell(P(X))=0$,
  \item any $Y$ at even depth, $P(Y)$, has for father $O(\lambda x_1...x_m.z\ Y_1\cdots Y_k)$ with $Y$ one of the $Y_i$, the~\Nat-label $\ell(P(Y))$ is the corresponding index of application $i$,
  \item any $Y=\lambda x_1...x_m.z\ Y_1\cdots Y_k$ at odd depth, $O(Y)$, has for father $P(Y')$ for $Y'$ that is the ancestor of $Y$ in $X$ where $z$ is abstracted (potentially $Y'=Y$), {\em i.e}, $Y'=\lambda y_1...y_{m'}.z'\ Y'_1\cdots Y'_k$ with $z=y_i$. The \Nat-label $\ell(O(Y))$ is the corresponding index of abstraction $i$.
  \end{itemize}
\end{definition}

\begin{example}
  The tree below is the labeled play over $X=\lambda x.x\ (\lambda yz. x\ (y\ z)\ (z\ y))$. For readability, the label is written in the parent-to-child arrow (we omit $\ell(X)=0$):
  \begin{center}
    \begin{tikzpicture}
      \node (P) at (1.75,8) {$\hspace{-3em}P\Bigl(\lambda x.\underline x\ (\lambda yz. \underline x\ (y\ z)\ (z\ y))\Bigr)\hspace{-5em}$};
      \node (O1) at (1,6) {$O\Bigl(\lambda x.x\ \underline{(\lambda yz. x\ (y\ z)\ (z\ y))}\Bigr)$};
      \node (O2) at (5,6) {$O\Bigl(\lambda yz. x\ \underline{(y\ z)}\ \underline{(z\ y)}\Bigr)\hspace{-0.7em}$};
      \node (P1) at (1,4) {$\hspace{-3.4em}P\Bigl(\lambda yz. x\ (\underline y\ \underline z)\ (\underline z\ \underline y)\Bigr)\hspace{-1.5em}$};
      \node (P2) at (5,4) {$P\Bigl(y\ z\Bigr)$}; 
      \node (P3) at (7,4) {$P\Bigl(z\ y\Bigr)$};
      \node (O11) at (-1,2) {$O\Bigl(y\ \underline z\Bigr)\hspace{-1em}$};
      \node (O12) at (1,2) {$O\Bigl(z\Bigr)$};
      \node (O13) at (3.1,2) {$O\Bigl(z\ \underline y\Bigr)\hspace{-1em}$};
      \node (O14) at (6,2) {$O\Bigl(y\Bigr)$};
      \node (P11) at (-0.6,0) {$P\Bigl(z\Bigr)$};
      \node (P12) at (3.55,0) {$P\Bigl(y\Bigr)$};
      \draw[->,thick] (P.south west) -- node [fill=white] {$1$} (O1.north);
      \draw[->,thick] (P.south east) -- node [fill=white] {$1$} (O2.north);
      \draw[->,thick] (O1.south) -- node [fill=white] {$1$} (P1.north);
      \draw[->,thick] (O2.south) -- node [fill=white] {$1$} (P2.north);
      \draw[->,thick] (O2.south east) -- node [fill=white] {$2$} (P3.north);
      \draw[->,thick] (P1.south west) -- node [fill=white] {$1$} (O11.north east);
      \draw[->,thick] (P1.south) -- node [fill=white] {$2$} (O12.north);
      \draw[->,thick] (P1.south east) -- node [fill=white] {$2$} (O13.north west);
      \draw[->,thick] (P1.south east) -- node [fill=white] {$1$} (O14.north west);
      \draw[->,thick] (O11.south east) -- node [fill=white] {$1$} (P11.north);
      \draw[->,thick] (O13.south east) -- node [fill=white] {$1$} (P12.north);
    \end{tikzpicture}
\end{center}
\end{example}

\begin{prop}
   For any quasi-finite $X\in\BTqf$, the labeled play $T$ over $X$ is recursive, finitely branching and \Nat-labeled.
\end{prop}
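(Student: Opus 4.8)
The plan is to verify the three asserted properties separately, with quasi-finiteness doing the essential work only at the player nodes. Throughout I assume, as required for the (labeled) play to be defined, that $X$ is closed and recursive, and I fix a recursive function $g$ witnessing $X\in\BTqf$. That $T$ is $\Nat$-labeled is immediate from Definition~\ref{def:labelEll}: every value of $\ell$ is an index of an application or of an abstraction, hence a natural number.

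For finite branching I would distinguish the two kinds of node. An opponent node $O(\lambda x_1\dots x_m.z\ Y_1\cdots Y_k)$ has exactly the $k$ sons $P(Y_1),\dots,P(Y_k)$, and $k$ is finite since every node of a B\"ohm tree has finite arity. The delicate case is a player node $P(\lambda x_1\dots x_m.z\ Y_1\cdots Y_k)$, whose sons are the $O(Z)$ for $Z$ a node below $Y_1,\dots,Y_k$ whose head variable lies among $x_1,\dots,x_m$. Here I invoke quasi-finiteness: if this node sits at depth $d$ in $X$, then $x_1,\dots,x_m$ are abstracted at depth $d$, so by Definition~\ref{def:f&omfBT} they occur only at depth at most $g(d)$. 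Since $X$ is finitely branching, the initial segment of the subtree rooted at this node, cut at depth $g(d)$, is a finite tree; it therefore contains only finitely many occurrences of $x_1,\dots,x_m$ in head position, so the player node has finitely many sons.

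For recursiveness I would use that both $X$ and $g$ are recursive. Represent a node of $X$ (hence of $T$) by the finite path of argument-indices leading to it from the root; since $X$ is recursive one can decide whether a path denotes a node and, if so, read off its list of abstractions and its head variable, and the parity of the path length tells whether the associated $T$-node is a player or an opponent node. The sons and label of an opponent node $O(\lambda\vec x.z\ Y_1\cdots Y_k)$ are read directly from the tree structure: the sons are $P(Y_1),\dots,P(Y_k)$, and the label of $O(Y)$ is obtained by locating the binder of $z$, which exists by closeness of $X$ and is found by a recursive walk up the path. For a player node at depth $d$ in $X$, its sons are computed by enumerating the finite set of descendants of $Y_1,\dots,Y_k$ down to depth $g(d)$ and keeping those whose head variable is one of $x_1,\dots,x_m$; this search is recursive because $X$ and $g$ are, and the labels are the corresponding application indices. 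As these procedures are uniform in the node, the pair $(T,\ell)$ is recursive.

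The main obstacle is precisely the player-node case, for both finite branching and recursiveness: it is here, and only here, that the play could a priori branch infinitely or fail to be computable, since an abstracted variable might in general occur infinitely often and at unbounded depth. Quasi-finiteness is exactly the hypothesis that rules this out, providing a single recursive depth bound $g$ that simultaneously makes the set of sons finite and makes its computation effective.
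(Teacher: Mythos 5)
Your proof is correct and follows essentially the same route as the paper's (which is only two sentences long): opponent nodes branch finitely by the finite arity of B\"ohm-tree nodes, player nodes branch finitely and computably because quasi-finiteness confines the occurrences of variables abstracted at depth $d$ to the finite initial segment of depth $g(d)$, and recursiveness follows from the recursiveness of $X$ and $g$. You are also right to make explicit the standing hypotheses that $X$ is closed and recursive, which the paper's statement leaves implicit but which its definition of the play requires.
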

\begin{proof} 
  {\em The tree $T$ is finitely branching:} An opponent node $O(\lambda x_1\dots x_n.z\ Y_n\cdots Y_k)$ has exactly $k$ sons which are the $P(Y_i)$ for $i\le k$. A player node $P(\lambda x_1\dots x_n.z\ Y_n\cdots Y_k)$ has one son for each occurrence of its abstracted variables, which results in a finite number by quasi-finiteness of $X$.

  {\em The tree $T$ is recursive:} by recursivity and quasi-finiteness of $X$.
\end{proof}

\noindent Our objective is to $D$-decorate the labeled play of any quasi-finite B\"ohm tree $X$ such that $ \llb X \rrb_{coind} \neq \llb X\rrb_{ind} $. The $D$-decoration in question will follow a specific patern: we will furnish a path-D-decoration, which is a decoration of the nodes $\{P(Y_n),O(Y_n)\mid n\ge 0\}$ for $(Y_n)_{n\ge 0}$ a path in the B\"ohm tree of $X$.

\begin{definition}
  Let $D$ be a K-model and $X$ be a quasi-finite B\"ohm tree where all variables have been named differently.\\
  A path-$D$-decoration of the labeled play of $X$ is an infinite sequence $(Y_n)_{n\ge 0}$ of nodes of $X$ forming a path ({\em i.e.}, $Y_0=X$ and $Y_n$ father of $Y_{n+1}$) and three infinite sequences $(\alpha^n)_{n\ge 0},(\beta^n)_{n\ge 0}\in D^\Nat$ and~$(a_x)^{x\in\FV(Y_1,Y_2..)}$ such that for each $n$ (Where $\ell$ is the labeling of Definition~\ref{def:labelEll}):
  \begin{align*}
    \beta^n={}& b^n_{1}\cons\cdots\cons b^n_{\ell(P(Y_{n+1}))}\cons\beta' & \Rta & & \alpha^{n+1}&\in b^n_{\ell(P(Y_{n+1}))}.\\
    Y_n &= \lambda x_1,...,x_n. y\ X_1\cdots X_k &\Rta && \alpha^n&= a_{x_1}\cons\cdots a_{x_n}\cons\alpha'\\
    Y_n &= \lambda x_1,...,x_n. y\ X_1\cdots X_k &\Rta &&  \beta^n&\in a_{y}
  \end{align*}
\end{definition}

\begin{prop}\label{lm:path-decoration}
  Let $D$ be a K-model and $X$ be a quasi-finite B\"ohm tree.\\
  A path-$D$-decoration of the labeled play of $X$ induces a $D$-decoration of the labeled play of $X$.
\end{prop}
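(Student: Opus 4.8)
The plan is to read off the $D$-decoration $\partial_D$ directly from the three sequences of the path-$D$-decoration, letting $\partial_D$ be defined exactly on the nodes of the labeled play $T$ that the path $(Y_n)_{n\ge 0}$ visits, with player nodes carrying the $\alpha$'s and opponent nodes the $\beta$'s. Concretely I set
$$ \partial_D(P(Y_n)) := \alpha^n \qquad\text{and}\qquad \partial_D(O(Y_n)) := \beta^n \qquad (n\ge 0), $$
and leave $\partial_D$ undefined on every other node. The $Y_n$ are pairwise distinct (the path is strictly descending in $X$) and none is $\Omega$ (an infinite path never meets a leaf), so by the preceding proposition each $P(Y_n)$ and each $O(Y_n)$ is a genuine node of $T$; this makes $\partial_D$ a well-defined partial function whose domain $\{P(Y_n),O(Y_n)\mid n\ge 0\}$ is infinite.

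First I would check that this domain is closed under taking fathers, the structural requirement in the definition of a $D$-decoration. The root $P(X)=P(Y_0)$ lies in it. The father of $P(Y_{n+1})$ in $T$ is $O(Y_n)$, since $Y_{n+1}$ is an argument of $Y_n$ and the sons of $O(Y_n)$ are exactly the $P(\cdot)$ of the arguments of $Y_n$; this father is in the domain. The father of $O(Y_n)$ in $T$ is $P(Y_m)$, where $Y_m$ is the ancestor of $Y_n$ binding the head variable of $Y_n$; every ancestor of $Y_n$ in $X$ is some $Y_m$ with $m\le n$, so this father is again in the domain.

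It then remains to verify the decoration inequality for each son $\mu$ of the domain against its father $\nu$, using the label $T(\mu)=\ell(\mu)$ of the son, and there are exactly two kinds of pair. For a player son $\mu=P(Y_{n+1})$ with father $\nu=O(Y_n)$, writing $\partial_D(O(Y_n))=\beta^n=b^n_1\cons\cdots\cons b^n_{\ell(P(Y_{n+1}))}\cons\beta'$, the required $\partial_D(P(Y_{n+1}))=\alpha^{n+1}\in b^n_{\ell(P(Y_{n+1}))}$ is precisely the first clause of the path-$D$-decoration. For an opponent son $\mu=O(Y_n)$ with father $\nu=P(Y_m)$, where $Y_m=\lambda x_1\ldots x_{m'}.\,\cdots$ binds the head variable $y$ of $Y_n$, say $y=x_i$ with $i=\ell(O(Y_n))$, the second clause gives $\partial_D(P(Y_m))=\alpha^m=a_{x_1}\cons\cdots\cons a_{x_{m'}}\cons\alpha'$, so its $i$-th antecedent is $a_{x_i}=a_y$, and the third clause gives exactly $\partial_D(O(Y_n))=\beta^n\in a_y$. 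Thus both instances of the decoration condition hold and $\partial_D$ is a $D$-decoration of $T$.

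The only genuinely delicate point, which I would treat with care, is the bookkeeping of the justification structure: the father of an opponent node $O(Y_n)$ is not the previous path node $O(Y_{n-1})$ but the player node $P(Y_m)$ at the binder of $Y_n$'s head variable, so one must confirm that this $P(Y_m)$ is visited by the path and that the index $i=\ell(O(Y_n))$ entering the decoration condition is the very index for which $a_y=a_{x_i}$ was read off from $\alpha^m$ in the second clause. This matches because $(a_x)_x$ is a single family indexed by the (pairwise distinct) variable names, so the $a_y$ produced at the binder $Y_m$ and the $a_y$ constraining $\beta^n$ are literally the same antichain; lining up this index is exactly what makes the three clauses of the path-$D$-decoration coincide with the two decoration inequalities.
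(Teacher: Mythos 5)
Your proposal is correct and follows essentially the same route as the paper's proof: you define the same partial function $\partial_D(P(Y_n)):=\alpha^n$, $\partial_D(O(Y_n)):=\beta^n$, note its domain is infinite because the path nodes are pairwise distinct, and verify the two decoration constraints by matching the player-son case to the first clause and the opponent-son case (via the binder $P(Y_m)$ and the shared family $(a_x)_x$) to the second and third clauses. The extra care you take with father-closure of the domain and the justification-pointer bookkeeping is exactly the implicit content of the paper's two bullet points.
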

\proof
  Let $(Y_n)_{n\ge 0}$,  $(\alpha^n)_{n\ge 0}$ and $(\beta^n)_{n\ge 0}$ forming a path-$D$-decoration of the play of $X$. Then the partial function $\partial_D$ defined by $\partial_D(P(Y_n)):=\alpha^n$ and $\partial_D(O(Y_n)):=\beta^n$ for all $n$ is a $D$-decoration:
  \begin{itemize}
  \item the domain of $\partial_D$ is infinite since all $Y_n$ are different (they form a path),
  \item for any $n$, the father of $P(Y_{n+1})$ (decorated by $\alpha_n$) is $O(Y_n)$ which is decorated by $\beta_n$ and we have by hypothesis
    \begin{align*}
    \beta^n={}& b^n_{1}\cons\cdots\cons b^n_{\ell(P(Y_{n+1}))}\cons\beta' & \Rta & & \alpha^{n+1}&\in b^n_{\ell(P(Y_{n+1}))},
    \end{align*}
  \item for any $n$, the father of $O(Y_{n})$ is $P(Y_m)$ for some $m\le n$ such that the head variable $y$ of $Y_n$ is abstracted in the $\ell(O(Y_n))^{th}$ position in $Y_m$ and 
  \begin{align*}
    \alpha^m={}& a^m_{1}\cons\cdots\cons a^m_{\ell(O(Y_n))}\cons\alpha' 
    & \Rta && a^m_{\ell(O(Y_n))}&=a_y \\
     && \Rta & & \beta^n&\in a^m_{\ell(O(Y_n))}.
    \rlap{\hbox to 52 pt{\hfill\qEd}}
    \end{align*}
  \end{itemize}\medskip

\noindent What follows is a variant of K\"onig lemma where we are looking for an infinite path in $\BT(X)$ that we can decorate.

\begin{lemma}\label{lm:Hy+Ap=QAp}
  Let $D$ be a \Kweb and $X\in\BTqf$ be a quasi-finite B\"ohm tree. If 
   $$ \llb X \rrb_{coind} \neq \llb X\rrb_{ind}, $$
   then $D$ is not tree-hyperimmune.
\end{lemma}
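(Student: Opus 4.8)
The plan is to turn the failure $\llb X\rrb_{coind}\neq\llb X\rrb_{ind}$ into an infinite, decorable path of $X$ and then to invoke the characterisation of tree-hyperimmunity (Lemma~\ref{lemma:sHyp<=>Hyp}). Since the inductive interpretation is the least and the co-inductive one the greatest interpretation (Def.~\ref{def:(co)indInt}), we always have $\llb X\rrb_{ind}\subseteq\llb X\rrb_{coind}$, so the hypothesis yields a point $(\vec a,\alpha)\in\llb X\rrb_{coind}\setminus\llb X\rrb_{ind}$. By Proposition~\ref{prop:IntTyBT} the judgement $\vec x:\vec a\vdash X:\alpha$ has a possibly infinite derivation $\pi$; were $\pi$ finite it would witness $(\vec a,\alpha)\in\llb X\rrb_{ind}$, so $\pi$ is genuinely infinite. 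Both rules of Figure~\ref{fig:IntTyBT} have finitely many premise-derivations ($(BT\dash\lambda)$ has one, and $(BT\dash\at)$ has exactly $\sum_i|b_i|$, since $n$ is finite and each $b_i$ is a finite antichain), hence $\pi$ is finitely branching and K\"onig's lemma provides an infinite branch.

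First I would read this branch as an infinite path in $X$. The subjects occurring along $\pi$ are subtrees of $X$, and a node $\lambda x_1\dots x_m.z\,X_1\cdots X_k$ is typed by $m$ consecutive $(BT\dash\lambda)$ steps followed by a single $(BT\dash\at)$ step whose followed premise descends into one argument $X_i$. A node contributes only finitely many $(BT\dash\lambda)$ steps, so the infinite branch performs infinitely many $(BT\dash\at)$ steps and thereby determines an infinite path $Y_0=X,Y_1,Y_2,\dots$ with $Y_{n+1}$ a child of $Y_n$; no $Y_n$ equals $\Omega$, since $\Omega$ carries no typing rule and would halt the branch. The decorating data are then read off the types on the branch: write the type of $Y_n$ as $\alpha^n=a_{x_1}\cons\cdots\cons a_{x_m}\cons\delta^n$, so that its leading antichains are the binder types $a_{x_j}$, and write the arrow type of the head variable consumed by the $(BT\dash\at)$ step at $Y_n$ as $\beta^n=b^n_1\cons\cdots\cons b^n_k\cons\beta$.

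The hard part is the bookkeeping showing that $(Y_n)_n$, $(\alpha^n)_n$, $(\beta^n)_n$ and the family $(a_x)_x$ of binder types form a path-$D$-decoration (as defined above). After $\alpha$-renaming $X$ so that all variables are distinct, the family $(a_x)_x$ is well defined and coherent: the context type of a variable is fixed at its unique binder, and every $(BT\dash\at)$ step using $x$ forces its arrow type into that single antichain. The clause on abstractions ($\alpha^n=a_{x_1}\cons\cdots\cons a_{x_m}\cons\alpha'$) holds by our reading of $\alpha^n$; the clause $\beta^n\in a_y$ holds because the side condition of $(BT\dash\at)$ requires $\beta^n$ to belong to the context type $a_y$ of the head variable $y$, which is its binder type; and the clause relating $\beta^n$ to $\alpha^{n+1}$ holds because the followed premise types $Y_{n+1}$ with some $\gamma$ in $b^n_{\ell(P(Y_{n+1}))}$, that is $\alpha^{n+1}\in b^n_{\ell(P(Y_{n+1}))}$, where $\ell(P(Y_{n+1}))\le k$ is precisely the argument index into which we descended. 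Matching the play's labels $\ell$ (argument index at player nodes, abstraction index at opponent nodes) against these antichain positions is the only genuinely delicate point.

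Finally I would apply Proposition~\ref{lm:path-decoration} to promote this path-$D$-decoration to a $D$-decoration $\partial_D$ of the labeled play of $X$, with $\partial_D(P(Y_n))=\alpha^n$ and $\partial_D(O(Y_n))=\beta^n$; its domain is infinite because the $Y_n$ are pairwise distinct. As the labeled play of a quasi-finite B\"ohm tree is recursive, finitely branching and $\Nat$-labeled (shown just above), this produces a recursive $\Nat$-labeled $D$-decorated tree, so $D$ is not tree-hyperimmune, as claimed. The one standing hypothesis to flag is that the labeled play is recursive only when $X$ is recursive and closed: closedness is harmless by the footnote allowing plays to be forests, and recursivity is exactly what Lemma~\ref{lm:omfBT} supplies in the intended application, where $X$ is a quasi-finite approximant of an actual $\lambda$-term.
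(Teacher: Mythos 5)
Your proof is correct, and it follows the paper's overall decomposition exactly: produce an infinite path $(Y_n)_n$ in $X$ carrying types $(\alpha^n)_n$, $(\beta^n)_n$ and binder antichains $(a_x)_x$, check that these form a path-$D$-decoration, then invoke Proposition~\ref{lm:path-decoration} and the recursivity of the labeled play to contradict tree-hyperimmunity. Where you genuinely diverge is in the mechanism producing the infinite path. The paper never mentions derivations here: it builds $(Y_n)_n$ by a guided induction maintaining the invariant $(\vec a,\alpha^n)\in\llb Y_n\rrb^{\vec x}_{coind}\setminus\llb Y_n\rrb^{\vec x}_{ind}$ --- coinductive membership supplies $\beta^n\in a_y$ and all the premises, and the failure of inductive membership forces at least one argument $X_j$ and one $\gamma\in b^n_j$ with $(\vec a,\gamma)\in\llb X_j\rrb^{\vec x}_{coind}\setminus\llb X_j\rrb^{\vec x}_{ind}$, which is where the path continues; infiniteness is automatic because the invariant rules out $Y_n=\Omega$. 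You instead pass through Proposition~\ref{prop:IntTyBT}, observe that any witnessing derivation must be infinite, and apply K\"onig's lemma to the (finitely branching, by finiteness of the antichains $b_i$) derivation tree. The two devices are interchangeable --- the paper itself advertises its induction as ``a variant of K\"onig lemma'' --- but yours makes the finite-branching hypothesis explicit and avoids carrying the ``not in $\llb\cdot\rrb_{ind}$'' bookkeeping down the path. You are also right, and slightly more careful than the paper, to flag that recursivity of the labeled play requires $X$ itself to be recursive, which is supplied in the intended application by Lemma~\ref{lm:omfBT}.
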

\begin{proof}
  We can assume that $X$ is closed (otherwise we could have taken $\lambda x_1...x_m.X$)\\
  Let $\alpha\in \llb X \rrb_{coind} - \llb X\rrb_{ind}$.\\
  We define a path-$D$-decoration of the labeled play of $X$, breaking the conditions of tree-hyperimmunity by Lemma~\ref{lm:path-decoration}. For that we give, inductively, an infinite path $(Y_n)_n$ in $X$, and three infinite sequences $(\alpha^n)_{n\ge 0},(\beta^n)_{n\ge 0}\in D^\Nat$ and $(a_x)^{x\in\FV(Y_1,Y_2..)}$ forming the path-$D$-decoration. Moreover, those are defined such that for all $n$, $(\vec a,\alpha^{n})\in\llb Y_{n}\rrb_{coind}^{\vec x}-\llb Y_{n}\rrb_{ind}^{\vec x}$: 
  \begin{itemize}
  \item $Y_0=X$ and $\alpha^0=\alpha$.
  \item Assume that we got $Y_{n}$. By non emptiness of $\llb Y_n\rrb_{coind}^{\vec x}$, we have $Y_n=\lambda x_{1}...x_{m}.y\ X_1\cdots X_k$ with $x_1...x_m$ as free variables:\\
    If we unfold $a_{x_1}\cons\cdots a_{x_m}^{n}\cons \alpha' := \alpha^{n}$, then there exists $\beta^n = b^n_1\cons\cdots b^n_k\cons \alpha_0''\in a_y$ (with $\alpha''\ge\alpha'$) such that for all $j$ and all $\gamma\in b^n_j$, we have $(\vec a,\gamma)\in \llb X_j\rrb_{coind}^{\vec x}$.\\
    In particular there is $j\le k$ and  $\alpha^{n+1}\in b^n_j$ such that  $(\vec a,\alpha^{n+1})\in \llb X_j\rrb_{qf}^{\vec x}-\llb X_j\rrb_{ind}^{\vec x}$.\\
    We set $Y_{n+1} := X_j$ so that
    \begin{itemize}
    \item $\beta^n = b^n_1\cons\cdots b^n_k\cons \alpha_0''\in a_y$ and $\alpha^{n+1}=\gamma\in b^n_j=b^n_{\ell(P(X_j))}$,
    \item $Y_n=\lambda x_{1}...x_{m}.x_i\ X_1\cdots X_k$ and $\alpha^{n}= a_{x_1}\cons\cdots a_{x_m}^{n}\cons \alpha'$,
    \item $Y_n=\lambda x_{1}...x_{m}.x_i\ X_1\cdots X_k$ and $\beta^n\in a_{y}$.\qedhere
    \end{itemize}
  \end{itemize}
\end{proof}

\begin{theorem}\label{th:Hy+Ap=QAp}
  Any hyperimmune approximable \Kweb $D$ is also quasi-approximable.
\end{theorem}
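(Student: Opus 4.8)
The plan is to reduce quasi-approximability to a pointwise identity between the inductive and the quasi-finite interpretations of B\"ohm trees, and then to discharge that identity by invoking the already-proved Lemma~\ref{lm:Hy+Ap=QAp}. Since $D$ is approximable, Definition~\ref{def:approximationProp} gives $\llb M\rrb^{\vec x}=\llb\BT(M)\rrb^{\vec x}_{ind}$ for every term $M$. Comparing this with the quasi-approximation property of Definition~\ref{def:quasi-approxTh}, it suffices to establish, for every term $M$,
\[
  \llb\BT(M)\rrb^{\vec x}_{ind}=\llb\BT(M)\rrb^{\vec x}_{qf}.
\]

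First I would dispatch the inclusion $\llb\BT(M)\rrb^{\vec x}_{ind}\subseteq\llb\BT(M)\rrb^{\vec x}_{qf}$, which is the easy one. Every finite B\"ohm tree is quasi-finite, that is $\BTf\subseteq\BTqf$, and on finite B\"ohm trees all interpretations coincide (the fact noted just after Definition~\ref{def:BTCohInterpret}); hence for each finite approximant $X\subseteq\BT(M)$ we have $\llb X\rrb^{\vec x}=\llb X\rrb^{\vec x}_{coind}$, and this set already occurs in the union defining $\llb\BT(M)\rrb^{\vec x}_{qf}$. The inclusion is then immediate from the two defining unions of Definitions~\ref{def:(co)indInt} and~\ref{def:BTinterpretQF}.

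The content lies in the reverse inclusion. Fix $(\vec a,\alpha)\in\llb\BT(M)\rrb^{\vec x}_{qf}$; by Definition~\ref{def:BTinterpretQF} there is a quasi-finite $X\subseteq\BT(M)$ with $(\vec a,\alpha)\in\llb X\rrb^{\vec x}_{coind}$. Because $X\in\BTqf\subseteq\BTomf$ and $X\subseteq\BT(M)$, Lemma~\ref{lm:omfBT} guarantees that $X$ is recursive, which is exactly the hypothesis needed to run the labeled-play construction underlying Lemma~\ref{lm:Hy+Ap=QAp}. Since $D$ is hyperimmune it is tree-hyperimmune by Lemma~\ref{lemma:sHyp<=>Hyp}, so the contrapositive of Lemma~\ref{lm:Hy+Ap=QAp} forces $\llb X\rrb^{\vec x}_{coind}=\llb X\rrb^{\vec x}_{ind}$. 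As $X\subseteq\BT(M)$ and every finite approximant of $X$ is, by transitivity of B\"ohm-tree inclusion, a finite approximant of $\BT(M)$, the inductive interpretation is monotone in its argument and $\llb X\rrb^{\vec x}_{ind}\subseteq\llb\BT(M)\rrb^{\vec x}_{ind}$. Thus $(\vec a,\alpha)\in\llb\BT(M)\rrb^{\vec x}_{ind}$, and taking the union over all such $X$ yields $\llb\BT(M)\rrb^{\vec x}_{qf}\subseteq\llb\BT(M)\rrb^{\vec x}_{ind}$, which closes the argument.

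Most of the work has already been carried out in Lemma~\ref{lm:Hy+Ap=QAp}, so this theorem is essentially an assembly of previously established facts. The only points demanding care are (i) verifying that the quasi-finite approximants we must treat are genuinely recursive---this is precisely why $\Omega$-finiteness together with Lemma~\ref{lm:omfBT} is invoked, rather than quasi-finiteness alone---and (ii) the elementary transitivity of B\"ohm-tree inclusion used for the monotonicity of $\llb\cdot\rrb_{ind}$. Neither should cause difficulty; the genuine obstacle, namely building a recursive decorated play that contradicts tree-hyperimmunity, is confined to the proof of Lemma~\ref{lm:Hy+Ap=QAp} and is therefore already behind us.
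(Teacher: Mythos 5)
Your proof is correct and follows essentially the same route as the paper's, which argues the contrapositive: a failure of quasi-approximability plus the approximation property yields a quasi-finite $X\subseteq_{qf}\BT(M)$ with $\llb X\rrb_{coind}\neq\llb X\rrb_{ind}$, and then Lemma~\ref{lm:Hy+Ap=QAp} together with Lemma~\ref{lemma:sHyp<=>Hyp} refutes hyperimmunity. Your direct presentation is equivalent, and your explicit appeal to Lemma~\ref{lm:omfBT} to certify that the approximant $X$ is recursive (so that its labeled play is a recursive tree) is a detail the paper's own proof leaves implicit.
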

\begin{proof}
  We will prove the contrapositive: We assume that $D$ is approximable but not quasi-approximable, then we show that $D$ is not hyperimmune.\\
  Since $D$ is not quasi-approximable, there is a $\lambda$-term $M\in\Lambda$ such that ${\llb M\rrb^{\vec x}\neq \llb \BT(M)\rrb_{qf}^{\vec x}}$.\\
  The approximation property gives that $\llb M\rrb^{\vec x}= \llb \BT(M)\rrb_{ind} \subset \llb \BT(M)\rrb_{qf}$. Thus there is a quasi finite $X\subseteq_{qf}\BT(M)$ such that $\llb X\rrb_{coind}\neq\llb X\rrb_{ind}$.\\
  By Lemma~\ref{lm:Hy+Ap=QAp}, the K-model $D$ is not tree-hyperimmune and thus not hyperimmune by Lemma~\ref{lemma:sHyp<=>Hyp}.
\end{proof}

\subsubsection{Quasi-approximation and extensionality imply full abstraction} 
%
\begin{theorem}\label{th:QApp->Adeq}
  Let $D$ be a K-model respecting the quasi-approximation property. Then it is inequationally adequate, {\em i.e.}, for all $M$ and $N$ such that $\llb M\rrb^{\vec x}\subseteq \llb N\rrb^{\vec x}$there is $M\leob N$.
\end{theorem}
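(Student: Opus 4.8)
The plan is to reduce inequational adequacy to a single semantic characterisation of head convergence: for every (possibly open) term $P$,
\[ P\Da \quad\Llra\quad \llb P\rrb^{\vec x}\neq\emptyset. \]
Granting this characterisation together with the monotonicity of the interpretation under arbitrary contexts, the theorem is immediate. Assume $\llb M\rrb^{\vec x}\subseteq\llb N\rrb^{\vec x}$ and let $C$ be any context with $C\llc M\rrc\Da$. The characterisation gives $\llb C\llc M\rrc\rrb\neq\emptyset$; monotonicity gives $\llb C\llc M\rrc\rrb\subseteq\llb C\llc N\rrc\rrb$, whence $\llb C\llc N\rrc\rrb\neq\emptyset$; and the characterisation applied to $C\llc N\rrc$ then yields $C\llc N\rrc\Da$. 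As $C$ was arbitrary, $M\leob N$.

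First I would prove the convergence characterisation, which is where quasi-approximation enters and which I expect to be the crux. By definition of the B\"ohm tree, $P\Da$ holds exactly when $\BT(P)\neq\Omega$, so by the quasi-approximation property (Def.~\ref{def:quasi-approxTh}), which states $\llb P\rrb^{\vec x}=\llb\BT(P)\rrb_{qf}^{\vec x}$, it suffices to show that $\llb\BT(P)\rrb_{qf}^{\vec x}$ is empty precisely when $\BT(P)=\Omega$. If $\BT(P)=\Omega$, the only quasi-finite B\"ohm tree below $\Omega$ is $\Omega$ itself, and $\llb\Omega\rrb_{coind}=\emptyset$ (Def.~\ref{def:BTCohInterpret}), so by Def.~\ref{def:BTinterpretQF} the whole union is empty. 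If instead $\BT(P)=\lambda\vec y.z\ U_1\cdots U_k\neq\Omega$, then the finite (hence quasi-finite) lower bound $\lambda\vec y.z\ \Omega\cdots\Omega$ already has a non-empty interpretation: using $i_D$ one forms the element $\underbrace{\emptyset\cons\cdots\cons\emptyset}_{k}\cons\delta\in D$ and assigns it to the head variable $z$, which by rule $(BT\dash\at)$ of Figure~\ref{fig:IntTyBT} makes the corresponding judgement derivable. Hence $\llb\BT(P)\rrb_{qf}^{\vec x}\neq\emptyset$, and the characterisation follows.

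Second I would record the monotonicity of the interpretation under contexts: if $\llb M\rrb^{\vec x}\subseteq\llb N\rrb^{\vec x}$ then $\llb C\llc M\rrc\rrb^{\vec z}\subseteq\llb C\llc N\rrc\rrb^{\vec z}$ for every context $C$. This is a routine structural induction on $C$ using Figure~\ref{fig:intLam}: the hole case is the hypothesis, the variable case is trivial, and the abstraction and application cases follow since their defining clauses are monotone in the interpretations of the immediate subterms. The only technical care is the treatment of variables captured under a $\lambda$, which is handled by strengthening the induction hypothesis to hold in every environment $\vec z\supseteq\FV(C\llc M\rrc)\cup\FV(C\llc N\rrc)$, weakening being used to adjust the environment as needed.

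The main obstacle is the ``non-empty implies convergent'' half of the characterisation, as this is the only point that genuinely requires the model to be sensible; quasi-approximation supplies exactly this, since it identifies $\llb P\rrb$ with $\llb\BT(P)\rrb_{qf}$ while $\Omega$ is interpreted as $\emptyset$. The converse half is comparatively soft, depending only on the non-emptiness of the interpretation of a non-$\Omega$ quasi-finite tree, hence on $D$ being non-trivial; I would make this non-triviality explicit, because the empty poset vacuously satisfies quasi-approximation yet fails adequacy (it equates $\I$ and $\Om$ semantically although $\I\Da$ while $\Om$ diverges). Monotonicity, though indispensable, contributes no real difficulty.
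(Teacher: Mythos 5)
Your proposal is correct and is essentially the paper's own argument: the paper likewise uses quasi-approximation to get $\llb M\rrb^{\vec x}=\llb\BT(M)\rrb_{qf}^{\vec x}=\llb\Omega\rrb_{qf}^{\vec x}=\emptyset$ for head-diverging $M$ (sensibility) and then appeals to the folklore fact that sensibility implies inequational adequacy, which is exactly the reasoning you unfold (convergence characterisation plus context monotonicity). Your observation that the converse half needs $D$ nonempty is a fair caveat that the paper's one-line appeal to folklore leaves implicit.
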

\begin{proof}
  $D$ is sensible (diverging terms have empty interpretations). Indeed, for any head-diverging term $M$, $\BT(M)=\Omega$ and thus 
  $$\llb M\rrb^{\vec x}=\llb \BT(M)\rrb_{qf}^{\vec x}=\llb \Omega\rrb_{qf}^{\vec x}=\emptyset.$$
  We conclude since sensibility implies inequational adequacy.
\end{proof}

\begin{theorem}\label{th:QApp->FullCompl}
  Let $D$ be a quasi-approximable extensional K-model. $D$ is inequationally complete, {\em i.e.}, for all $M$ and $N$; $M\leob N$ implies $\llb M\rrb^{\vec x}\subseteq\llb N\rrb^{\vec x}$.
\end{theorem}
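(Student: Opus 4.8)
The plan is to reduce inequational completeness to the syntactic characterization of $\leob$ given by Proposition~\ref{prop:H*le}, and then to transport a single membership witness along the resulting chain of B\"ohm trees, using the distribution lemma together with the invariance of the co-inductive interpretation under infinite $\eta$. The quasi-approximation property is what lets us replace the opaque interpretation $\llb M\rrb$ by the tractable $\llb\BT(M)\rrb_{qf}$, and extensionality enters only through Lemma~\ref{lemma:leetinfBTtoInt}.

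First I would fix $M\leob N$ and invoke Proposition~\ref{prop:H*le} to obtain B\"ohm trees $U,V$ with $\BT(M)\leetinf U\subseteq V\geetinf\BT(N)$. By the quasi-approximation property (Def.~\ref{def:quasi-approxTh}) it suffices to prove $\llb\BT(M)\rrb_{qf}^{\vec x}\subseteq\llb\BT(N)\rrb_{qf}^{\vec x}$, so I pick an arbitrary $(\vec a,\alpha)\in\llb\BT(M)\rrb_{qf}^{\vec x}$ and, unfolding Definition~\ref{def:BTinterpretQF}, a quasi-finite $X\subseteq_{qf}\BT(M)$ with $(\vec a,\alpha)\in\llb X\rrb_{coind}^{\vec x}$. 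The goal then becomes producing some quasi-finite $Y\subseteq_{qf}\BT(N)$ with $(\vec a,\alpha)\in\llb Y\rrb_{coind}^{\vec x}$, since that places $(\vec a,\alpha)$ in $\llb\BT(N)\rrb_{qf}^{\vec x}=\llb N\rrb^{\vec x}$ by quasi-approximation once more.

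The core of the argument is to slide $X$ along the chain. From $X\subseteq_{qf}\BT(M)\leetinf U$, the first item of Lemma~\ref{lemma:distrEtinfSubset} yields $X_1\in\BTqf$ with $X\leetinf X_1\subseteq_{qf}U$, and Lemma~\ref{lemma:leetinfBTtoInt} (where extensionality is used) gives $\llb X\rrb_{coind}^{\vec x}=\llb X_1\rrb_{coind}^{\vec x}$, hence $(\vec a,\alpha)\in\llb X_1\rrb_{coind}^{\vec x}$. By transitivity of $\subseteq$ we get $X_1\subseteq_{qf}V$, and since $V\geetinf\BT(N)$, the second item of Lemma~\ref{lemma:distrEtinfSubset} yields $Y\in\BTqf$ with $X_1\geetinf Y\subseteq_{qf}\BT(N)$; applying Lemma~\ref{lemma:leetinfBTtoInt} to $Y\leetinf X_1$ gives $(\vec a,\alpha)\in\llb Y\rrb_{coind}^{\vec x}$, which is exactly the $Y$ we needed. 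This closes the inclusion $\llb M\rrb^{\vec x}\subseteq\llb N\rrb^{\vec x}$.

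I expect the only genuine subtlety to be the bookkeeping of free variables and of the $\eta$-introduced abstractions when invoking the distribution lemma: the whole computation lives in a fixed context $\vec x\supseteq\FV(M)\cup\FV(N)$, and I must check that the expansions produced in Lemma~\ref{lemma:distrEtinfSubset} introduce only bound variables, so that $\FV$ is preserved and all the co-inductive interpretations may legitimately be compared in the same context. Once this is verified the chaining is purely formal, each step being a direct application of an already-established result, so no additional combinatorics on B\"ohm trees is required.
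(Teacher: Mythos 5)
Your proof is correct and follows essentially the same route as the paper's: both invoke Proposition~\ref{prop:H*le} to get the chain $\BT(M)\leetinf U\subseteq V\geetinf\BT(N)$, apply Lemma~\ref{lemma:distrEtinfSubset} twice to slide a quasi-finite approximant along it, and use Lemma~\ref{lemma:leetinfBTtoInt} to preserve the co-inductive interpretation at each $\eta^\infty$ step before closing with quasi-approximation. The only difference is notational (your $X, X_1$ versus the paper's $W, X$), plus your explicit remark on free-variable bookkeeping, which the paper leaves implicit.
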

\proof
  Let $(\vec a,\alpha)\in \llb M\rrb^{\vec x}$.\\
  By the quasi-approximation property, there is $W\subseteq_{qf}\BT(M)$ such that $(\vec a,\alpha)\in \llb W\rrb_{coind}^{\vec x}$.\\
  By Proposition~\ref{prop:H*le}, there are $U$ and $V$ such that $\BT(M)\leetinf U \subseteq V \geetinf \BT(N)$. By applying Lemma~\ref{lemma:distrEtinfSubset} on $W\subseteq_{qf}\BT(M)\leetinf U$, we get $X\in\BTqf$ and by applying it a second time \linebreak on~$X\subseteq_{qf}V\geetinf \BT(N)$ we get $Y$ such that:
  \begin{alignat*}6
    \BT(M)\ & \leetinf & U\ \ \ & \subseteq & V\ \ \  & \geetinf &\ \BT(N) \\
    \rotatebox[origin=c]{90}{$\subseteq$}_{qf}\ \ \ & & \rotatebox[origin=c]{90}{$\subseteq$}_{qf} & & \quad \rotatebox[origin=c]{90}{$\subseteq$}_{qf} & & \rotatebox[origin=c]{90}{$\subseteq$}_{qf}\quad  \\
    W\quad\ & \leetinf & X\ \ \  ={}& & X\ \ \ & \geetinf & Y \quad\ \ 
  \end{alignat*}
  Thus:
  \begin{align*}
    (\vec a,\alpha)\ \in\ \llb W\rrb_{coind}^{\vec x} ={}& \llb X \rrb^{\vec x}_{coind}  && \text{Lemma~\ref{lemma:leetinfBTtoInt}}\\
    ={}& \llb Y \rrb^{\vec x}_{coind} && \text{by Lemma~\ref{lemma:leetinfBTtoInt}}\\
    & \subseteq \llb N\rrb^{\vec x} && \text{by
                                       quasi-approximation}\rlap{\hbox
                                       to 66 pt{\hfill\qEd}}
  \end{align*}

\newpage
  \subsection{Full abstraction implies hyperimmunity}
    \label{ssec:FAtoH1}
%

\subsubsection{The counterexample} \quad \newline%
Suppose that $D$ is approximable but is not hyperimmune. By Definition~\ref{def:hyperim} of hyperimmunity, there exists a recursive $g:\Nat\rta\Nat$ and a sequence {$(\alpha_n)_{n\ge 0}\in D^\Nat$} such that 
\begin{align*}
  \alpha_n&=a_{n,1}\cons\cdots\cons a_{n,g(n)}\cons \alpha_n' 
  & &\text{with} &
  \alpha_{n+1}&\in \bigcup_{k\le g(n)}a_{n,k}.
\end{align*}
We will use the function $g$ to define a term $\J_g$ (Eq.~\ref{eq:defA}) such that $(\J_g\ \underline{0})$ is observationally equal to the identity in $\Lamb$ (Lemma~\ref{lemma:ineq}) but can be denotationally distinguished in $D$ (Lemma~\ref{lemma:DenotationalSeparation}). This allows to refute full abstraction:

\begin{theorem}\label{th:countex}
  If $D$ is approximable but not hyperimmune, then it is not fully abstract for the~$\lambda$-calculus.
\end{theorem}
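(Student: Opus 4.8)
The plan is to refute full abstraction by exhibiting two terms that are observationally equal but denotationally distinct. Since $D$ is assumed not hyperimmune, fix once and for all the recursive witness $g:\Nat\rta\Nat$ and the non-well-founded chain $(\alpha_n)_{n\ge 0}$ with $\alpha_n=a_{n,1}\cons\cdots\cons a_{n,g(n)}\cons\alpha_n'$ and $\alpha_{n+1}\in\bigcup_{k\le g(n)}a_{n,k}$ supplied above. The two terms will be the identity $\I$ and a term $\J_g\ \underline 0$, where $\J_g$ is a ``$g$-controlled'' infinite $\eta$-expansion of the identity: its B\"ohm tree is the infinitely $\eta$-expanded identity whose $n$-th level carries exactly $g(n)$ abstractions, so that it matches, level by level, the arrow decomposition of the $\alpha_n$.

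First I would construct $\J_g$ explicitly (this is Eq.~\ref{eq:defA}). The essential use of non-hyperimmunity is that $g$ is \emph{recursive}: Proposition~\ref{prop:autorecursivity} then provides a term computing, on input $\underline n$, the block of $g(n)$ abstractions required at level $n$; threading this through the Turing fixpoint $\FixP$ and the successor on the numeral argument yields a single $\lambda$-term $\J_g$ whose head reduction unfolds $\J_g\ \underline 0$ into $\lambda x_0\vec y^{0}.\,x_0\,(\ldots)$ and, co-inductively, into the claimed B\"ohm tree $X:=\BT(\J_g\ \underline 0)$. Recursiveness of $g$ also makes $X$ a recursive (quasi-finite) B\"ohm tree, which is what legitimizes it as a counterexample; had $g$ not been recursive, no such term could exist.

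Next come the two lemmas. For Lemma~\ref{lemma:ineq}, since $X$ is by construction an infinite $\eta$-expansion of $\BT(\I)=\lambda x_0.x_0$, we have $\J_g\ \underline 0\geetinf\I$, and Corollary~\ref{cor:geetinfImpEq} gives $\J_g\ \underline 0\equivob\I$ at once. For Lemma~\ref{lemma:DenotationalSeparation}, I would show $\alpha_0\in\llb\I\rrb\setminus\llb\J_g\ \underline 0\rrb$. By Lemma~\ref{lemma:leetinfBTtoInt} applied to $\BT(\I)\leetinf X$ (the interpretations agreeing on the finite tree $\BT(\I)$) we get $\llb\I\rrb=\llb X\rrb_{coind}$, while approximability gives $\llb\J_g\ \underline 0\rrb=\llb X\rrb_{ind}$; so the separation reduces to $\llb X\rrb_{ind}\subsetneq\llb X\rrb_{coind}$ with $\alpha_0$ as witness. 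That $\alpha_0\in\llb X\rrb_{coind}$ follows from an \emph{infinite} derivation of $\vdash X:\alpha_0$ in the system of Figure~\ref{fig:IntTyBT}, whose descent from level $n$ to level $n+1$ is precisely the chain condition $\alpha_{n+1}\in\bigcup_{k} a_{n,k}$. That $\alpha_0\notin\llb X\rrb_{ind}$ is because a finite derivation would have to type the level-$n$ subtree of $X$ with $\alpha_n$ for every $n$, and so could not terminate.

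The step I expect to be the main obstacle is exactly showing $\alpha_0\notin\llb X\rrb_{ind}$, i.e.\ that the infinite derivation is \emph{forced}: one must check that the way $\J_g$ wires the head variable of each level to an abstraction of the previous level leaves no finite ``shortcut'', so that the arrow selected by rule $(BT\dash\at)$ at level $n$ necessarily exposes $\alpha_{n+1}$, and truncating $X$ to any finite approximant $X'\subseteq_f X$ (which replaces a subtree by $\Omega$, of empty interpretation) destroys the type. Once both lemmas are in place, the theorem follows by contradiction: were $D$ fully abstract, $\J_g\ \underline 0\equivob\I$ would force $\llb\J_g\ \underline 0\rrb=\llb\I\rrb$, contradicting Lemma~\ref{lemma:DenotationalSeparation}.
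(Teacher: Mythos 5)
Your strategy is exactly the paper's: the same term $\J_g$ built from $\G$, $\FixP$ and $\Succ$ via Proposition~\ref{prop:autorecursivity}, the same proof of $\J_g\ \underline 0\equivob\I$ via infinite $\eta$-expansion and Corollary~\ref{cor:geetinfImpEq}, and the same reduction of the denotational separation to a gap between the finite approximants of $\BT(\J_g\ \underline 0)$ and the interpretation of $\I$. But two points need repair. The first is the witness: $\alpha_0$ itself need not belong to $\llb\I\rrb$, since elements of $\llb\I\rrb$ have the form $a\cons\alpha$ with $\alpha\le\beta$ for some $\beta\in a$, and nothing forces $\alpha_0=a_{0,1}\cons\cdots\cons a_{0,g(0)}\cons\alpha_0'$ to have that shape. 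The paper uses $\{\alpha_0\}\cons\alpha_0$, which is trivially in $\llb\I\rrb$ and whose unfolding feeds $\alpha_0$ to the head variable and $a_{0,1},\dots,a_{0,g(0)}$ to the $\eta$-expansion variables, so that the chain condition $\alpha_{n+1}\in\bigcup_k a_{n,k}$ can drive the descent. This is a local fix.

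The second point is more serious: the step you flag as the main obstacle --- that no finite approximant admits the typing --- is the actual mathematical content of the separation (the paper's Lemma~\ref{lemma:separationOfAnk}), and your sketch of it (``the arrow selected by $(BT\dash\at)$ at level $n$ necessarily exposes $\alpha_{n+1}$'') hides the two places where the hypotheses enter. The induction must be stated for \emph{every} antichain $a$ with $\alpha_n\in a$, not just the singleton at the root, because the environments $a_{n,i}$ assigned to the bound variables at deeper levels are arbitrary antichains; and the rule $(BT\dash\at)$ is free to select \emph{any} $\beta\in a$, not necessarily $\alpha_n$. When $\beta=\alpha_n$, the chain condition pushes the obligation down one level, matching the induction hypothesis. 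When $\beta\neq\alpha_n$, one uses that $a$ is an antichain (so $\beta\not\ge\alpha_n$) to find a coordinate $i$ where some $\gamma\in b_i$ is below no element of $a_{n,i}$, and then Lemma~\ref{lemma:incOfInterIfGeet} --- which needs both extensionality and approximability --- to compare $\llb\J_g^{n+1,k}(x_i)\rrb$ with $\llb\I\ x_i\rrb=\llb x_i\rrb$ and rule that branch out. Without this case analysis the ``forced descent'' is an assertion rather than a proof; with it, your argument closes and coincides with the paper's.
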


Basically, $(\J_g\ \underline{0})$ is a generalization of the term $\boldsymbol{J}$ used in \cite{CDZ87} to prove that the model $\Dinf^*$ (Ex.~\ref{example:1}) is not fully abstract. The idea is that $\boldsymbol{J}$ is the infinite $\eta$-expansion of the identity $\I$ where each level of the B\"ohm tree is $\eta$-expanded by one variable. Our term $(\J_g\ \underline{0})$ is also an infinite $\eta$-expansion of $\I$, but now, each level of the B\"ohm tree is $\eta$-expanded by $g(n)$ variables.\footnote{In the article \cite{bre13} of the same author, the reader may also find another counterexample based on the same kind of intuitions.}

\medskip

\noindent
Let $(\G_n)_{n\in \Nat}$ be the sequence of closed $\lambda$-terms defined by: 
\begin{equation}
 \newsym{\protect\G_n}:= \lambda ue x_1...x_{g(n)}.e\ (u\ x_1)\ \cdots\ (u\ x_{g(n)}) \label{eq:defG}
\end{equation}

\noindent The recursivity of $g$ implies the recursivity of the sequence $\G_n$. Thus, we can use Proposition~\ref{prop:autorecursivity}: there exists a $\lambda$-term \newsym{\protect\G} such that:
\begin{equation}
 \G\ \underline{n}\rta^* \G_n. \label{eq:redG}
\end{equation}

\noindent Recall that $\Succ$ denotes the Church successor function and $\FixP$ the Turing fixpoint combinator.\\ We define:
\begin{equation}
 \newsym{\protect\J_g} :=\FixP\ (\lambda uv. \G\ v\ (u\ (\Succ\ v))). \label{eq:defA}
\end{equation}
Then:
\begin{equation}
 \J_g\ \underline{n} \rta^* \G_n\ (\J_g\ \underline{n\+1}), \label{eq:redA}
\end{equation}
and its B\"ohm tree can be sketched as \todo{redraw}
\begin{center}
    \begin{tikzpicture}[description/.style={fill=white,inner sep=2pt},ampersand replacement=\&]
      \matrix (m) [matrix of math nodes, row sep=1em, column sep=0.5em, text height=1.5ex, text depth=0.25ex]
      { \& \& \hspace{-2em}\lambda e x_1...x_{g(0)}. e\hspace{-2em}\\
        \& \hspace{-2em}\lambda y_1...y_{g(1)}. x_1  \& \cdots \& \lambda y_1...y_{g(1)}. x_{g(0)} \hspace{-2em}\\ 
        \lambda z_1...z_{g(2)}. y_1  \& \cdots \& \cdots \& \cdots \& \lambda z_1...z_{g(2)}. y_{g(1)}\\
        \cdots  \& \cdots \& \cdots \& \cdots \& \cdots\\ };
      \path[-] (m-1-3) edge[] node[auto] {} (m-2-2);
      \path[-] (m-1-3) edge[] node[auto] {} (m-2-4);
      \path[-] (m-2-2) edge[] node[auto] {} (m-3-1);
      \path[-] (m-2-4) edge[] node[auto] {} (m-3-5);

      \path[transform canvas={xshift = 0.7em}] (m-1-3) edge[] node[auto] {} (m-2-3);
      \path[-] (m-1-3) edge[] node[] {} (m-2-3);
      \path[transform canvas={xshift = -0.7em}] (m-1-3) edge[] node[auto] {} (m-2-3);
      \path[transform canvas={xshift = 0.7em}] (m-2-3) edge[] node[auto] {} (m-3-3);
      \path[-] (m-2-3) edge[] node[] {} (m-3-3);
      \path[transform canvas={xshift = -0.7em}] (m-2-3) edge[] node[auto] {} (m-3-3);
      \path[transform canvas={xshift = 0.7em}] (m-2-2) edge[] node[auto] {} (m-3-2);
      \path[-] (m-2-2) edge[] node[] {} (m-3-2);
      \path[transform canvas={xshift = -0.7em}] (m-2-2) edge[] node[auto] {} (m-3-2);
      \path[transform canvas={xshift = 0.7em}] (m-2-4) edge[] node[auto] {} (m-3-4);
      \path[-] (m-2-4) edge[] node[] {} (m-3-4);
      \path[transform canvas={xshift = -0.7em}] (m-2-4) edge[] node[auto] {} (m-3-4);
      \path[transform canvas={xshift = 0.7em}] (m-3-3) edge[] node[auto] {} (m-4-3);
      \path[-] (m-3-3) edge[] node[] {} (m-4-3);
      \path[transform canvas={xshift = -0.7em}] (m-3-3) edge[] node[auto] {} (m-4-3);
      \path[transform canvas={xshift = 0.7em}] (m-3-2) edge[] node[auto] {} (m-4-2);
      \path[-] (m-3-2) edge[] node[] {} (m-4-2);
      \path[transform canvas={xshift = -0.7em}] (m-3-2) edge[] node[auto] {} (m-4-2);
      \path[transform canvas={xshift = 0.7em}] (m-3-4) edge[] node[auto] {} (m-4-4);
      \path[-] (m-3-4) edge[] node[] {} (m-4-4);
      \path[transform canvas={xshift = -0.7em}] (m-3-4) edge[] node[auto] {} (m-4-4);
      \path[transform canvas={xshift = 0.7em}] (m-3-1) edge[] node[auto] {} (m-4-1);
      \path[-] (m-3-1) edge[] node[] {} (m-4-1);
      \path[transform canvas={xshift = -0.7em}] (m-3-1) edge[] node[auto] {} (m-4-1);
      \path[transform canvas={xshift = 0.7em}] (m-3-5) edge[] node[auto] {} (m-4-5);
      \path[-] (m-3-5) edge[] node[] {} (m-4-5);
      \path[transform canvas={xshift = -0.7em}] (m-3-5) edge[] node[auto] {} (m-4-5);
    \end{tikzpicture}
\end{center}

\noindent Lemma~\ref{lemma:ineq} below proves that $\J_g\ \underline{0}$ is operationally equivalent to the identity \I. In fact it is an infinite $\eta$-expansion of \I. But first, we need the following auxiliary lemma.

\begin{lemma}\label{lemma:addz}
  For any terms $M,N\mathrm{\in}\Lamb$ and any fresh $z$:
  $$ (M\ z\geetinf N\ z)\quad \Rta\quad (M\geetinf N).$$
\end{lemma}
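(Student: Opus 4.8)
The plan is to work entirely at the level of B\"ohm trees, since by definition $M\geetinf N$ abbreviates $\BT(M)\geetinf\BT(N)$, and likewise for $M\ z$ and $N\ z$. First I would record exactly how applying a fresh variable acts on a B\"ohm tree. Writing $U=\BT(M)$ and performing a short case analysis on the head-normal form of $M$, one sees that $\BT(M\ z)$ is obtained from $U$ by a simple syntactic operation $U\mathbin{\at}z$: if $M$ head-diverges then $U\mathbin{\at}z=\Omega$; if $U=\lambda x_1\dots x_p.y\ U_1\cdots U_q$ with $p\ge 1$, then $U\mathbin{\at}z$ is $U$ with its outermost binder deleted and $x_1$ renamed to the fresh $z$ throughout, namely $\lambda x_2\dots x_p.y\ U_1[z/x_1]\cdots U_q[z/x_1]$; and if the head is already applied, $U=y\ U_1\cdots U_q$, then $U\mathbin{\at}z=y\ U_1\cdots U_q\ z$. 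Thus the lemma reduces to a purely combinatorial reflection statement: for $U=\BT(M)$, $V=\BT(N)$ and $z$ fresh, $U\mathbin{\at}z\geetinf V\mathbin{\at}z$ implies $U\geetinf V$.

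I would prove this reflection by co-induction on $\geetinf$. Define the candidate relation $R=\{(U,V)\mid U\mathbin{\at}z\geetinf V\mathbin{\at}z,\ z\text{ fresh}\}\cup{\geetinf}$ and show $R\subseteq{\geetinf}$ by verifying that $R$ is backward closed under the two generating rules of Definition~\ref{def:etainf}; since $\geetinf$ is the greatest relation closed under those rules, this suffices. Two auxiliary facts are used throughout: that $\geetinf$ is stable under renaming a variable by a fresh one (so the bookkeeping renaming $x_1\leftrightarrow z$ is harmless), and that $\geetinf$-derivations are deterministic, so that from $U\mathbin{\at}z\geetinf V\mathbin{\at}z$ one may invert and read off the premises of the unique applicable rule. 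The determinism is precisely the observation made in the proof of Lemma~\ref{lemma:OmegaBijEtInf}.

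The case analysis for backward closure splits on the shapes of $U$ and $V$. When both head-diverge the pair is $(\Omega,\Omega)$ and the base rule $\Omega\geetinf\Omega$ closes it immediately. When both begin with an abstraction, $\mathbin{\at}z$ merely strips one binder and performs the same renaming on each side; inverting $U\mathbin{\at}z\geetinf V\mathbin{\at}z$, renaming $z$ back to $x_1$, and reinstating the common binder $\lambda x_1$ yields $U\geetinf V$ with all sub-premises already lying in $\geetinf$. The delicate, and I expect hardest, case is when $\mathbin{\at}z$ acts differently on the two sides, deleting a leading binder on one while appending a trailing $z$ on the other. Here the essential computation is that the premise contributed by the appended argument, of the form ``$W\geetinf z$'', becomes exactly the trailing premise ``$W\geetinf x_1$'' demanded by the $\at$-rule of Definition~\ref{def:etainf} once $z$ is renamed to $x_1$ and the binder $\lambda x_1$ is reintroduced; symmetrically an appended trailing binder on the other side is matched against a corresponding $U_{k+i}\geetinf x_{n+i}$ premise. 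Reconciling these two behaviours, so that an ``extra $z$'' on one side is always paired with an ``extra abstraction'' on the other, is the crux of the proof, and it is where the determinism of $\geetinf$-derivations and the renaming-stability are indispensable. This pairing must be carried out with care, since the two sides may differ in how many top-level binders survive $\mathbin{\at}z$; once it is established the remaining sub-pairs fall into $\geetinf$ and the co-induction goes through.
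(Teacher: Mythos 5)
Your strategy is essentially the paper's strategy in different clothing: the paper also argues by a (implicitly coinductive) case analysis on the head normal form of $M$, and your operation $U\mathbin{@}z$ is just a B\"ohm-tree-level transcription of the ways $M\,z$ can head-reduce. The packaging via an explicit candidate relation $R$ checked for backward closure is fine and arguably cleaner. The problem is that the case you defer --- and yourself identify as the crux --- is not merely delicate: one half of it is false, so $R$ is \emph{not} backward closed and the coinduction cannot be completed. Take $M=x$ and $N=\lambda u.x\,u$ with $z$ fresh. Then $\BT(M\,z)=\BT(N\,z)=x\,z$, so $M\,z\geetinf N\,z$ by reflexivity of $\geetinf$; but $\BT(M)=x$ has no leading abstraction while $\BT(N)=\lambda u.x\,u$ has one, and the second rule of Definition~\ref{def:etainf} only permits \emph{extra} abstractions on the left of $\geetinf$, so $M\not\geetinf N$ (indeed $N\geetinf M$). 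This is exactly your sub-case where $\mathbin{@}z$ appends a trailing $z$ on the left while stripping a binder on the right: the premise ``$W\geetinf z$'' you hope to recycle as a trailing premise $U_{k+i}\geetinf x_{n+i}$ would have to account for an extra abstraction on the \emph{right-hand} tree, which the rule does not allow. The symmetric sub-case (left strips a binder, right appends a trailing $z$) does go through, as do the two uniform cases; but the failing sub-case cannot be repaired, because the implication is simply not true there.

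You should not feel too guilty: the paper's own proof has the same blind spot. In its $n=0$ branch it asserts $N\,z\rta_h^* y\,N_1\cdots N_k\,z$, silently assuming that the head normal form of $N$ is also un-abstracted; if instead $N\rta_h^*\lambda u.y\,N_1\cdots N_k\,u$ one lands on the counterexample above. The lemma is only ever invoked in Lemma~\ref{lemma:ineq} with $N=\I$ and $\BT(M\,z)$ beginning with $g(n)$ abstractions, a configuration in which the offending case never arises. So before your coinduction (or the paper's case analysis) can close, the statement itself must be restricted --- e.g.\ by assuming the head normal forms of $M$ and $N$ are compatibly shaped, by specializing to the instance actually used, or by weakening the conclusion to $M\equivob N$ via Proposition~\ref{prop:H*le}. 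As it stands, any purported proof of the lemma in full generality must contain an error precisely at the step you left open.
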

\begin{proof}
  If $M$ diverges, then so does $(M\ z)$, thus~$(N\ z)\Ua$ and $N\Ua$, so that $\BT(M)=\BT(N)=\Omega$.\\
  Otherwise we have $M\rta_h^* \lambda x_1\dots x_n. y\ M_1\cdots M_k$:
  \begin{itemize}
  \item If $n=0$, then $\ M\ z\rta_h^*y\ M_1\cdots M_k\ z\ $ and $\ N\ z\rta_h^*y\ N_1\cdots N_k\ z\ $ with $M_i\geetinf N_i$, thus~$M\geetinf N$.
  \item Otherwise, $M\ z\rta_h^*\lambda x_2\dots x_n.y[z/x_1]\ M_1[z/x_1]\cdots M_k[z/x_1]$\\
   and $\ N\ z \rta_h^*N'\preceq_\eta\lambda x_2\dots x_n.y[z/x_1] \ N_1\cdots N_k\ $ with $\ M_i[z/x_1]\geetinf N_i\ $ for all $i$.
   Thus, since~$z$ is fresh, $N \rta_h^*\lambda x_1.N'[x_1/z]\preceq_\eta\lambda x_1\dots x_n.y \ N_1[x_1/z]\cdots N_k[x_1/z]$ 
    and $M_i\geetinf N_i[x_1/z]$, so $M\geetinf N$.\qedhere
  \end{itemize}
\end{proof}

\begin{lemma}\label{lemma:ineq}
  We have $\J_g\ \underline{0}\equivob \Id$.
\end{lemma}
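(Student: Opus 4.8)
The plan is to show that $\J_g\ \underline{0}$ is an infinite $\eta$-expansion of the identity, i.e. $\J_g\ \underline{0}\geetinf\I$, and then to conclude with Corollary~\ref{cor:geetinfImpEq}, which turns an infinite $\eta$-expansion into an observational equivalence. The self-similar shape of the B\"ohm tree sketched above suggests proving a slightly more symmetric, \emph{applied} statement by co-induction: for every $n\in\Nat$ and every fresh variable $z$,
$$(\J_g\ \underline{n})\ z\ \geetinf\ z.$$
First I would unfold the relevant reduction. From \eqref{eq:redA} we have $\J_g\ \underline{n}\rta^*\G_n\ (\J_g\ \underline{n\+1})$, and expanding $\G_n$ and feeding it the fresh head $z$ (substituting $e:=z$) gives the head-normal form
$$(\J_g\ \underline{n})\ z\ \rta^*\ \lambda x_1...x_{g(n)}.\ z\ ((\J_g\ \underline{n\+1})\ x_1)\cdots((\J_g\ \underline{n\+1})\ x_{g(n)}),$$
so that
$$\BT((\J_g\ \underline{n})\ z)=\lambda x_1...x_{g(n)}.\ z\ \BT((\J_g\ \underline{n\+1})\ x_1)\cdots\BT((\J_g\ \underline{n\+1})\ x_{g(n)}).$$
Note in particular that no $\Omega$ ever appears, since each $\J_g\ \underline{m}$ is head convergent.

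Since $\geetinf$ on B\"ohm trees is defined co-inductively (Def.~\ref{def:etainf}), it suffices to exhibit a relation consistent with its rules that contains all the pairs above. I would take
$$R=\{(\BT((\J_g\ \underline{n})\ z),\ z)\mid n\in\Nat,\ z\text{ a variable}\}.$$
For a pair in $R$, the right-hand side $z$ has empty abstraction prefix, head $z$ and no arguments, so the only rule whose conclusion can match is $(\mathtt{\eta\infty\at})$, instantiated with $n=k=0$ and $m=g(n)$. Its conclusion is then exactly $\BT((\J_g\ \underline{n})\ z)\geetinf z$, and its premises are $\BT((\J_g\ \underline{n\+1})\ x_i)\geetinf x_i$ for $i\le g(n)$ (the $i$-th variable abstracted after the head plays the role of $x_{n+i}=x_i$ in the rule). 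Each such premise is again a pair of $R$, with $n\+1$ in place of $n$ and $x_i$ in place of $z$. Hence $R$ is closed under the rules and therefore $R\subseteq{\geetinf}$, which establishes the displayed co-inductive claim.

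Finally I would strip the argument. Taking $n=0$ and noting that $\I\ z\rta z$, so $\BT(\I\ z)=z$, the claim reads $(\J_g\ \underline{0})\ z\geetinf\I\ z$ for fresh $z$. Lemma~\ref{lemma:addz} then removes the argument and yields $\J_g\ \underline{0}\geetinf\I$, and Corollary~\ref{cor:geetinfImpEq} gives $\J_g\ \underline{0}\equivob\I=\Id$, as required. The only delicate point is the co-inductive step: one must check that the self-reference of $R$ lines up with the indices of rule $(\mathtt{\eta\infty\at})$ under the renaming of bound variables, everything else being a routine unfolding of $\beta$- and head reduction. (Alternatively, one could avoid Lemma~\ref{lemma:addz} altogether by enlarging $R$ with the pairs $(\BT(\J_g\ \underline{n}),\lambda x_0.x_0)$, whose closure under $(\mathtt{\eta\infty\at})$ reduces to the very premises already shown; but using Lemma~\ref{lemma:addz} keeps the co-induction minimal.)
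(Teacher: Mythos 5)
Your proposal is correct and follows essentially the same route as the paper: a co-inductive proof that $(\J_g\ \underline{n})\ z\geetinf z$ for fresh $z$ (the paper phrases the co-inductive step as ``$\geetinf\lambda\vec x^{g(n)}.z\ x_1\cdots x_{g(n)}\succeq_\eta z$'' where you fold it into a single application of rule $(\mathtt{\eta\infty\at})$ with $k=0$, $m=g(n)$, via an explicitly rule-closed relation $R$), followed by Lemma~\ref{lemma:addz} and Corollary~\ref{cor:geetinfImpEq}. The extra care you take in exhibiting $R$ and checking its closure only makes the paper's ``by co-Ind'' step more explicit.
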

\begin{proof}
  We prove that $(\J_g\ \underline{n}\ z)\geetinf z$ (where $z$ is fresh) for every $n$, by co-induction and unfolding of $\BT(\J_g\ \underline{n}\ z)$: 
  \begin{align*}
    & \boldsymbol{BT}(\J_g\ \underline{n}\ z) \\ 
    ={}& \boldsymbol{BT}(\boldsymbol{G}_n\ (\J_g\ \underline{n\+1})\ z) && \!\!\text{by \eqref{eq:redA}}\\
    ={}& \lambda \vec x^{g(n)}\!\!.z\ \boldsymbol{BT}(\J_g\ \underline{n\+1}\ x_1) \cdots \boldsymbol{BT}(\J_g\ \underline{n\+1}\ x_{g(n)}) && \!\!\text{by \eqref{eq:defG}}\\
    & \geetinf \lambda \vec x^{g(n)}.z\  x_1\cdots x_{g(n)} && \hspace{-1.8em}\text{by co-Ind}\\
    & \succeq_\eta z
  \end{align*}
  By applying Lemma~\ref{lemma:addz}, we know that $(\J_g\ \underline{n})\geetinf\Id$ and by Corollary~\ref{cor:geetinfImpEq} that $\J_g\ \underline{0}\equivob \Id$.
\end{proof}

\subsubsection{Denotational separation} \quad \newline
In this section we show  that $\J_g\ \underline{0}$ and $\Id$ are denotationally separated (Lemma~\ref{lemma:DenotationalSeparation}), despite being operationally equivalent.

Let $\newsym{\protect\J_g^{n,k}(z)}\in\BTf(\J_g\ \underline n\ z)$ be the truncation of $\BT(\J_g\ \underline n\ z)$ at depth $k$ (in particular $\J_g^{n,0}=\Omega$). 

\begin{example}
  For example, $\J_g^{5,3}(z)$ is the B\"ohm tree:

  \begin{tikzpicture}[description/.style={fill=white,inner sep=2pt},ampersand replacement=\&]
    \matrix (m) [matrix of math nodes, row sep=1em, column sep=0.5em, text height=1.5ex, text depth=0.25ex]
    { \& \& \hspace{-2em}\lambda x_1...x_{g(5)}. z\hspace{-2em}\\
      \& \hspace{-2em}\lambda y_1...y_{g(6)}. x_1  \& \cdots \& \lambda y_1...y_{g(6)}. x_{g(5)}\hspace{-2em}\\ 
      \lambda z_1...z_{g(7)}. y_1\ \Omega \cdots \Omega  \& \cdots \& \cdots \& \cdots \& \lambda z_1...z_{g(7)}. y_{g(6)}\ \Omega \cdots \Omega \\};
    \path[-] (m-1-3) edge[] node[auto] {} (m-2-2);
    \path[-] (m-1-3) edge[] node[auto] {} (m-2-4);
    \path[-] (m-2-2) edge[] node[auto] {} (m-3-1);
    \path[-] (m-2-4) edge[] node[auto] {} (m-3-5);

    \path[transform canvas={xshift = 0.7em}] (m-1-3) edge[] node[auto] {} (m-2-3);
    \path[-] (m-1-3) edge[] node[] {} (m-2-3);
    \path[transform canvas={xshift = -0.7em}] (m-1-3) edge[] node[auto] {} (m-2-3);
    \path[transform canvas={xshift = 0.7em}] (m-2-3) edge[] node[auto] {} (m-3-3);
    \path[-] (m-2-3) edge[] node[] {} (m-3-3);
    \path[transform canvas={xshift = -0.7em}] (m-2-3) edge[] node[auto] {} (m-3-3);
    \path[transform canvas={xshift = 0.7em}] (m-2-2) edge[] node[auto] {} (m-3-2);
    \path[-] (m-2-2) edge[] node[] {} (m-3-2);
    \path[transform canvas={xshift = -0.7em}] (m-2-2) edge[] node[auto] {} (m-3-2);
    \path[transform canvas={xshift = 0.7em}] (m-2-4) edge[] node[auto] {} (m-3-4);
    \path[-] (m-2-4) edge[] node[] {} (m-3-4);
    \path[transform canvas={xshift = -0.7em}] (m-2-4) edge[] node[auto] {} (m-3-4);
  \end{tikzpicture}
\end{example}

\noindent We recall that the sequence $(\alpha_n)_{n\ge 0}$, obtained from the refutation of the hyperimmunity, verifies~$\alpha_n=a_{n,1}\cons\cdots\cons a_{n,g(n)}\cons \alpha_n'$ with $\alpha_{n+1}\in \bigcup_{k\le g(n)}a_{n,k}$.

\begin{lemma}\label{lemma:separationOfAnk}
  For all $n$ and $k$, and for all $a\in\Achf{D}$ such that $\alpha_n\in a$, we have 
  $$(a,\alpha_n)\not\in\llb\J_g^{n,k}(z)\rrb_D^z.$$
\end{lemma}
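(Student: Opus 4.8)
The plan is to prove the statement by induction on the truncation depth $k$, uniformly over $n$ and over all antichains $a$ with $\alpha_n\in a$. The base case $k=0$ is immediate: $\J_g^{n,0}(z)=\Omega$ and $\llb\Omega\rrb^z_D=\emptyset$, so no pair lies in it.

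For the inductive step I would fix $n$ and an antichain $a\ni\alpha_n$, assume for contradiction that $(a,\alpha_n)\in\llb\J_g^{n,k+1}(z)\rrb^z_D$, and unfold the definition using the one-step shape $\J_g^{n,k+1}(z)=\lambda x_1\dots x_{g(n)}.\,z\,\J_g^{n+1,k}(x_1)\cdots\J_g^{n+1,k}(x_{g(n)})$ (obtained from \eqref{eq:redA}, \eqref{eq:defG} and the definition of the truncation). Writing $\alpha_n=a_{n,1}\cons\cdots\cons a_{n,g(n)}\cons\alpha_n'$ and applying the interpretation clauses of Figure~\ref{fig:intLam} (abstraction, then the application clause iterated over the $g(n)$ arguments whose head is $z$), membership yields an element $\delta\in a$ together with antichains $c_1,\dots,c_{g(n)}$ such that $c_1\cons\cdots\cons c_{g(n)}\cons\alpha_n'\le_D\delta$ and, for every $j\le g(n)$ and every $\gamma\in c_j$, $(a_{n,j},\gamma)\in\llb\J_g^{n+1,k}(z)\rrb^z_D$ (the remaining context components drop out since $x_j$ is the only free variable of $\J_g^{n+1,k}(x_j)$).

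The hard part is that the type $\delta$ chosen for $z$ in the derivation need not a priori equal $\alpha_n$, so I cannot immediately read off a problematic premise. To force $\delta=\alpha_n$ I would use the global bound $\llb\J_g^{n+1,k}(z)\rrb^z_D\subseteq\llb z\rrb^z_D$: indeed $\J_g^{n+1,k}(z)\subseteq_f\BT(\J_g\ \underline{n+1}\ z)$ gives $\llb\J_g^{n+1,k}(z)\rrb^z_D\subseteq\llb\J_g\ \underline{n+1}\ z\rrb^z_D$ by the approximation property (Def.~\ref{def:approximationProp}), while $\J_g\ \underline{n+1}\ z\geetinf z$ (proved for every index in the proof of Lemma~\ref{lemma:ineq}) together with extensionality gives $\llb\J_g\ \underline{n+1}\ z\rrb^z_D\subseteq\llb z\rrb^z_D$ by Lemma~\ref{lemma:incOfInterIfGeet}. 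Since $\llb z\rrb^z_D=\{(a',\gamma)\mid\gamma\le_D\beta'\in a'\}$, every $\gamma\in c_j$ is below some element of $a_{n,j}$, that is $a_{n,j}\ge_{\Achf{D}}c_j$. Writing $\delta=d_1\cons\cdots\cons d_{g(n)}\cons\beta$, the inequality $c_1\cons\cdots\cons c_{g(n)}\cons\alpha_n'\le_D\delta$ unfolds (using $u\cons\xi\le_D v\cons\rho \Leftrightarrow u\ge_{\Achf{D}}v \wedge \xi\le_D\rho$) to $c_j\ge_{\Achf{D}}d_j$ for all $j$ and $\alpha_n'\le_D\beta$; combining with $a_{n,j}\ge_{\Achf{D}}c_j$ yields $a_{n,j}\ge_{\Achf{D}}d_j$ for every $j$, hence $\alpha_n\le_D\delta$. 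As $\alpha_n$ and $\delta$ both belong to the antichain $a$, this forces $\alpha_n=\delta$, and therefore $c_j=a_{n,j}$ for all $j$. Finally I would pick $j_0\le g(n)$ with $\alpha_{n+1}\in a_{n,j_0}=c_{j_0}$ (such $j_0$ exists by the defining property $\alpha_{n+1}\in\bigcup_{j\le g(n)}a_{n,j}$): taking $\gamma=\alpha_{n+1}\in c_{j_0}$ in the derivation gives $(a_{n,j_0},\alpha_{n+1})\in\llb\J_g^{n+1,k}(z)\rrb^z_D$ with $\alpha_{n+1}\in a_{n,j_0}$, contradicting the induction hypothesis at depth $k$ for the index $n+1$. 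This closes the induction and establishes the lemma.
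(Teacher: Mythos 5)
Your proof is correct and follows essentially the same route as the paper's: induction on $k$ with the base case given by $\llb\Omega\rrb^z=\emptyset$, and the inductive step unfolding the interpretation and exploiting the containment $\llb\J_g^{n+1,k}(z)\rrb^z\subseteq\llb z\rrb^z$ obtained from approximability, $\J_g\ \underline{n\+1}\ z\geetinf z$ and Lemma~\ref{lemma:incOfInterIfGeet}. The only (cosmetic) difference is that you first force the witness $\delta\in a$ to equal $\alpha_n$ via antisymmetry of $\le_{\Achf{D}}$ and then invoke the induction hypothesis, whereas the paper splits into the cases $\beta=\alpha_n$ and $\beta\neq\alpha_n$ and refutes the latter directly.
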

\begin{proof}
  By induction on $k$:
  \begin{itemize}
  \item ($k\!=\!0$): since $\J_n^{n,0}=\Omega$ then by the approximation property we derive $(a,\alpha_n)\!\not\in\! \llb\J_g^{n,k}(z)\rrb_D^z\!=\!\emptyset$.
  \item ($k+1$):
    Remark that $\J_g^{n,k+1}(z)= \lambda x_1...x_{g(n)}.z\ (\J_g^{n+1,k}(x_1))\ \cdots\ (\J_g^{n+1,k}(x_{g(n)}))$ and that for all~$i$,~$x_i$ is the only free variable of $\J_g^{n+1,k}(x_i)$.\\
    We unfold $\alpha_n=a^n_1\cons\cdots\cons a^n_{g(n)}\cons\alpha_n'$.
    Then $(a,\alpha_n)$ belongs to $\llb\J_g^{n,k\+1}(z)\rrb_D^z$ iff there \linebreak is~$\beta=b_1\cons\cdots\cons b_{g(n)}\cons\alpha_n''\in a$ (with $\alpha''_n\ge\alpha'_n$) such that for all~$i\le g(n)$ and for  all~$\gamma\in b_i$, there is $(a_i^n,\gamma)\in \llb\J_g^{n+1,k}(x_i)\rrb_D^{x_i}$. The refutation has two cases:
    \begin{itemize}
    \item For $\beta=\alpha_n$: there is $i\le g(n)$ such that $\alpha_{n+1}\in b_i=a_i^n$, so that the induction hypothesis gives $(a_i^n,\alpha_{n+1})\not\in \llb\J_g^{n+1,k}(x_i)\rrb_D^{x_i}$.
    \item For $\beta\neq\alpha_n$, since $a$ is an anti-chain and $\alpha_n\in a$, $\beta\not\ge\alpha_n$. We have seen that $\alpha''_n\ge\alpha'_n$, thus, there is $i\le g(n)$ such that $b_i\not\le a^n_i$. In particular, there is $\gamma\in b_i$ such that $\gamma\not\le\delta$ for any $\delta\in a_i^n$, thus $(a_i^n,\gamma)\not\in\llb \I\ x_i\rrb_D^{x_i}$. Since $(\I\ x_i)\leetinf (\J_g\ \underline{n\+1}\ x_i)$, by applying Lemma~\ref{lemma:incOfInterIfGeet} we obtain $\llb \I\ x_i\rrb_D^{x_i} \supseteq \llb \J_g\ \underline{n\+1}\ x_i\rrb_D^{x_i}\supseteq \llb \J_{n+1}^{n+1,k}(x_i)\rrb_D^{x_i}$.\qedhere
    \end{itemize}
  \end{itemize}
\end{proof}

\begin{lemma}\label{lemma:DenotationalSeparation}
  The term $\J_g\ \underline{n}$ (for any $n$) and the identity are denotationally separated in $D$:
  \[ \llb \J_g\ \underline n\rrb_D \quad \neq \quad \llb \I \rrb_D\]
\end{lemma}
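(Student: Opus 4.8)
The plan is to exhibit a single point of $D\Rta D$ that lies in $\llb\I\rrb_D$ but not in $\llb\J_g\ \underline n\rrb_D$. The natural candidate is the pair $(\{\alpha_n\},\alpha_n)$, where $\alpha_n$ is the element furnished by the refutation of hyperimmunity. Rather than comparing the closed interpretations directly, I would first apply both terms to a fresh variable $z$ and compare $\llb\J_g\ \underline n\ z\rrb_D^z$ with $\llb\I\ z\rrb_D^z$; since the interpretation is compositional (a model of the \Lcalcul), any difference between these applied interpretations forces $\llb\J_g\ \underline n\rrb_D\neq\llb\I\rrb_D$.

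The identity side is immediate: $\I\ z$ reduces to $z$, so $\llb\I\ z\rrb_D^z=\llb z\rrb_D^z=\{(a,\alpha)\mid\alpha\le\beta\in a\}$ by Figure~\ref{fig:intLam}, and since $\alpha_n\le\alpha_n\in\{\alpha_n\}$ we get $(\{\alpha_n\},\alpha_n)\in\llb\I\ z\rrb_D^z$.

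For the $\J_g$ side I would use approximability. The finite truncations $\J_g^{n,k}(z)$ form an increasing family cofinal among the finite B\"ohm trees below $\BT(\J_g\ \underline n\ z)$ (any finite approximant of depth $k$ is contained in $\J_g^{n,k}(z)$), so the approximation property (Def.~\ref{def:approximationProp}) yields
$$\llb\J_g\ \underline n\ z\rrb_D^z\ =\ \llb\BT(\J_g\ \underline n\ z)\rrb_{ind}^z\ =\ \bigcup_{k}\llb\J_g^{n,k}(z)\rrb_D^z.$$
Applying Lemma~\ref{lemma:separationOfAnk} with $a=\{\alpha_n\}$, which is a finite antichain containing $\alpha_n$, gives $(\{\alpha_n\},\alpha_n)\notin\llb\J_g^{n,k}(z)\rrb_D^z$ for every $k$, hence $(\{\alpha_n\},\alpha_n)\notin\llb\J_g\ \underline n\ z\rrb_D^z$. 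Combining the two sides, $\llb\J_g\ \underline n\ z\rrb_D^z\neq\llb\I\ z\rrb_D^z$, and therefore $\llb\J_g\ \underline n\rrb_D\neq\llb\I\rrb_D$.

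The only genuinely delicate step is the middle one: it is precisely where approximability is used, and one must check that the chain $(\J_g^{n,k}(z))_k$ really exhausts the inductive interpretation of $\BT(\J_g\ \underline n\ z)$. This is routine but essential, since without it the pointwise separations of Lemma~\ref{lemma:separationOfAnk} would not assemble into a separation of the full interpretation; everything else is bookkeeping with the interpretation clauses of Figure~\ref{fig:intLam}.
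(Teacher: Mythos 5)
Your proposal is correct and follows essentially the same route as the paper: the same separating point (the paper uses $\{\alpha_n\}\cons\alpha_n\in\llb\I\rrb_D$, which corresponds to your $(\{\alpha_n\},\alpha_n)$ under the abstraction clause), the same reduction via the approximation property to the cofinal chain of truncations $\J_g^{n,k}(z)$, and the same appeal to Lemma~\ref{lemma:separationOfAnk}. The only cosmetic difference is that you uncurry by applying both terms to a fresh variable $z$ and invoke compositionality, where the paper abstracts the truncations as $\lambda z.\J_g^{n,k}(z)$ and compares closed interpretations directly.
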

\begin{proof}
  Using the approximation property and extensionality, it is sufficient to prove that 
  $$\{\alpha_0\}\cons\alpha_0\not\in \bigcup_k\llb\lambda z.\J_g^{n,k}(z)\rrb_D=\bigcup_{U\in\BTf(\J_g\ \underline n)}\llb U\rrb_D=\llb \J_g\ \underline n\rrb_D,$$
  which can be obtained by the application of Lemma~\ref{lemma:separationOfAnk}.
\end{proof}

\noindent This concludes the proof of the main theorem (Theorem \ref{th:final}):

\smallskip

\begin{em}
  For any extensional approximable \Kweb $D$, the following are equivalent:
  \begin{enumerate}
  \item $D$ is hyperimmune,
  \item $D$ is inequationally fully abstract for $\Lambda$,
  \item $D$ is fully abstract for $\Lambda$.
  \end{enumerate}
\end{em}

\section*{Conclusion}
\label{Conclusion}

In this paper, we have introduced two very new notions (hyperimmunity and quasi-approximability) on top of two known notions (full abstraction for \Hst and approximability) and a lot of different sub-notions (sensibility, extensionality, theory BT). The relations between these notions may not be clear for the reader, even for classic notions ({\em e.g.}, few people realize that full abstraction for \Hst does not implies approximability in general).

\begin{figure}
  \caption{Lattice of the properties considered in this paper.\label{fig:theories}}
  \begin{center}
    \begin{tikzpicture}
      \node (b') at (6,0.7) {$\beta$};
      \node (b) at (6,1) {$\bullet\hspace{-6.5pt}$};
      \node (H') at (2.8,1.9) {$\mathcal H$};
      \node (H) at (3,2) {$\bullet\hspace{-6.5pt}$};
      \node (bn') at (9.5,1.8) {$\beta\eta$};
      \node (bn) at (9,2)  {$\bullet\hspace{-6.5pt}$};
      \node (BT') at (3.5,2.9) {$\BT$};
      \node (BT) at (3,3)  {$\bullet\hspace{-6.5pt}$};
      \node (Hn) at (6,3) {$\bullet\hspace{-6.5pt}$};
      \node (BTn) at (6,4) {$\bullet\hspace{-6.5pt}$};
      \node (app') at (-0.4,5) {$app$};
      \node (app) at (0,5) {$\bullet\hspace{-6.5pt}$};
      \node (qapp') at (2.5,4.8) {$q\dash app$};
      \node (qapp) at (3,5) {$\bullet\hspace{-6.5pt}$};
      \node (H*') at (5.7,4.9) {$\Hst$};
      \node (H*) at (6,5) {$\bullet\hspace{-6.5pt}$};
      \node (Hyp') at (9.6,5) {$Hyp$};
      \node (Hyp) at (9,5) {$\bullet\hspace{-6.5pt}$};
      \node (app2) at (1.5,7) {$\bullet\hspace{-6.5pt}$};
      \node (appn) at (3,7) {${\color{gray}\bullet}\hspace{-6.5pt}$};
      \node (qH) at (4.5,7) {$\bullet\hspace{-6.5pt}$};
      \node (HH) at (7.4,7) {$\bullet\hspace{-6.5pt}$};
      \node (HH') at (7.85,6.5) {$\bullet\hspace{-6.5pt}$};
      \node (HH'') at (8.3,6) {$\bullet\hspace{-6.5pt}$};
      \node (top) at (4.5,10) {$\bullet\hspace{-6.5pt}$};
      \draw[-,thick] (b.east) -- (H.east);
      \draw[-,thick] (b.east) -- (bn.east);
      \draw[-,thick] (H.east) -- (BT.east);
      \draw[-,thick] (H.east) -- (Hn.east);
      \draw[-,thick] (bn.east) -- (Hyp.east);
      \draw[-,thick] (bn.east) --  (Hn.east);
      \draw[-,thick] (BT.east) --  (app.east);
      \draw[-,thick] (BT.east) --  (qapp.east);
      \draw[-,thick] (BT.east) -- (BTn.east);
      \draw[-,thick] (Hn.east) -- (BTn.east);
      \draw[-,thick] (BTn.east) -- (H*.east);
      \draw[-,dashed,gray] (BTn.east) -- (appn.east);
      \draw[-,thick] (app.east) --  (app2.east);
      \draw[-,dashed, gray] (app.east) --  (appn.east);
      \draw[-,thick] (qapp.east) --  (app2.east);
      \draw[-,thick] (top.east) --  (app2.east);
      \draw[-,dashed,gray] (top.east) --  (appn.east);
      \draw[-,thick] (qapp.east) --  (qH.east);
      \draw[-,thick] (top.east) --  (qH.east);
      \draw[-,thick] (H*.east) --  (qH.east);
      \draw[-,thick] (H*.east) --  (HH.east);
      \draw[-,thick] (top.east) --  (HH.east);
      \draw[-,thick] (HH'.east) --  (HH.east);
      \draw[-,thick] (BTn.east) --  (HH'.east);
      \draw[-,thick] (HH''.east) --  (HH'.east);
      \draw[-,thick] (Hn.east) --  (HH''.east);
      \draw[-,thick] (Hyp.east) --  (HH''.east);
    \end{tikzpicture}
  \end{center}
\end{figure}
For such readers, we present, in Figure~\ref{fig:theories}, a graphic summarizing the different properties we have seen in the article. In this figure :
\begin{itemize}
\item $\beta$ stands for being a model (the name refers to the smallest $\lambda$-theory $\beta$).
\item $\mathcal H$ stands for the sensible models, {\em i.e}, those models that equate all diverging terms:
      $$ M,N\Ua \quad \Rta \quad \llb M\rrb=\llb N\rrb. $$
\item $\beta\eta$ stands for extensional models, {\em i.e}, those models preserving $\eta$-equivalence:
      $$\llb \lambda x.x\rrb = \llb \lambda xy.x\ y\rrb.  $$
\item $\BT$ stands for models that respect B\"ohm trees:
      $$ \forall M,N,\quad \BT(M)=\BT(N)\quad \Rta\quad \llb M\rrb=\llb N\rrb. $$
\item \Hst stands for models that are fully abstract for \Hst:
      $$ \forall M,N,\quad M\equivob N \quad \Lra\quad \llb M\rrb = \llb N\rrb. $$
\item $app$ stands for models that are approximable:
      $$ \forall M,\quad \llb M\rrb = \llb \BT(M)\rrb_{ind}. $$
\item $q\dash app$ stands for models that are quasi-approximable:
      $$ \forall M,\quad \llb M\rrb = \llb \BT(M)\rrb_{qf}. $$
\item $Hyp$ stands for models that are hyperimmune.
\item The other nodes are simply defined as sups and do not have names.
\end{itemize}

This graphic is a lattice of properties that a K-model can satisfy, with binary sups corresponding to the conjunction of the properties (modulo logical equivalence).\footnote{Notice that two points in the graphic may well be logically equivalent.} 
In particular, one can see that quasi-approximation together with extensionality implies the full abstraction for $\Hst$. Moreover, for any among our four main properties ({\em i.e.}, $app$, \Hst, $q\dash app$ and Hyp), having any two non-adjacent properties ($app$/\Hst, $app$/$Hyp$ or $q\dash app$/Hyp) is sufficient to get the two others.

Notice that in the article we are claiming that $app$ and $q\dash app$ implies Hyperimmunity, but this was in presence of extensionality. One can then check that the sup of $app$, $q\dash app$ and $\beta\eta$ is indeed the top of our lattice.

Notice also that we placed hyperimmunity above extensionality. This is because we use extensionality in order to define hyperimmunity. A careful reader may probably be able to extend naturally hyperimmunity to a non-extensional setting, but several of the relations of Figure~\ref{fig:theories} may break with this generalization.

Finally, we conjecture that all these relations are strict in the fully general case (extended to models that are not K-models). This is proved for most already existing relations but not for the relations between $\BT$, $app$ and $\Hst$. 

Approximability is not {\em a propri} implied by $\BT$ or even by $\Hst$ but no counter-examples have been presented yet. This is a difficult question related to the characterization of sensibility. In fact it is actually difficult to get an idea of what non-approximable models lies above $\mathcal H$. Indeed, the most efficient methods we know for proving sensibility are realisability methods that are intrinsically linked with approximability~\cite{Bre16Approx}. Notice that the only result on this direction was from Kerth that created a continuum of sensible models of (disjoint) theories below \BT~\cite{Ker98}. In this paper we simply avoid the difficulty by only considering approximable K-models.

This was the first attempt at studying a $\lambda$-theory by characterising its fully abstracting models (among a relatively large class). This opens a lot of new research directions such as generalisations for larger classes of models, for other languages or for other $\lambda$-theories. The latter has actually been explored by the author in a collaborative work on Morris's extensional equivalence (the observational equivalence for weak reduction) \cite{BMPR16}. This work is bounded to relational models which are morally extensional extensions\footnote{by opposition to ``extensional collapses''.} of approximable K-models \cite{BaEh97}. There, we show that the full abstraction for Morris's equivalence corresponds to satisfy the $\lambda$-Konig property. The $\lambda$-Konig property is a sort of dual of hyperimmunity: rather than forbidding all infinite non-hyperimmune chains, it requires the presence of a dense set of such non-hyperimmune chains.

\section*{Acknowledgements}

I wish to thank Antonio Bucciarelli, Michele Pagani, Antonino Salibra and the anonymous reviewers for their proof-checking and helping in clarifying many issues. I must especially thank Michele for his time spent in the supervision of the redaction of this paper. Finally I acknowledge Thomas Ehrhard and Giulio Manzonetto for their advices and various discussions.

\bibliographystyle{plain}
\bibliography{article}

\end{document}